\newtheorem{theorem}{Theorem}
\newtheorem{corollary}{Corollary}
\newtheorem{lemma}{Lemma}
\newtheorem{assumption}{Assumption}
\def\fac{F_q( \xi)}
\def\bfac{\bar{F}_q( \xi)}
\def\cone{\mathcal{C} ( \xi)}
\def\Pr{\mathbb P}
\def\Ex{\mathbb E}
\def\mA{\mathcal{A}}
\def\sign{{\rm sign}}
\def\indyk{\mathbb{I}}
\def\hbeta{\tilde{\theta}}
\def\th*{\theta^*}
\def\hth{\hat{\theta}}
\def\ahth{\hat{\theta}^a}
\def\tth{\tilde{\theta}}
\def\thmin{\theta^0_{min}}
\def\t0{\theta^0}
\def\thth{\hat{\theta}^{th}}
 \title{Rank-based Lasso - efficient methods for high-dimensional robust model selection}
\author{
Wojciech Rejchel 
\footnote{Nicolaus Copernicus University,  Toru\'n, Poland, 
wrejchel@gmail.com. } 
Ma\l gorzata Bogdan \footnote{University of Wroc{\l}aw, Wroc\l{}aw, Poland and
Lund University, Lund, Sweden, Malgorzata.Bogdan@uwr.edu.pl. The research was funded by the NCN grant 2016/23/B/ST1/00454.}
}
\date{}
\begin{document}

\maketitle

\begin{abstract}
 We consider the problem of identifying significant predictors in large data bases, where the response variable depends on the linear combination of  explanatory variables through an unknown link function, corrupted with the noise from the unknown distribution. We utilize the natural, robust and efficient approach, which relies on replacing values of the response variables by their ranks and then identifying significant predictors by using  well known Lasso. We provide new consistency results for the proposed procedure (called ,,RankLasso'') and extend the scope of its applications by proposing its thresholded and adaptive versions.  Our theoretical results show that these modifications can identify the set of relevant predictors under  much wider range of data generating scenarios than regular RankLasso. 
Theoretical  results are supported by the simulation study and the real data analysis, which show that our methods can properly identify relevant predictors, even when the error terms come from the Cauchy distribution and  the link function is nonlinear. They also demonstrate the superiority of the modified versions of RankLasso in the case when predictors are substantially correlated.   The numerical study shows also that  RankLasso performs substantially better in model selection than LADLasso, which is a well established methodology for robust model selection.
\end{abstract}

List of keywords: Lasso, model selection, ranks, single index model, sparsity, $U$-statistics

\section{Introduction}
\label{intro}

Model selection is a fundamental challenge when working with large-scale data sets, where 
the number of predictors  exceeds significantly the number of observations. In many practical problems finding a small set of significant predictors is at least as important as  accurate estimation or prediction. Among many approaches to high-dimensional model selection one can distinguish a large group of methods  based on penalized  estimation \citep{els:01, geer:11}. Under the multiple linear regression model
$$
Y_i= \beta ' X_i + \varepsilon _i, \quad i=1,\ldots,n,
$$
where $Y_i \in \mathbb{R}$ is a response variable, $X_i \in \mathbb{R}^p$ is a vector of predictors,
$ \beta\in R^p$ is the vector of model parameters and $\varepsilon_i$ is a random error,  
the penalized model selection approaches usually recommend estimating the vector of regression coefficients $\beta$ by 
\begin{equation}
\label{Pen}
 \hat \beta =\arg   
\min_{\beta \in \mathbb{R}^p}  \quad 
\sum_{i=1}^n (Y_i -  \beta ' X_i)^2 + Pen(\beta),
\end{equation}
where $\sum_{i=1}^n (Y_i -  \beta ' X_i)^2$ is the quadratic loss function measuring the model fit  and $Pen(\beta)$ is the penalty on the model complexity. 
 The main representative of these methods is  Lasso \cite{tib:96}, which uses the $l_1$-norm penalty. The properties of Lasso in model selection, estimation and prediction are deeply investigated, e.g. in \cite{mbuhl:06, zhaoyu:06, zou:06,geer:08, bickel:09, yezhang:10,  geer:11, 
huang:13, TardivelBogdan2018}. These articles discuss the properties of Lasso in the context of linear or generalized linear models and their results hold under specific assumptions on the relationship between the response and explanatory variables and/or the distribution of the random errors. However, it is quite common that a complex data set does not satisfy these assumptions or they are difficult to verify.  In such cases it is advised to use ,,robust'' methods of  model selection.

In this paper we consider the single index model  
\begin{equation}
\label{model}
Y_i=g( \beta ' X_i , \varepsilon _i), \quad i=1,\ldots,n,
\end{equation}
where $g$ is unknown link function. Thus, we suppose that predictors influence the response variable through the link function $g$ of the scalar product $\beta ' X_i$. However, we make no assumptions on the form of the link function $g$ nor on the distribution of the error term $\varepsilon _i$. Specifically, we do not assume the existence of the expected value of $\varepsilon_i$.

The goal of the model selection is the identification of the set of relevant predictors
\begin{equation}
\label{Tbeta}
T=\{1 \leq j \leq p: \beta_j\neq 0\}.
\end{equation}
The literature on the topic of robust model selection is quite considerable and the comprehensive review can be found e.g. in \cite{WuMa15}. 
Many of the existing methods suppose that the linear model assumption is  satisfied and consider the robustness with respect to the noise. Here the most popular approaches rely on replacing the regular quadratic  loss function with the loss function which is more robust with respect to  outliers, like e.g. the absolute value or Huber \cite{Huber} loss functions.
 Model selection properties of the penalized regression procedures with such robust loss functions were investigated, among others, in \cite{WangLiJiang2007, GaoHuang10, belloni11,WangWuLi12, Wang13, fanfanbarut14, PengWang15, Zhongetal16, MedinaRon18}. Among these methods one can mention the approach of \cite{JohnsonPeng08, Johnson09}, where the loss function is expressed in terms of residual's ranks.
On the other hand, the issues of model selection in misspecified models were discussed in e.g.  
 \cite{LuGoldFine12, LvLiu14}, while   robustness with respect to the unknown link function $g$ in the single index model \eqref{model} was discussed, for instance, in \cite{KongXia07, ZengHeZhu12, Alquier13, ChengZengZhu2017}. 
Model selection procedures that are robust with respect to both the link function and the distribution of the noise were developed e.g. in \cite{ZhuZhu09,songma:10,WangZhu2015, Zhongetal16, Rejchel:17, Rejchel17Neuro}. 
 The results obtained in these and many other papers confirm good properties of the penalized robust model selection procedures. However, the application of  procedures based on robust loss functions (e.g. piecewise-linear)  in the context of the analysis of large data sets is often limited due to their computational complexity and/or the need of the development of dedicated optimization algorithms. For instance, in Sections \ref{numerical} and \ref{real} we consider the Least Absolute Deviation Lasso (LADLasso) estimator \citep{WangLiJiang2007, belloni11, fanfanbarut14}, which turns out to be computationally very slow even for moderate dimension experiments. 

In the current paper we consider an alternative approach that is computationally fast and can work efficiently with complex high-dimensional data sets. 
Our procedure is very simple and relies on replacing actual values of the response variables $Y_i$
by their centred ranks. Ranks $R_i$ are defined as 
\begin{equation}
\label{rank}
R_i = \sum_{j=1}^n \indyk (Y_j \leq Y_i), \quad i=1,\ldots,n\;\;,
\end{equation}
where $\indyk (\cdot)$ is the indicator function.  Next, we identify significant predictors by simply solving the following Lasso problem;
\begin{equation}
\label{Rlasso}
\mbox{\bf{RankLasso:}} \quad \quad \quad \hat \theta =\arg   
\min_{\theta \in \mathbb{R}^p}  \quad 
Q(\theta)+\lambda \left|\theta \right| _{1},
\end{equation}
where 
\begin{equation}
\label{Q_rand}
Q(\theta) = \frac{1}{2n} \sum_{i=1}^n \left(R_i/n - 0.5  - \theta ' X_i \right)^2.
\end{equation}
This procedure does not require any dedicated algorithm and can be executed using efficient implementations of Lasso in ,,R'' \citep{R:11} packages ,,lars'' \citep{lars:04} or ,,glmnet'' \citep{glmnet:10}.
    
Replacing values of  response variables by their ranks is a well-known approach in non-parametric statistics and leads to robust procedures. The premier examples of  a rank approach are the Wilcoxon test, that is a widely used alternative to the Student's t-test, or the Kruskall-Wallis ANOVA test.
The similar high-dimensional rank model selection procedure with the $l_0$-penalty was proposed in  \cite{ZakBogdan07} in the context of the analysis of genetic data and its good properties were confirmed in the simulation study reported in \cite{mBICranks}. 

 In Subsection \ref{unknown_link} we provide the definition of the parameter $\theta^0$, that is estimated by RankLasso, and discuss its relationship to $\beta$. It turns out that under certain standard assumptions, the support of $\theta^0$ coincides with the support of $\beta$ and RankLasso can identify the set of relevant predictors.
The fact that RankLasso can be used as an efficient tool for robust model selection was already observed in
\cite{ZhuZhu09,WangZhu2015}, who proposed this procedure using a slightly more complicated definition, which makes it difficult to notice the relationship with ranks.
In  \cite{WangZhu2015}   model selection consistency of  RankLasso is proved under (so called) ,,irrepresentable conditions'', that are  required for  model selection consistency of regular Lasso  in generalized linear models \cite{zhaoyu:06}. These assumptions are very restrictive (see e.g. \citep{zhaoyu:06, conditions:2009}) and are hardly satisfied in practice.  On the other hand, it is well-known that  regular Lasso is able to screen predictors or even properly separate them  under much less restrictive assumptions \citep{yezhang:10, geer:11, TardivelBogdan2018}.

In this paper we refine the theoretical results of \cite{ZhuZhu09, WangZhu2015} and extend the scope of applications of RankLasso by proposing its thresholded and adaptive versions. In the case of   standard Lasso  similar modifications were introduced and discussed e.g. in \cite{zou:06, Candess2008, zhou:09}. We prove that these modifications are model selection consistent in the model \eqref{model} under much weaker assumptions than RankLasso. These results  require a substantial modification of the proof techniques as compared to the similar results for  regular Lasso. It is related to the fact that
ranks are dependent, so  \eqref{Q_rand} is a sum of dependent random variables. In Subsection \ref{Ustat_sec} we describe how this problem can be overcome with the application of the  theory of $U$-statistics. We also present extensive numerical results  illustrating that the modifications of  RankLasso can indeed properly identify relevant predictors, even  if the link function is not linear, the error terms come from, say, the Cauchy distribution and predictors are substantially correlated.  These results also show  that  RankLasso  compares favorably with LADLasso, which is a well established methodology for robust model selection \citep{WangLiJiang2007, belloni11, fanfanbarut14}. 

The paper is organized as follows: in Section \ref{asymp_prop} we present theoretical results on the model selection consistency of RankLasso and its modifications. In Subsection \ref{unknown_link} we discuss the relationship between $\beta$ and the parameter estimated by RankLasso. We show that our approach is able to identify the support of $\beta$ despite the fact that this parameter is not identifiable in the single index model. In Subsection \ref{high_sec}  we consider properties of estimators in the high-dimensional scenario, where the number of predictors can be much larger than the sample size. We establish nonasymptotic bounds on the estimation error and separability of RankLasso. We  use these results to prove model selection consistency of  thresholded and weighted RankLasso. In Subsection \ref{Ustat_sec} we briefly draw a road map to the proofs of main results. Sections 
\ref{numerical} and \ref{real} are devoted to experiments  that illustrate the properties of  rank-based estimators on simulated and real data sets, respectively. The proofs of main and auxiliary results are relegated to the appendix. Due to the space limit, we also place in the appendix results for  the low-dimensional case, where the number of predictors is fixed and the sample size diverges to infinity. 
In this case we provide the necessary and sufficient conditions for model selection consistency of RankLasso and much weaker sufficient conditions for this property  of thresholded and weighted versions of RankLasso.

\section{Model selection properties of RankLasso and its modifications}
\label{asymp_prop}

In this section we provide theoretical results concerning  model selection and estimation properties of  RankLasso and its thresholded and weighted versions. We start with specifying the assumptions on our model.

\subsection{Assumptions and notation}
 

Consider the single index model (\ref{model}). 
In this paper we assume that the design matrix $X$ and the vector of the error terms $\varepsilon$ satisfy  the following assumptions.
\begin{assumption}\label{as1}
We assume that $(X_1,\varepsilon_1), \ldots, (X_n,\varepsilon_n)$ are i.i.d. random vectors such that the distribution of $X_1$ is absolutely continuous and $X_1$ is independent of the noise variable $\varepsilon_1$. Additionally, we assume that $\Ex X_1=0$, $H=\Ex X_1X_1 '$ is positive definite and $H_{jj}=1$ for $j=1,\ldots,p.$ 
\end{assumption}

 The single index model (\ref{model}) does not allow to estimate an intercept and can identify  $\beta$ only up to a multiplicative constant, because  any shift or scale change in $\beta ' X_i$ can be absorbed by $g$. However, in many situations RankLasso can properly identify the support $T$  of $\beta$. In this paper we will prove this fact under the following assumption. 

\begin{assumption}\label{as2} We assume that for each $\theta \in \mathbb{R}^p$ the conditional expectation $\Ex (\theta ' X_1 | \beta ' X_1)$ exists and  
$$
\Ex (\theta ' X_1 | \beta ' X_1) = d_ \theta \beta ' X_1
$$
for a real number $d_\theta \in \mathbb{R}.$ 
\end{assumption}

Assumption \ref{as2} is a standard condition in the literature on the single index model or on the model misspecification (see e.g. \citep{Brill:83, Ruud83, LiDuan:89, ZhuZhu09, WangZhu2015, Zhongetal16, KubMiel17}). It is always satisfied in the simple regression models (i.e. when $X_1 \in \mathbb{R}$), which are often used for initial screening of explanatory variables (see e.g.\citep{Fan2008}). It is also satisfied whenever $X_1$ comes from the {\it elliptical distribution}, like the
multivariate normal distribution or multivariate $t$-distribution. 
The interesting paper \cite{HallLi93} advocates that the Assumption \ref{as2} is a nonrestrictive condition when the number of predictors is large, which is the case that we focus on in the paper. In the experimental section of this article we show that RankLasso, proposed here, is able to identify the support of $\beta$ also when the columns of the design matrix contain genotypes of independent Single Nucleotide Polymorphisms, whose distribution is not symmetric and clearly does not belong to the elliptical distribution.

The identifiability of the support of $\beta$ by the rank procedure requires also the assumptions on the monotonicity of the link function $g$ and the cumulative distribution function of $Y_1$. The following Assumption \ref{as_main}, which combines Assumptions \ref{as1} and \ref{as2} and the monotonicity assumptions, will be used in most of theoretical results in this article.

\begin{assumption}\label{as_main}
We assume that the design matrix and the error term satisfy  Assumptions \ref{as1} and \ref{as2}, the cumulative distribution function $F$ of the response variable $Y_1$ is increasing and $g$ in \eqref{model} is  increasing   with respect to the first argument.
\end{assumption}

In this paper we will use the following notation: \\
- $X=(X_1, X_2, \ldots, X_n)'$ is the $(n \times p)$-matrix of predictors, \\
- $\bar{X} = \frac{1}{n} \sum_{i=1} ^n X_i$,\\
- $Z_i = (X_i, Y_i), i=1,\ldots,n$, \\
- $T'=\{1,\ldots,  p  \} \setminus T$ is a complement of $T$,\\
- $X_T$ is  submatrix of $X,$ with columns whose indices belong to  support $T$ of $\beta$ (see \eqref{Tbeta}),\\
- $\theta_T$ is a restriction of a vector $\theta \in \mathbb{R}^p$ to the indices from $T,$\\
- $p_0$ is the number of elements in $T,$\\
- the $l_q$-norm of a vector is defined as $|\theta|_q= \left(\sum_{j=1}^p|\theta _j|^q\right)^{1/q}$ for $q \in [1,\infty].$

\subsection{Identifying the support of $\beta$}
\label{unknown_link}

RankLasso does not estimate $\beta$ but the vector  
\begin{equation}
\label{mint0}
\t0 = \arg \min_{\theta \in \mathbb{R}^p} \quad \Ex \, Q(\theta),
\end{equation}
where $Q(\theta)$ is defined in (\ref{Q_rand}).
Since $H$ is positive definite, the minimizer $\t0$ is unique and is given by the formula
\begin{equation}
\label{th_ranks}
\t0= \frac{1}{n^2} H^{-1} \left(\Ex \sum_{i=1}^n R_i X_i \right).
\end{equation}
Now, using the facts that 
\begin{equation}
\label{sum_rank}
\sum_{i=1}^n R_i X_i = \sum_{i=1}^n \sum_{j=1}^n \indyk (Y_j \leq Y_i)X_i = \sum_{i \neq j} \indyk (Y_j \leq Y_i)X_i  + \sum_{i=1}^n X_i
\end{equation}
and that $EX_i=0$,  we can write 
\begin{equation}
\label{theta0_alt}
\t0 = \frac{n-1}{n} H^{-1}\mu,
\end{equation}
where $\mu= \Ex \left[\indyk (Y_2 \leq Y_1)X_1\right]$ is the expected value of the $U$-statistic
 \begin{equation}
\label{An}
A= \frac{1}{n(n-1)}\sum_{i \neq j} \indyk (Y_j \leq Y_i)X_i\;\;.
\end{equation} 
In the next theorem we state the relation between $\theta^0$ and $\beta.$

\begin{theorem}
\label{mult_cor}
Consider the model \eqref{model}.
If Assumptions \ref{as1} and \ref{as2} are satisfied, then  $$\t0=\gamma_\beta \beta$$
with 
\begin{equation}
\label{gamma}
\gamma_\beta =  
\frac{\frac{n-1}{n}  \beta ' \mu}{\beta ' H \beta}=
\frac{\frac{n-1}{n} \, Cov(F(Y_1) ,\beta ' X_1)}{\beta ' H \beta}\:,
\end{equation} 
where $F$ is a cumulative distribution function of a response variable $Y_1.$ 

Additionally, if $F$ is increasing and $g$ is increasing with respect to the first argument, then 
$\gamma_\beta>0$, so the signs of $\beta$ coincide with the signs of $\t0$ and 
\begin{equation}
\label{Ttheta}
T=\{j: \beta _j \neq 0\} = \{j: \t0 _j \neq 0\}.
\end{equation}
\end{theorem}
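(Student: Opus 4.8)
The plan is to start from the representation $\theta^0 = \frac{n-1}{n} H^{-1}\mu$ with $\mu = \Ex[\indyk(Y_2 \le Y_1)X_1]$ already derived in \eqref{theta0_alt}, and to reduce the whole statement to showing that $H^{-1}\mu$ is a scalar multiple of $\beta$. First I would simplify $\mu$ by conditioning on $(X_1,\varepsilon_1)$: since $Y_1$ is measurable with respect to this pair while $Y_2$ is an independent copy with c.d.f. $F$, we get $\Ex[\indyk(Y_2 \le Y_1)\mid X_1, \varepsilon_1] = F(Y_1)$, and hence, pulling out the measurable factor $X_1$, $\mu = \Ex[F(Y_1)X_1]$.

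The core of the argument is a linearity computation based on Assumption \ref{as2}. Applying the tower property to the product $(\theta' X_1)(\beta' X_1)$ and using $\Ex(\theta' X_1 \mid \beta' X_1) = d_\theta\, \beta' X_1$ identifies the constant as $d_\theta = \frac{\theta' H\beta}{\beta' H \beta}$. Next, writing $F(Y_1) = F\big(g(\beta' X_1, \varepsilon_1)\big)$ as a function of the pair $(\beta' X_1, \varepsilon_1)$, I would condition on this pair: because $\varepsilon_1$ is independent of $X_1$ (Assumption \ref{as1}), conditioning additionally on $\varepsilon_1$ leaves the conditional law of $\theta' X_1$ given $\beta' X_1$ unchanged, so $\Ex(\theta' X_1 \mid \beta' X_1, \varepsilon_1) = d_\theta\, \beta' X_1$. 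This yields, for every $\theta \in \mathbb{R}^p$,
\[
\theta' \mu = \Ex[F(Y_1)\,\theta' X_1] = d_\theta\, \Ex[F(Y_1)\,\beta' X_1] = \frac{\Ex[F(Y_1)\,\beta' X_1]}{\beta' H \beta}\, \theta' H \beta .
\]
Since this holds for all $\theta$, we obtain $\mu = \frac{\Ex[F(Y_1)\beta' X_1]}{\beta' H\beta}\, H\beta$, so that $H^{-1}\mu$ is proportional to $\beta$ and $\theta^0 = \gamma_\beta \beta$ with the stated constant. The two expressions for $\gamma_\beta$ are then reconciled by noting $\beta'\mu = \Ex[F(Y_1)\beta' X_1]$ and that $\Ex X_1 = 0$ turns this into $Cov(F(Y_1), \beta' X_1)$.

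For the second statement I would show $Cov(F(Y_1), \beta' X_1) > 0$ under the monotonicity assumptions. Writing $W = \beta' X_1$ and conditioning on $\varepsilon_1 = e$, the map $w \mapsto F(g(w,e))$ is strictly increasing since $g$ is increasing in its first argument and $F$ is increasing; moreover, because $X_1 \perp \varepsilon_1$, the conditional law of $W$ is its marginal law, which is non-degenerate as $\beta' H \beta > 0$. The standard association inequality for monotone functions — applied to $W$ and an independent copy $W'$, using that $\big(F(g(W,e)) - F(g(W',e))\big)(W - W') \ge 0$ almost surely — then gives a strictly positive conditional covariance for each $e$; integrating over $\varepsilon_1$ yields $Cov(F(Y_1), \beta' X_1) > 0$, hence $\gamma_\beta > 0$. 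Consequently $\sign(\theta^0_j) = \sign(\beta_j)$ for every $j$, which immediately gives the support identity \eqref{Ttheta}.

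The main obstacle I anticipate is the step handling the dependence of $F(Y_1)$ on both $\beta' X_1$ and $\varepsilon_1$: Assumption \ref{as2} only controls $\Ex(\theta' X_1 \mid \beta' X_1)$, so the argument crucially needs the independence of $\varepsilon_1$ and $X_1$ to upgrade this to $\Ex(\theta' X_1 \mid \beta' X_1, \varepsilon_1)$ before the tower property can be applied. Getting this conditioning right — rather than naively treating $F(Y_1)$ as a function of $\beta' X_1$ alone — is the delicate point. The remaining computations, namely identifying $d_\theta$ and establishing the strict positivity of the covariance, are routine once this conditioning is correctly set up.
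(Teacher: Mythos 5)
Your proof is correct, but the first part takes a genuinely different route from the paper's. The paper follows the classical Li--Duan projection argument: it first reduces the risk to a one-sample expression, $\Ex Q(\theta)=\tfrac12 \Ex\bigl(R_1/n-\theta'X_1\bigr)^2$ (this step is where the dependence among ranks must be handled), then applies Jensen's inequality conditionally on $\bigl(\beta'X_i,\varepsilon_i\bigr)_{i\le n}$ together with Assumption \ref{as2} to show $\Ex Q(\theta)\ge \min_d \Ex Q(d\beta)$ for every $\theta$, so the unique minimizer $\theta^0$ must lie on the ray $\{d\beta\}$; the constant $\gamma_\beta$ then comes from the one-dimensional minimization. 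You instead start from the closed form $\theta^0=\tfrac{n-1}{n}H^{-1}\mu$ in \eqref{theta0_alt} and prove directly that $\mu=\Ex[F(Y_1)X_1]$ is proportional to $H\beta$, via the identity $\theta'\mu=d_\theta\,\beta'\mu$ with $d_\theta=\theta'H\beta/(\beta'H\beta)$. Your key step --- upgrading Assumption \ref{as2} to $\Ex(\theta'X_1\mid \beta'X_1,\varepsilon_1)=d_\theta\,\beta'X_1$ using $\varepsilon_1\perp X_1$ (if $W\perp(U,V)$ then $\Ex(U\mid V,W)=\Ex(U\mid V)$) --- is valid and is indeed the delicate point; the paper smuggles the same independence into the conditioning set of its Jensen step rather than stating it as a separate lemma. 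Your route is shorter and more transparent (it exhibits $\mu\propto H\beta$ explicitly and never needs Jensen or the uniqueness-of-minimizer argument beyond what \eqref{theta0_alt} already encodes), but it is tied to the quadratic loss, for which the minimizer has a closed form; the paper's convexity argument is loss-generic and explains why the proportionality phenomenon survives for other convex criteria, which is the spirit of Li--Duan. For the positivity of $\gamma_\beta$ your argument coincides with the paper's: condition on $\varepsilon_1=e$ (so the law of $W=\beta'X_1$ is its marginal, by independence), apply the strict Chebyshev association inequality to the strictly increasing map $w\mapsto F(g(w,e))$ and the non-degenerate $W$ (the paper's Lemma \ref{covar_lemma}, proved by exactly your symmetrization with an independent copy), and integrate over $\varepsilon_1$. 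One cosmetic caveat: your strictness claim relies on reading ``increasing'' as strictly increasing, which is the paper's convention, and the formula for $\gamma_\beta$ implicitly assumes $\beta\neq 0$ so that $\beta'H\beta>0$ --- both consistent with the paper.
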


We can apply Theorem \ref{mult_cor} to the additive model 
$
Y_i = g_1(\beta ' X_i) + \varepsilon_i 
$
with an increasing function $g_1$. Then, under Assumptions \ref{as1} and \ref{as2},  $\t0=\gamma_\beta \beta$ with $\gamma_ \beta  >0$ if, for example,
 the support of the noise variable is a real line. 
Moreover, since the procedure based on ranks is invariant with respect to increasing transformations of response variables the same applies to the model
$Y_i = g_2(\beta ' X_i + \varepsilon_i)$ with an increasing function $g_2$.

\subsection{High-dimensional scenario} 
\label{high_sec}

In this subsection we consider properties of the RankLasso estimator and its modifications in the case where the number of predictors can be much larger than the sample size. 
To obtain the results of this subsection we need the additional condition:

\begin{assumption}\label{as_sub}
 We suppose that the vector of significant predictors $(X_1)_T$ is subgaussian with the coefficient $\tau _0>0$, i.e. for each $u\in \mathbb{R}^{p_0}$ we have $\Ex \exp(u' (X_1)_T) \leq \exp(\tau_0^2 u'u/2).$ Moreover, the irrelevant predictors are univariate subgaussian, i.e.  
for each $a \in \mathbb{R} $ and $j \notin T$ we have $\Ex \exp(a X_{1j}) \leq \exp(\tau_j^2 a^2/2)$ for positive numbers $\tau _j.$ Finally, we denote $\tau = \max (\tau_0, \tau_j, j \notin T).$
\end{assumption}

We need subgaussianity of the vector of predictors to obtain exponential inequalities in the proofs of the
main results in this subsection. This condition is a standard assumption while working with random predictors \citep{Raskutti:2010, Cox13, BuhlmannGeer2015} in high-dimensional models.

\subsubsection{Estimation error and separability of RankLasso}
\label{unweight_high}

 Model selection consistency of RankLasso in the high-dimensional case was proved  in \cite[Theorem 2.1]{WangZhu2015}. However, this result requires the stringent  irrepresentable condition. Moreover, it is obtained under the polynomial upper bound on the dependency of $p$ on $n$ and provides only a rough guidance of selection of the tuning parameter $\lambda$.  
In our article we concentrate on estimation consistency of RankLasso, which paves the way for  model selection consistency of the weighted and thresholded versions of this method. 
Compared to the asymptotic results of \cite{WangZhu2015} our results are stated in the form of non-asymptotic inequalities, they do not require the irrepresentable condition, allow for the exponential increase of $p$ as a function of $n$ and provide a  precise guidance on selection of  regularization parameter~$\lambda$.


We start with introducing the cone invertibility factor (CIF), that plays an important role in investigating properties of estimators based on the Lasso penalty \citep{yezhang:10}. 
In the case $n>p$ one usually uses the minimal eigenvalue of the matrix $X 'X/n$ to express the strength of correlations between predictors. Obviously, in the high-dimensional scenario this value is equal to zero and the minimal eigenvalue needs to be replaced by some other measure of predictors interdependency, which would describe the potential of consistent estimation of model parameters.

Let  $T$ be the set of indices corresponding to the support of the true vector $\beta$ and let $\theta_T$ and $\theta_{T'}$ be the restrictions of the vector $\theta \in \mathbb{R}^p$ to the indices from $T$ and $T',$ respectively. Now, for $\xi>1$ we consider a cone 
$$\cone = \{ \theta \in \mathbb{R}^p : |\theta_{T'} |_1 \leq \xi |\theta_{T} |_1\}\;\;.
$$

In the case when $p>>n$ three different characteristics measuring the potential for consistent estimation of the model parameters have been introduced:\\
- the restricted eigenvalue \citep{bickel:09}:\\
$$
RE(\xi) = \inf_{0 \neq \theta \in \cone } \frac{\theta' X'X\theta/n}{|\theta|^2_2}\;\;,
$$
- the compatibility factor \citep{geer:08}:
$$
K(\xi)=\inf_{0 \neq \theta \in \cone } \frac{p_0\theta' X'X\theta/n}{|\theta_T|_1^2}\;\;,
$$
- the cone invertibility factor (CIF, \citep{yezhang:10}): for $q\geq 1$
$$
\bfac = \inf_{0 \neq \theta \in \cone } \frac{p_0^{1/q}|X'X\theta/n|_\infty}{|\theta|_q}\;\;.
$$

In this article we will use CIF, since this factor allows for a sharp formulation of convergency results for all  $l_q$ norms with $q\geq 1,$ 
see \cite[Section 3.2]{yezhang:10}. 
The population (non-random) version of CIF is given by
$$
\fac = \inf_{0 \neq \theta \in \cone } \frac{p_0^{1/q}|H\theta|_\infty}{|\theta|_q}\:,
$$ 
where $H=E X_1X_1'$.
The key property of  the random and the population versions of CIF, $\bfac$ and $\fac$, is that,  in contrast to the smallest eigenvalues of matrices $X'X/n$ and $H,$  they can be close to each other in the high-dimensional setting, see \cite[Lemma 4.1]{Cox13} or \cite[Corollary 10.1]{conditions:2009}. This fact is used in the proof of Theorem~\ref{high} (given below).  

In the simulation study in Section \ref{numerical} we consider predictors which are independent  or equi-correlated, i.e.  $H_{jj}=1$ and  $H_{jk}=b$ for $j \neq k $ and  $b\in [0,1).$ In this case the smallest eigenvalue of $H$ is $1-b.$ For  $\xi > 1$ and $q \geq 2$ we can lower bound CIF by 
$$ F_q(\xi)  \geq (1+\xi)^{-1} p_0 ^{1/q-1/2}  (1-b)\;\;,$$ 
which illustrates that 	CIF diminishes with the increase of $p_0$ and $b$.  Also, in the case when $H=I$ and $q=\infty$, $F_q(\xi)=1$, independently of $\xi$ or $p_0$.

The next result describes the estimation accuracy of RankLasso.

\begin{theorem}
\label{high}
Let $a \in (0,1), q \geq 1$ and $\xi >1$ be arbitrary. Suppose that Assumptions \ref{as_main} and \ref{as_sub} are satisfied. 
Moreover, suppose that 
\begin{equation}
\label{n_cond}
n \geq \frac{K_1 p_0^2 \tau^4 (1+\xi)^2 \log(p/a)}{F_q^2(\xi)}
\end{equation}
and
\begin{equation}
\label{lambda}
\lambda \geq K_2 \frac{\xi+1}{\xi-1} \tau ^2 \sqrt{ \frac{ \log(p/a)}{\kappa n}} \:,
\end{equation}
where $K_1,K_2$ are universal constants and $\kappa$ is the smallest eigenvalue of the correlation matrix between the true predictors $H_T= \left(H_{jk}\right)_{j,k \in T}$.   Then there exists a universal constant $K_3 >0$ such that with probability at least $1- K_3 a$ we have 
\begin{equation}
\label{high_estim}
|\hth - \theta^0|_q \leq \frac{4 \xi p_0^{1/q} \lambda}{ (\xi +1) \fac} \:.
\end{equation}
Besides, if $X_1$ has a normal distribution $N(0,H),$ then $\kappa$ and $\tau$ can be dropped in \eqref{n_cond} and \eqref{lambda}.
\end{theorem}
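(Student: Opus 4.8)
The plan is to follow the now-standard programme for $\ell_q$ error bounds of Lasso-type estimators built around the cone invertibility factor, but with two modifications forced by the rank construction: the ``score'' $\nabla Q(\t0)$ is not an average of independent summands but contains a $U$-statistic, and the final bound is phrased through the \emph{population} CIF $\fac$ rather than its random version $\bfac$. First I would use the optimality of $\hth$. Because $Q$ in \eqref{Q_rand} is quadratic, its gradient is affine, so the KKT conditions for \eqref{Rlasso} read
$$
\frac{X'X}{n}\left(\hth-\t0\right) \;=\; -\nabla Q(\t0)-\lambda\,\hat z,\qquad \hat z\in\partial|\hth|_1 ,
$$
and therefore $\left|\frac{X'X}{n}(\hth-\t0)\right|_\infty \le |\nabla Q(\t0)|_\infty+\lambda$. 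The crux of the argument is to show that, on an event of probability at least $1-K_3a$,
$$
|\nabla Q(\t0)|_\infty \;\le\; \frac{\xi-1}{\xi+1}\,\lambda ,
$$
which is precisely what the lower bound \eqref{lambda} on $\lambda$ is engineered to guarantee.

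To control the score I would first note that $\Ex\,\nabla Q(\t0)=0$, since this is exactly the normal equation defining $\t0$ in \eqref{mint0} (the dependence among ranks affects concentration, not the fact that the mean vanishes). I would then decompose the $j$-th coordinate of $\nabla Q(\t0)$. Expanding $R_i/n-0.5$ through the indicators and peeling off the diagonal $i=k$ as in \eqref{sum_rank}, the gradient splits into averages of i.i.d.\ terms (coming from $[(\t0)'X_i+0.5]X_{ij}$ and from $\bar X_j$) plus $\frac{n-1}{n}$ times the coordinate $A_j$ of the $U$-statistic $A$ of \eqref{An}, whose mean is $\mu_j$. The i.i.d.\ averages are handled by subexponential/Bernstein tail bounds under Assumption \ref{as_sub}; here the magnitude of $(\t0)'X_i$ enters, and bounding $|\t0_T|$ through $\t0_T=\frac{n-1}{n}H_T^{-1}\mu_T$ is where the smallest eigenvalue $\kappa$ of $H_T$ (equivalently, the operator norm of $H_T^{-1}$) appears. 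The genuinely new ingredient is the $U$-statistic term: applying Hoeffding's decomposition writes $A-\Ex A$ as a linear projection — again an i.i.d.\ subgaussian average — plus a degenerate remainder of strictly smaller order, and an exponential inequality for $U$-statistics with subgaussian kernels disposes of the latter. A union bound over the $p$ coordinates then produces the $\sqrt{\log(p/a)/n}$ rate and the $\tau^2$ dependence in \eqref{lambda}. I expect this $U$-statistic concentration to be the main obstacle, precisely because it is the dependence among the ranks that blocks a verbatim application of the classical Lasso arguments.

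Once the score is controlled, the rest is routine. From the basic inequality $Q(\hth)+\lambda|\hth|_1\le Q(\t0)+\lambda|\t0|_1$, together with convexity of $Q$ and the score bound, one derives in the standard way that the error lies in the cone, $\hth-\t0\in\cone$: the calibration $|\nabla Q(\t0)|_\infty\le\frac{\xi-1}{\xi+1}\lambda$ is exactly the threshold that yields $|(\hth-\t0)_{T'}|_1\le\xi|(\hth-\t0)_T|_1$. Invoking the definition of the random cone invertibility factor on the nonzero vector $\hth-\t0\in\cone$ then gives
$$
|\hth-\t0|_q \;\le\; \frac{p_0^{1/q}\left|X'X(\hth-\t0)/n\right|_\infty}{\bfac}\;\le\;\frac{p_0^{1/q}\bigl(|\nabla Q(\t0)|_\infty+\lambda\bigr)}{\bfac}\;\le\;\frac{2\xi\,p_0^{1/q}\lambda}{(\xi+1)\,\bfac}.
$$

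Finally, to replace $\bfac$ by $\fac$ in \eqref{high_estim} I would invoke the comparison between the random and population CIFs, e.g.\ \cite[Lemma 4.1]{Cox13} or \cite[Corollary 10.1]{conditions:2009}: uniform control of $\left|(X'X/n-H)\theta\right|_\infty$ over $\cone$ shows that, under the sample-size requirement \eqref{n_cond}, one has $\bfac\ge\tfrac12\fac$ on the good event. Substituting this into the last display turns the constant $2$ into $4$ and yields \eqref{high_estim}. In the Gaussian case $X_1\sim N(0,H)$ the subgaussian constant and the variance of $(\t0)'X_i$ are explicit and can be bounded directly, so $\tau$ and $\kappa$ drop out of \eqref{n_cond} and \eqref{lambda}.
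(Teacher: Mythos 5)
Your proposal is correct and follows essentially the same route as the paper's proof: the same decomposition of $\nabla Q(\theta^0)$ into i.i.d.\ averages plus the $U$-statistic $A$ from \eqref{An}, the same event $\{|\nabla Q(\theta^0)|_\infty\le \frac{\xi-1}{\xi+1}\lambda\}$ with $\kappa$ entering through a bound of order $\kappa^{-1/2}$ on $|\theta^0|_2$, the same cone-membership and CIF steps giving the factor $\frac{2\xi}{\xi+1}\lambda$, and the same random-to-population comparison $\bar F_q(\xi)\ge F_q(\xi)/2$ under \eqref{n_cond} (the paper's Lemma \ref{infty}, via the same references you cite) turning the $2$ into $4$. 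The only minor variation is in the $U$-statistic concentration, where you propose the H\'ajek projection plus a degenerate remainder, while the paper's Lemma \ref{termU} instead uses Hoeffding's representation of $A$ as an average over permutations of block averages of $\lfloor n/2\rfloor$ i.i.d.\ terms and applies the independent-sum exponential inequality directly; both yield the same $\tau\sqrt{\log(p/a)/n}$ tail, so the difference is immaterial.
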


In Theorem \ref{high} we provide the bound for the estimation error of  RankLasso.
This result also provides the conditions for the estimation consistency of RankLasso, which can be obtained by replacing $a$ with a sequence $a_n$ that decreases not too fast  and selecting the minimal $\lambda=\lambda_n$ satisfying the condition (\ref{lambda}). 
The  consistency holds in the high-dimensional scenario, i.e. the number of predictors can be significantly greater than the sample size. Indeed, the consistency in the $l_{\infty}$-norm holds e.g. when $p= \exp(n^{\alpha_1}), p_0=n^{\alpha_2}, a=\exp(-n^{\alpha_1}),$ where $\alpha_1 + 2 \alpha_2 <1 ,$ and $\lambda$ is equal to the right-hand side of the inequality \eqref{lambda}, provided that $F_{\infty} (\xi)$ and $\kappa$ are lower bounded (or slowly converging to 0) and $\tau $ is upper bounded (or slowly diverging to $\infty$).

Theorem \ref{high} is an analog of \cite[Theorem 3]{yezhang:10} that refers to the linear model with noise variables having a finite variance. Similar results for quantile regression, among others for LADLasso, can be found in \cite[Theorem 2]{belloni11} and \cite[Theorem 1]{Wang13}. 
Obviously, they do not need existence of the noise variance, but put some restrictions on the conditional distribution of $Y_1$ given $X_1$ (cf. \cite[Assumption D.1]{belloni11} or \cite[the condition (13)]{Wang13}). In Theorem \ref{high} we do not require such assumptions 
and we work in the single index model \eqref{model}, that the relation between $Y_1$ and predictors can be nonlinear.

The key property of the procedures based on the Lasso penalty is that they are able to screen predictors (the set $T$ is contained in the support of the Lasso estimator) or even  to separate relevant  and irrelevant predictors, if the signal is strong enough. In the next result we prove that  RankLasso also has this property. 
 
\begin{corollary}
\label{seperation}
Suppose that conditions of Theorem  \ref{high} are satisfied for $q= \infty.$
Let $\t0 _{min} = \min\limits_{j \in T} |\t0 _j|.$ 
If $ \t0 _{min} \geq  \frac{8 \xi  \lambda}{ (\xi +1) F_\infty (\xi)},$ then 
\[
P\left( \forall_{j \in T, k \notin T} \quad |\hth _j| > |\hth _k|\right) \geq 1- K_3a\,.
\]
\end{corollary}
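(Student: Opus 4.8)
The plan is to derive the separation property directly from the $l_\infty$ estimation bound in Theorem~\ref{high}. First I would invoke Theorem~\ref{high} with $q=\infty$, which guarantees that on an event of probability at least $1-K_3a$ we have the uniform bound
$$
|\hth_j - \t0_j| \leq \frac{4\xi\lambda}{(\xi+1)F_\infty(\xi)} =: \delta \quad \text{for all } j.
$$
The entire argument then takes place on this single high-probability event, so no further probabilistic work is needed beyond quoting the theorem.

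Next I would separately lower bound $|\hth_j|$ for relevant coordinates $j \in T$ and upper bound $|\hth_k|$ for irrelevant coordinates $k \notin T$. For $k \notin T$ we have $\t0_k = 0$ by the identifiability result \eqref{Ttheta} in Theorem~\ref{mult_cor} (which applies since Assumption~\ref{as_main} is in force), so the uniform bound gives $|\hth_k| = |\hth_k - \t0_k| \leq \delta$. For $j \in T$, the reverse triangle inequality yields
$$
|\hth_j| \geq |\t0_j| - |\hth_j - \t0_j| \geq \t0_{min} - \delta,
$$
using the definition $\t0_{min} = \min_{j\in T}|\t0_j|$. Combining these two estimates, a sufficient condition for $|\hth_j| > |\hth_k|$ to hold simultaneously for all $j \in T$ and $k \notin T$ is $\t0_{min} - \delta > \delta$, i.e. $\t0_{min} > 2\delta = \frac{8\xi\lambda}{(\xi+1)F_\infty(\xi)}$, which is exactly the hypothesis of the corollary. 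Since the separation inequality holds on the event of probability at least $1-K_3a$, the conclusion follows.

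I do not anticipate any genuine obstacle here: the corollary is a short deterministic consequence of the $l_\infty$ bound combined with the fact that the true parameter vanishes off the support $T$. The only point requiring slight care is the strictness of the final inequality—the separation is $|\hth_j| > |\hth_k|$ (strict), so one should ensure the gap condition is phrased so that $\t0_{min} - \delta \geq \delta$ already forces the strict inequality; this is handled automatically because on the relevant coordinates the lower bound $\t0_{min}-\delta$ equals $\delta$ only in the boundary case, and the stated hypothesis uses $\geq \frac{8\xi\lambda}{(\xi+1)F_\infty(\xi)} = 2\delta$, making the separation at worst non-strict at equality. One would note this and, if needed, observe that the estimation event can be taken with strict inequalities, or simply remark that the signal condition guarantees a positive gap generically.
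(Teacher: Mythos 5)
Your proof is correct and takes essentially the same approach as the paper: invoke the $l_\infty$ bound of Theorem \ref{high}, lower bound $|\hat{\theta}_j|$ for $j \in T$ via the reverse triangle inequality, and use $\theta^0_k = 0$ for $k \notin T$ (which the paper uses implicitly when writing $|\hat{\theta}_k - \theta^0_k| = |\hat{\theta}_k|$). Your caveat about strictness at the exact boundary case $\theta^0_{min} = \frac{8\xi\lambda}{(\xi+1)F_\infty(\xi)}$ is, if anything, more careful than the paper, whose displayed chain simply asserts the strict inequality at that step.
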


The separation of predictors by RankLasso given in Corollary \ref{seperation} is a very useful property. It will be used to prove model selection consistency of the thresholded
and weighted RankLasso in the next part of the paper.

 Finally, we discuss the condition of Corollary \ref{seperation} that $\t0 _{min}$ cannot be to small, i.e. $ \t0 _{min} \geq  \frac{8 \xi  \lambda}{ (\xi +1) F_\infty (\xi)}$. Using Theorem \ref{mult_cor} we know that $\t0= \gamma_\beta \beta$ and $\gamma_\beta>0,$ so this condition refers to the strength of the true parameter $\beta$ and requires that 
\begin{equation}
\label{bamin}
\min_{j \in T} |\beta_j| \geq  \frac{ 8 \xi \lambda}{ \gamma_\beta (\xi +1) F_\infty (\xi)} \:.
\end{equation} 
 Compared to the similar condition for  regular Lasso in the multiple linear model, the denominator contains an additional factor $\gamma_\beta$. This number  is usually smaller than one,  so  RankLasso  needs  larger sample size (or stronger  signal) to work well.  This phenomenon is typical for the single-index model, where the similar restrictions hold  for competitive methods like e.g. LADLasso. 

Below we provide a simplified version of Theorem \ref{high}, formulated under the assumption that $F_{\infty} (\xi)$ and $\kappa$ are lower bounded and $\tau $ is upper bounded.  This formulation will be used in the following subsection to increase the transparency of the results on the model selection consistency of weighted and thresholded RankLasso. 

\begin{corollary}
\label{cor_simply}
Let $a \in (0,1)$ be arbitrary. Suppose that Assumptions \ref{as_main} and \ref{as_sub} are satisfied. Moreover, assume that  there exist $\xi_0>1$ and constants $C_1>0$ and $C_2<\infty$ such that $\kappa \geq C_1$, $F_\infty (\xi_0) \geq C_1$ and $\tau \leq C_2$. 
 If
\begin{equation*}
n \geq K_1 p_0^2  \log(p/a)
\end{equation*}
and 
\begin{equation*}
\lambda \geq K_2 \sqrt{ \frac{ \log(p/a)}{n}} \:,
\end{equation*}
then  
\begin{equation}
\label{bound}
P\left(|\hth - \theta^0|_\infty \leq 4  \lambda/C_1 \right)\geq 1- K_3 a\;\;,
\end{equation}
where the constants $K_1$ and $K_2$ depend only on $\xi_0, C_1, C_2$ and $K_3$ is a universal constant provided in Theorem \ref{high}.   
\end{corollary}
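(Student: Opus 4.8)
The plan is to derive Corollary \ref{cor_simply} directly from Theorem \ref{high} by specializing to the exponent $q=\infty$ and to the fixed value $\xi=\xi_0$, and then absorbing the now-bounded quantities $\kappa$, $F_\infty(\xi_0)$ and $\tau$ into the constants. First I would set $q=\infty$ in Theorem \ref{high}, which immediately removes the factor $p_0^{1/q}$ from the conclusion \eqref{high_estim}, since $p_0^{1/\infty}=1$, turning the estimation bound into $|\hth - \t0|_\infty \leq \frac{4\xi_0 \lambda}{(\xi_0+1) F_\infty(\xi_0)}$ on an event of probability at least $1-K_3 a$.

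Next I would verify that the two hypotheses of the corollary imply the corresponding hypotheses \eqref{n_cond} and \eqref{lambda} of Theorem \ref{high} after renaming constants. For the sample-size condition, using $\tau \leq C_2$ and $F_\infty(\xi_0) \geq C_1$ gives $\frac{\tau^4 (1+\xi_0)^2}{F_\infty^2(\xi_0)} \leq \frac{C_2^4 (1+\xi_0)^2}{C_1^2}$, so choosing the corollary's constant $K_1$ equal to the universal constant of Theorem \ref{high} times $\frac{C_2^4 (1+\xi_0)^2}{C_1^2}$ makes $n \geq K_1 p_0^2 \log(p/a)$ sufficient to guarantee \eqref{n_cond}. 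Analogously, $\tau \leq C_2$ and $\kappa \geq C_1$ give $\frac{\xi_0+1}{\xi_0-1} \cdot \frac{\tau^2}{\sqrt{\kappa}} \leq \frac{\xi_0+1}{\xi_0-1} \cdot \frac{C_2^2}{\sqrt{C_1}}$, so an appropriate choice of $K_2$ makes $\lambda \geq K_2 \sqrt{\log(p/a)/n}$ imply \eqref{lambda}. By construction both new constants depend only on $\xi_0$, $C_1$, $C_2$ (together with the universal constants of Theorem \ref{high}), exactly as asserted in the statement.

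Finally I would simplify the conclusion: since $\frac{\xi_0}{\xi_0+1} < 1$ and $\frac{1}{F_\infty(\xi_0)} \leq \frac{1}{C_1}$, the bound $\frac{4\xi_0 \lambda}{(\xi_0+1) F_\infty(\xi_0)}$ is at most $\frac{4\lambda}{C_1}$, which yields \eqref{bound} on the same high-probability event, with $K_3$ inherited unchanged from Theorem \ref{high}.

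This step is essentially constant bookkeeping, so I do not anticipate a genuine obstacle; the only point requiring care is to confirm that the implications between hypotheses run in the correct direction, namely that the corollary's cleaner and therefore more restrictive conditions on $n$ and $\lambda$ are strong enough to trigger Theorem \ref{high}, and not the reverse, and to keep track that every absorbed factor (the powers of $C_1$, $C_2$ and the $\xi_0$-dependent ratios) is genuinely a function of $\xi_0$, $C_1$, $C_2$ alone.
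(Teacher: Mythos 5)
Your proposal is correct and follows exactly the route the paper intends: the paper presents Corollary \ref{cor_simply} as a direct specialization of Theorem \ref{high} (with $q=\infty$, $\xi=\xi_0$, and the factors $\tau$, $\kappa$, $F_\infty(\xi_0)$ absorbed into constants via $\tau\le C_2$, $\kappa\ge C_1$, $F_\infty(\xi_0)\ge C_1$) and accordingly gives no separate proof. Your constant bookkeeping, the direction of the implications between the hypotheses, and the final simplification $\frac{4\xi_0\lambda}{(\xi_0+1)F_\infty(\xi_0)}\le \frac{4\lambda}{C_1}$ are all accurate.
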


\subsubsection{Modifications of RankLasso}
\label{high_adap_lasso}

The main drawback of RankLasso considered in Subsection \ref{unweight_high} is that  it can recover the true model only if the restrictive irrepresentable condition is satisfied. If this condition does not hold, then RankLasso can achieve a high power only by including a large number of irrelevant predictors. In Theorems \ref{th_high} and  \ref{adap_high} we state that this problem can be overcome by the application of the weighted or thresholded versions of  RankLasso. In both cases we rely on the initial  RankLasso estimator $\hth$  of $\t0$, which is estimation consistent under the assumptions of Theorem \ref{high} or Corollary \ref{cor_simply}. 
Theorems \ref{th_high} and \ref{adap_high} are stated  under simplified assumptions of Corollary  \ref{cor_simply}. We have decided to establish them in these versions to make this subsection more communicable.

First, we consider thresholded RankLasso, which is denoted by
$\thth$ and defined as
\begin{equation}
\label{thresh}
\thth _j = \hth _j \indyk(|\hth _j| \geq \delta), \quad j=1, \ldots,p,
\end{equation}
where $\hth$ is the RankLasso estimator  given in (\ref{Rlasso}) and $\delta >0$ is a threshold. 
 Theorem~\ref{th_high} provides the conditions under which this procedure is model selection consistent.

\begin{theorem}
\label{th_high}
We assume that Corollary \ref{cor_simply} holds and that the sample size and the tuning parameter $\lambda$ for RankLasso are selected according to  Corollary~\ref{cor_simply}. Moreover, suppose that 
$\t0_{min} = \min\limits_{j \in T} |\t0 _j| $ is such that it is possible to select the threshold $\delta$ so as
$$\t0 _{min}/2 \geq \delta > K_4 \lambda,$$ where $K_4=4/C_1$ is the constant from \eqref{bound}.  Then it holds  
$$
P\left( \hat{T}^{th} = T \right) \geq 1- K_3 a,$$
where  $K_3$ is the universal constant from Theorem \ref{high} and $\hat{T}^{th}= \{1\leq j \leq p: \thth _j \neq 0\}$ is the estimated set of relevant predictors by  thresholded RankLasso. 
  
\end{theorem}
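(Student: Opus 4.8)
The plan is to reduce the entire statement to a deterministic consequence of the $l_\infty$ estimation bound already supplied by Corollary \ref{cor_simply}, since that corollary provides the only probabilistic ingredient needed here. Define the event $\mathcal{E} = \{\, |\hth - \t0|_\infty \leq K_4 \lambda \,\}$, where $K_4 = 4/C_1$. Under the hypotheses on $n$ and $\lambda$ inherited from Corollary \ref{cor_simply}, that corollary guarantees $P(\mathcal{E}) \geq 1 - K_3 a$, so it suffices to prove the deterministic implication that $\mathcal{E}$ forces $\hat{T}^{th} = T$. The whole argument then runs coordinatewise on $\mathcal{E}$.

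The key structural fact I would invoke is that, by Theorem \ref{mult_cor}, the support of $\t0$ coincides exactly with $T$. This lets me split the indices into the relevant block $j \in T$, where $|\t0_j| \geq \t0_{min}$, and the irrelevant block $j \notin T$, where $\t0_j = 0$, and treat them separately. For the relevant coordinates I would apply the reverse triangle inequality on $\mathcal{E}$: $|\hth_j| \geq |\t0_j| - |\hth_j - \t0_j| \geq \t0_{min} - K_4\lambda$. Combining the two bracketing hypotheses on the threshold, namely $\t0_{min} \geq 2\delta$ and $K_4 \lambda < \delta$, yields $|\hth_j| \geq 2\delta - K_4\lambda > 2\delta - \delta = \delta$. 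Hence the thresholding rule \eqref{thresh} retains every $j \in T$, so $T \subseteq \hat{T}^{th}$.

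For the irrelevant coordinates I would use that $\t0_j = 0$ for $j \notin T$, so on $\mathcal{E}$ the bound collapses to $|\hth_j| = |\hth_j - \t0_j| \leq K_4 \lambda < \delta$. Thus the indicator $\indyk(|\hth_j| \geq \delta)$ vanishes and $\thth_j = 0$, giving $\hat{T}^{th} \subseteq T$. Combining the two inclusions yields $\hat{T}^{th} = T$ on $\mathcal{E}$, and the probability bound $P(\hat{T}^{th} = T) \geq 1 - K_3 a$ follows at once.

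I do not anticipate any genuine obstacle once the estimation bound is granted, since the thresholding step is purely arithmetic. The only point demanding care is checking that the two inequalities defining the admissible range for $\delta$ simultaneously control the two error types: the lower bound $\delta > K_4 \lambda$ is precisely what eliminates the false positives from $j \notin T$, while the upper bound $\delta \leq \t0_{min}/2$ is precisely what prevents false negatives among $j \in T$. Consequently the existence of a valid $\delta$ is exactly the separation (beta-min type) condition assumed in the statement, and no further tension arises between the two constraints.
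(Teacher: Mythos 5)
Your proposal is correct and coincides with the paper's own argument: both reduce the theorem to the $l_\infty$ bound \eqref{bound} from Corollary \ref{cor_simply} and then verify, coordinatewise via the (reverse) triangle inequality, that $j\in T$ implies $|\hth_j|\geq 2\delta-K_4\lambda>\delta$ while $j\notin T$ implies $|\hth_j|\leq K_4\lambda<\delta$. Your explicit appeal to Theorem \ref{mult_cor} for $\mathrm{supp}(\t0)=T$ only makes fully visible a step the paper leaves implicit.
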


Next, we consider the weighted RankLasso that minimizes
\begin{equation}
\label{gamma_a}
Q(\theta) + \lambda_a \sum_{j=1}^p w_j |\theta_j| ,
\end{equation}
where $\lambda_a>0$ and  weights are chosen in the following way:
 for arbitrary number $K>0$ and the RankLasso estimator $\hth$ from the previous subsection    we have $w_j=|\hth _j|^{-1}$ for  $|\hth _j| \leq \lambda_a,$ and $w_j \leq K,$ otherwise.

The next result describes properties of the weighted RankLasso estimator.

\begin{theorem}
\label{adap_high}
We assume that Corollary \ref{cor_simply} holds and that the sample size and the tuning parameter $\lambda$ for RankLasso are selected according to  Corollary \ref{cor_simply}. 
Let
 $\lambda_a = K_4 \lambda,$ where $K_4=4/C_1$ is from  \eqref{bound}. Additionally, we suppose that the signal strength and sparsity satisfy $\thmin /2 > \lambda _a$ and $p_0 \lambda \leq  K_5,$ where $K_5 $ is sufficiently small constant.
Then with probability at least $1- K_6 a$ there exists a global minimizer $\hth ^a$ of \eqref{gamma_a} such that $\hth ^a _{T'} =0$ and
\begin{equation}
\label{adap_formula}
|\hth ^a _T - \theta^0_T|_1 \leq  K_7 p_0 \lambda \:,
\end{equation}
where $K_6$ and $K_7$ are the constants depending only on $K_1,\ldots,K_5$ and the constant $K$, that is used in the definition of weights.
\end{theorem}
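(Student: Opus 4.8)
The plan is to use a primal--dual (restricted minimizer) witness construction: I will define a candidate solution supported only on $T$, derive the estimation bound on $T$, and then verify that the full Karush--Kuhn--Tucker system of the convex problem \eqref{gamma_a} holds at this candidate. Because $Q$ and the weighted $\ell_1$ penalty are both convex, satisfying the subgradient conditions certifies a global minimizer, which is exactly the ``there exists'' conclusion of the theorem.

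First I would fix a good event $\mathcal{E}$ on which three facts hold simultaneously: (i) the initial estimator is accurate, $|\hth-\t0|_\infty\leq\lambda_a$, as guaranteed by Corollary \ref{cor_simply} with $\lambda_a=K_4\lambda=4\lambda/C_1$; (ii) the empirical score is controlled, $|\nabla Q(\t0)|_\infty\leq c\lambda$ for a universal constant $c$, where $\nabla Q(\t0)=\frac1n X'(X\t0-r)$ and $r_i=R_i/n-0.5$; (iii) the $p_0\times p_0$ Gram submatrix concentrates, $\lambda_{\min}(X_T'X_T/n)\geq C_1/2$, which follows from subgaussianity (Assumption \ref{as_sub}) and $\kappa\geq C_1$ under $n\geq K_1 p_0^2\log(p/a)$. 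A union bound gives $\Pr(\mathcal{E})\geq 1-K_6 a$. On $\mathcal{E}$ the weights split cleanly: for $j\in T$ we have $|\hth_j|\geq\thmin-\lambda_a>\lambda_a$ because $\thmin>2\lambda_a$, so the bounded branch applies and $w_j\leq K$; for $j\notin T$, $\t0_j=0$ forces $|\hth_j|\leq\lambda_a$, hence $w_j=|\hth_j|^{-1}$ and $\lambda_a w_j\geq 1$ (coordinates with $\hth_j=0$ carry an infinite penalty and are trivially pinned at zero).

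I then define $\ahth$ by $\ahth_{T'}=0$ and $\ahth_T=\arg\min_{u}\{Q((u,0))+\lambda_a\sum_{j\in T}w_j|u_j|\}$, which is unique by (iii), with associated subgradient $s_T$ satisfying $\nabla Q(\ahth)_T+\lambda_a W_T s_T=0$. Writing $\Delta=\ahth-\t0$ (so $\Delta_{T'}=0$) and using $\nabla Q(\ahth)=\nabla Q(\t0)+(X'X/n)\Delta$, taking the inner product of the restricted stationarity relation with $\Delta_T$ gives $\Delta_T'(X_T'X_T/n)\Delta_T=-\Delta_T'\nabla Q(\t0)_T-\lambda_a\Delta_T'W_T s_T$. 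Bounding the left side below by $(C_1/2)|\Delta_T|_2^2$ and the right side above using $|\nabla Q(\t0)|_\infty\leq c\lambda$, $w_j\leq K$, $|s_j|\leq 1$ yields $(C_1/2)|\Delta_T|_2^2\leq(c+K_4 K)\lambda|\Delta_T|_1\leq(c+K_4 K)\lambda\sqrt{p_0}\,|\Delta_T|_2$, hence $|\Delta_T|_2\leq C'\sqrt{p_0}\lambda$ and the claimed bound $|\ahth_T-\t0_T|_1\leq\sqrt{p_0}|\Delta_T|_2\leq K_7 p_0\lambda$ with $K_7=2(c+K_4 K)/C_1$.

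It remains to verify dual feasibility off the support, which is where the sparsity hypothesis $p_0\lambda\leq K_5$ enters and is the main obstacle. For $j\in T'$ I bound $|\nabla Q(\ahth)_j|\leq|\nabla Q(\t0)|_\infty+|(X'X/n)_{j,T}\Delta_T|\leq c\lambda+C''|\Delta_T|_1\leq c\lambda+C''K_7 p_0\lambda$, using that the entries of $X'X/n$ are bounded on $\mathcal{E}$. Since $\lambda\leq p_0\lambda\leq K_5$, this is at most $(c+C''K_7)K_5$, which is below $1\leq\lambda_a w_j$ once $K_5$ is taken small enough relative to $c,C'',K_7$. Thus $|\nabla Q(\ahth)_j|\leq\lambda_a w_j$ for every $j\in T'$, so $\ahth$ satisfies the full KKT system and is a global minimizer with $\ahth_{T'}=0$. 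The delicate points are that the score control (ii) must be imported from the $U$-statistic analysis behind Theorem \ref{high} rather than from ordinary i.i.d.\ concentration, and that the weak feasibility margin $\lambda_a w_j\geq 1$ (a constant, not a quantity growing with $n$) forces the whole estimation contribution $C''|\Delta_T|_1$ to be absorbed into a constant, which is precisely why the smallness assumption $p_0\lambda\leq K_5$ is required.
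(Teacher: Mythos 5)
Your proof is correct and follows the same overall strategy as the paper's: construct the minimizer of the objective restricted to the true support $T$, bound its $\ell_1$ distance to $\theta^0_T$, and then verify the subgradient (KKT) conditions on $T'$ --- using $w_j\geq \lambda_a^{-1}$ off the support together with the smallness condition $p_0\lambda\leq K_5$ --- to certify that $(\hat{\theta}^a_T,0)$ is a global minimizer of the convex objective. The one place where you genuinely deviate is the estimation step on $T$: you take the inner product of the restricted stationarity relation with $\Delta_T$ and combine the eigenvalue bound $\lambda_{\min}(X_T'X_T/n)\geq C_1/2$ with the score bound and $w_j\leq K$, obtaining first an $\ell_2$ bound of order $\sqrt{p_0}\,\lambda$ and then $|\Delta_T|_1\leq K_7 p_0\lambda$ with the explicit constant $K_7=2(c+K_4K)/C_1$; the paper instead compares values of the restricted objective at the center and on the boundary of the sphere $\{\theta_T=\theta^0_T+p_0\lambda v\}$ and uses convexity to place a minimizer strictly inside. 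The two arguments rest on identical probabilistic ingredients (the $U$-statistic score bound from the proof of Theorem \ref{high}, Corollary \ref{cor_simply} for the weight dichotomy, and subgaussian concentration of the Gram matrix), and both exhibit the same increasing dependence of $K_7$ on $K$; yours has the minor advantage of producing explicit constants and an additional $\ell_2$ bound. One bookkeeping item: your dual-feasibility step needs $\max_{j\in T',\,k\in T}|(X'X/n)_{jk}|$ bounded by a constant, which you invoke but did not list among the defining conditions of the event $\mathcal{E}$; this must be added to the good event (costing one more term in the union bound) and is proved exactly as the paper does it, by writing $X_{T'}'X_T/n=(X_{T'}'X_T/n-H_2')+H_2'$, using $|H_2'|_\infty\leq 1$ and the same concentration argument as for the $T$-block. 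With that event included, your probability accounting $1-K_6a$ and the dependence of the constants on $K_1,\ldots,K_5$ and $K$ match the statement of the theorem.
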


In \cite[Corollary 1]{fanfanbarut14} the authors considered the weighted Lasso with the absolute value loss function in the linear model. Thus, this procedure is  robust with repect to the distribution of the noise variable. However, working with the absolute value loss function they need  that, basically, the density of the noise is Lipschitz in a neighbourhood of zero \citep[Condition 1]{fanfanbarut14}.  Our thresholded and weighted RankLasso  do not require such restrictions.
Besides, Theorems \ref{th_high} and \ref{adap_high} confirm that the proposed procedures works well in model selection in the single index model \eqref{model}.


It can be seen in the proof of Theorem \ref{adap_high} that $K_7$ is an increasing function of $K,$ that occurs in the construction of weights. It is intuitively clear, because weights $w_j\leq K$ usually correspond
(by Corollary \ref{seperation}) to significant predictors. Therefore, increasing $K$  we shrink coordinates of the estimator, so the bias increases. This fact is described in  \eqref{adap_formula}.

\subsection{Road map to proofs of main results}
\label{Ustat_sec}

In the paper we study properties of the RankLasso estimator and its thresholded and weighted modifications. These estimators are obtained by minimization of the risk $Q(\theta)$ defined in \eqref{Q_rand} and the penalty. 
Therefore, the analysis of model selection properties of rank-based estimators is based on investigating these two terms. The penalty term can be handled using standard methods for regular Lasso. However, the analysis of the risk $Q(\theta)$ is different, because it is
 the sum of dependent variables. 
In this subsection we show that the theory of $U$-statistics \citep{hoeff:48, serf:80, pg:99}
plays a prominent role in studying properties of the risk $Q(\theta)$. 

Consider the key object that is the derivative of the risk at the true point $\t0$
\begin{equation}
\label{derivQ}
\nabla Q(\t0) = - \frac{1}{n^2} \sum_{i=1}^n R_i X_i + \frac{1}{2n} \sum_{i=1}^n X_i + \frac{1}{n} \sum_{i=1}^n  X_i X_i' \t0 .
\end{equation}
The second and third terms in \eqref{derivQ} are  sums of independent random variables, but the first one is a sum of dependent random variables. 
Using \eqref{sum_rank}  we can express \eqref{derivQ} as
\begin{equation}
\label{derivQu}
-\frac{n-1}{n} A + \frac{n-2}{2n^2}  \sum_{i=1}^n X_i  + \frac{1}{n} \sum_{i=1}^n  X_i X_i' \t0
\end{equation}
where  a $U$-statistic $A$ is defined in \eqref{An} and has the kernel 
\begin{equation}
\label{kernel}
f(z_i, z_j) = \frac{1}{2} \left[ \indyk (y_j \leq y_i)x_i + \indyk (y_i \leq y_j)x_j
 \right].
\end{equation}
To handle \eqref{derivQu} we will use tools for sums of independent random variables as well as the $U$-statistics theory. 
Namely, we  use exponential inequalities for sums of independent and unbounded random variables from \cite[Corollary 8.2]{geer_sparsity:16} and its version for $U$-statistics given in Lemma~D 
in the Appendix.   

\section{Simulation study}
\label{numerical}

In this section we present results of the comparative simulation study verifying the properties of RankLasso and its thresholded and adaptive versions in model selection.

 We consider the moderate dimension setup, where the number of explanatory variables $p$ increases with $n$ according to the formula $p = 0.01 n^2$. More specifically, we consider the following pairs $(n,p)$ : (100, 100), (200, 400), (300, 900), (400, 1600). For each of these combinations we consider three different values of the sparsity parameter $p_0=\#\{j: \beta_j \neq 0\}\in\{3,10,20\}$. 

In three of our simulation scenarios the rows of the design matrix are generated as independent random vectors from the multivariate normal distribution with the covariance matrix $\Sigma$ defined as follows\\
- for the {\it independent} case $\Sigma=I$,\\
- for the {\it correlated} case $\Sigma_{ii}=1$ and $\Sigma_{ij}=0.3$ for $i\neq j$.

In one of the scenarios  the design matrix is created by simulating the genotypes of $p$ independent Single Nucleotide Polymorphisms (SNPs).
In this case the explanatory variables can take only three values: 0 for the homozygote for the minor allele (genotype {\it \{a,a\}}), 1 for the heterozygote (genotype {\it \{a,A\}}) and 2 for the homozygote for the major allele  (genotype {\it \{A,A\}}). The frequencies of the minor allele for each SNP are independently drawn from the uniform distribution on the interval $(0.1, 0.5)$. Then, given the frequency $\pi_j$ for $j$-th SNP, the explanatory variable $X_{ij}$ has the distribution: $P(X_{ij}=0)=\pi_j^2$, $P(X_{ij}=1)=2\pi_j(1-\pi_j)$ and $P(X_{ij}=2)=(1-\pi_j)^2$.

The full description of the simulation scenarios is provided below:\\
- {\bf Scenario 1}
$$Y=X\beta+\varepsilon\;\;,$$
where $X$ matrix is generated according to the {\it independent} case, 
$\beta_1=\ldots=\beta_{p_0}=3$ and the elements of $\varepsilon= (\varepsilon_1,\ldots, \varepsilon_n)$ are independently drawn from the standard Cauchy distribution,\\
- {\bf Scenario 2} -
the regression model, values of regression coefficients and $\varepsilon$ are as in  Scenario 1, design matrix contains standardized versions of genotypes of $p$ independent SNPs, \\
- {\bf Scenario 3} -
the regression model, values of regression coefficients and $\varepsilon$ are as in  Scenario~1 and the design matrix $X$ is generated according to the {\it correlated} case,\\
- {\bf Scenario 4}  -
the design matrix $X$ is generated according to the {\it correlated} case and the relationship between $Y_i$ and $\beta 'X_i$ is non-linear:
$$Y_i=\exp(4+0.05 \beta ' X_i) + \varepsilon_i\;\;$$
and $\varepsilon_1,\ldots,\varepsilon_n$ are independent random variables from the standard Cauchy distribution.

In our simulation study we compare five different statistical methods:\\
- {\bf rL}: RankLasso defined in \eqref{Rlasso} with $\lambda:=\lambda_{rL}$ and 
\begin{equation}\label{lambda_rL}
\lambda_{rL}=0.3 \, \sqrt{\frac{\log p}{n}}\:.
\end{equation}
- {\bf arL}: adaptive RankLasso (\ref{gamma_a}), with $\lambda_a=2\lambda_{rL}$
and weights
$$w_j=\left\{\begin{array}{ccc}
\frac{0.1 \lambda_{rL}}{|\hth _j|}&\mbox{when}&|\hth _j|>0.1\lambda_{rL},\\
|\hth _j|^{-1}&\mbox{otherwise} \;\; ,&
\end{array}
\right.$$
where $\hth$ is the RankLasso estimator computed above.  If $\hth _j=0,$ then $|\hth _j|^{-1}=\infty$ and $j^{th}$ explanatory variable is removed from the list of predictors before running weighted RankLasso,\\
- {\bf thrL}: thresholded RankLasso, where the tuning parameter for RankLasso is selected by cross-validation and the threshold is selected in such a way that the number of selected predictors coincides with the number of predictors  selected by  adaptive RankLasso,\\
- {\bf LAD}: LADLasso, defined as 
$$  \arg \min_{\theta} \quad \frac{1}{n}\sum_{i=1}^n |Y_i-\theta ' X_{i}|+\lambda_{LAD} \sum_{j=1}^{p}|\theta_j|\;\;$$
with $\lambda_{LAD}=1.5\sqrt{\frac{\log p}{n}},$\\
- {\bf cv}:  regular Lasso with the tuning parameter selected by cross-validation.

The values of the tuning parameters for RankLasso and LADLasso were selected empirically so that both methods perform comparatively well for $p_0=3$ and $n=200, p=400$.

We compare the performance of the above methods by performing 200 replicates of the experiment, where in each replicate we generate the new realization of the design matrix $X$ and the vector of random noise $\varepsilon$. We calculate the following statistical characteristics:\\
- {\bf FDR}: the average value of $FDP=\frac{V}{\max(R,1)}$, where $R$ is the total number of selected predictors and $V$ is the number of selected irrelevant predictors, \\
- {\bf Power}:  the average value of $TPP=\frac{S}{p_0}$, where $S=R-V$ is the number of properly identified relevant predictors,\\
- {\bf NMP}: the average value of Numbers of Misclassified Predictors, i.e  false positives or false negatives, which equals  $V+p_0-S$.

In Table \ref{Table:time} we compare the average times needed to invoke RankLasso using the {\it glmnet} package \citep{glmnet:10} in the {\it R} software \citep{R:11} and the robust LADLasso using the {\it R} package {\it MTE} \citep{MTE2017}. It can be seen that LADLasso becomes prohibitively slow, when the number of columns of the design matrix exceeds 1000. For $p=1600$ 
computing the LADLasso estimator takes more than 30 seconds and is over 3000 times slower than calculating RankLasso.
\begin{table}[htbp]
{\footnotesize
	  \caption{Ratio of times needed to perform LADLasso (in {\it MTE}) and  RankLasso (in {\it glmnet})}
		\label{Table:time}
		\begin{center}
		\begin{tabular}{|c|c|c|c|}
		\hline
		dimension&t(LAD)/t(rL)\\
		\hline
		$n=100,p=100$&5.32\\
		$n=200,p=400$&95.2\\
		$n=300,p=900$&655\\
		$n=400,p=1600$&3087\\
		\hline
		\end{tabular}
		\end{center}
	}
\end{table}

Figure 1  illustrates the average number of falsely classified predictors for different methods and under different simulation scenarios. In the case when predictors are independent, RankLasso satisfies assumptions of \cite[Theorem 2.1]{WangZhu2015} and its NMP decreases with $p=0.01n^2$. The same is true for LADLasso. As shown in plots of FDR and Power (Figures 1 and 2 in the appendix), LADLasso becomes more conservative than RankLasso for larger values of $p_0$, which leads to slightly larger values of NMP. We can also observe that for independent predictors, the adaptive and thresholded versions perform similarly to the standard version of RankLasso.  As expected,  regular cross-validated Lasso performs very badly, when the error terms come from the Cauchy distribution. If suffers both from the loss of power and large FDR values. Also, it is interesting to observe that the first two rows in Figure 1 do not differ significantly, which shows that the performance of RankLasso for the realistic independent SNP data is very similar to its performance when the elements of the design matrix are drawn from the Gaussian distribution.

The behaviour of RankLasso changes significantly in the case when predictors are correlated.
Namely, NMP of RankLasso increases with $p$. On the other hand, NMP of both adaptive and thresholded versions of RankLasso decrease with $p$, so these two methods are able to find the true model consistently. 
In the case when the relationship between the median value of the response variable and the predictors is linear, LADLasso has a larger power and a similar FDR to RankLasso (see Figures \ref{Power} and \ref{FDR} in the appendix). In the result it returns substantially more of false positives and its NMP is slightly larger than the NMP of RankLasso. 
In the last row of Figure 1 we can observe that the lack of linearity has a negligible influence on the performance of RankLasso but substantially affects LADLasso, which now has a smaller power and  larger FDR than RankLasso.

\begin{figure}
\label{NMP}
\begin{center}
\includegraphics{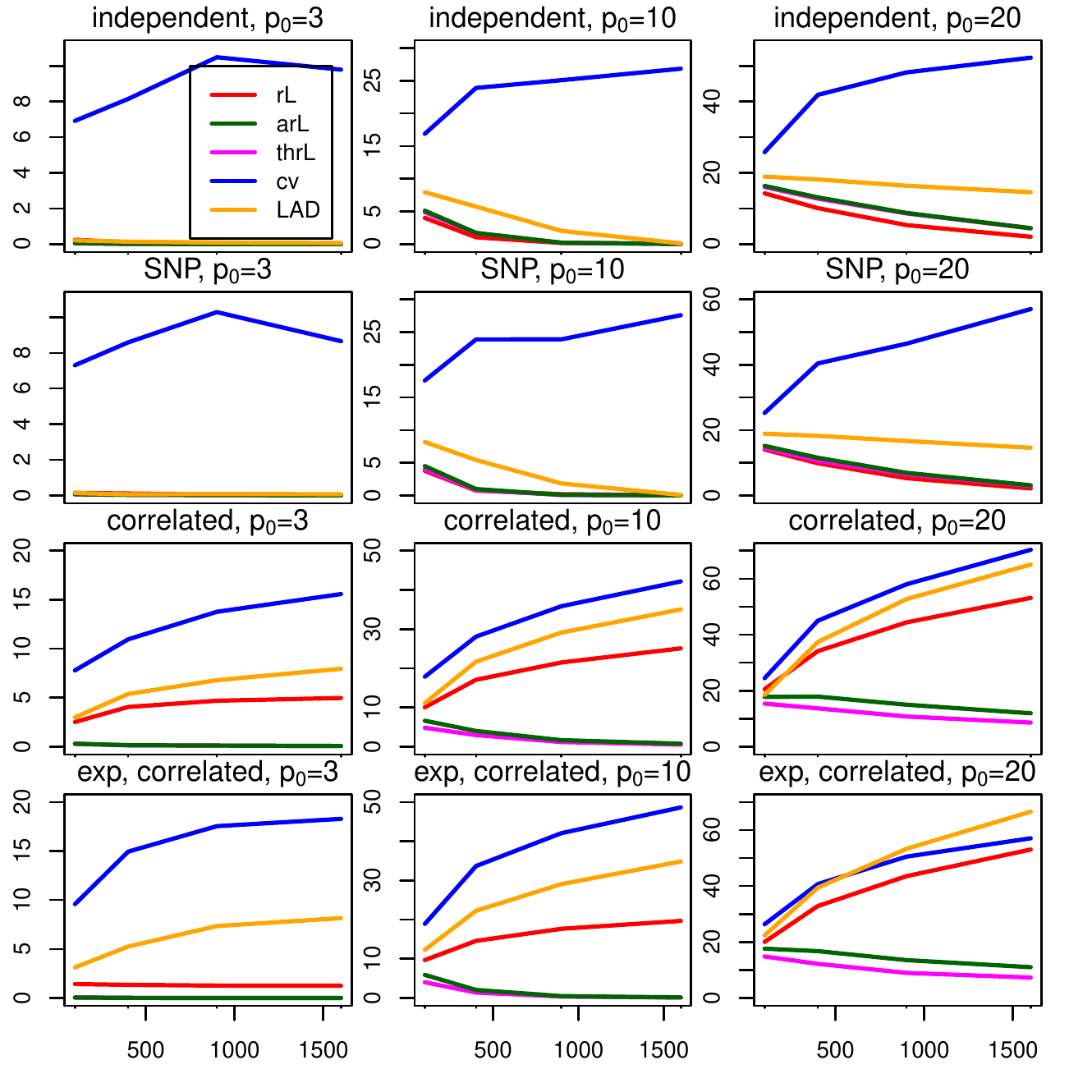}
\caption{\large{\bf \footnotesize Plots of NMP (average number of misclassified predictors) as the function of $p$.}}
\end{center}
\end{figure}

As shown in Figure 1, in the case of correlated predictors  thresholded RankLasso is systematically better than  adaptive RankLasso, even though both methods always select the same number of predictors. To explain this phenomenon, in Figure 2 we present the relationship between  the number of false discoveries (FD) and the number of true positives (TP) along the decreasingly ordered sequence of absolute values of RankLasso estimates for one realization of the experiment. We consider  RankLasso with the value of the tuning parameter selected as in (\ref{lambda_rL}) and by cross-validation (denoted by ,,cvrL'') and the adaptive RankLasso. 

Figure 2 illustrates substantial problems with the ordering of predictors by the estimates provided by RankLasso with the relatively large value of $\lambda$ from equation (\ref{lambda_rL}). Among three predictors with the largest absolute values of estimates of regression coefficiens, 2 are false positives.  The final estimate of RankLasso identifies 15 true predictors and returns 25 false discoveries.  Adaptive RankLasso does not provide a substantially better ordering of estimates of regression coefficients but stops at the point when the number of false discoveries drastically increases. In the result, adaptive RankLasso returns 7 true positives and 3 false discoveries. 
Comparing cross-validated RankLasso to its more conservative version, we can observe that it selects many more regressors -  it identifies all 20 of true predictors and returns 40 false discoveries. However, we can also observe that cross-validated RankLasso provides a much better ordering of the estimates of regression coefficients. The FD-TP curve of cross-validated RankLasso is substantially below the FD-TP curve corresponding to  RankLasso with  larger value of $\lambda$. Among 14 regressors with the largest absolute values of regression coefficient provided by  cross-validated RankLasso there is only one false positive. Thus, when selection is stopped by matching the adaptive RankLasso, we identify 9 true positives and only 1 false discovery.

\begin{figure}
\begin{center}
\includegraphics[scale=0.5]{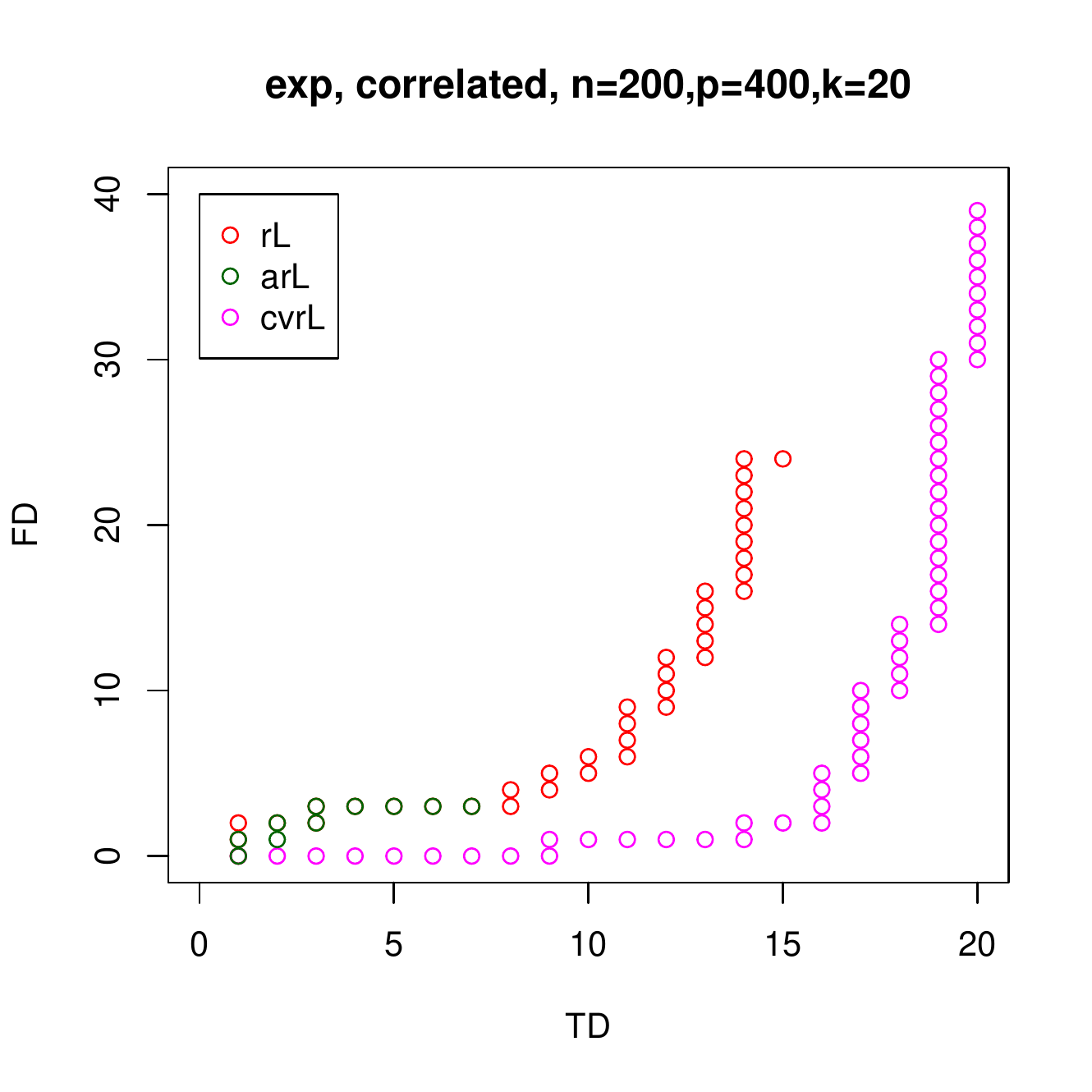}
\caption{\large{\bf \footnotesize Plots of the number of false discoveries (FD) vs the number of true positives (TP). 
}}
\end{center}
\label{Fig:FDP_TPP}
\end{figure}

\section{Analysis of real data}
\label{real}

In this section we apply rank methods and their competitors for identifying relationships between  expressions of different genes. Variations in gene expression levels may be related to phenotypic variations such as
susceptibility to diseases and response to drugs. Identifying the relationships between these gene expressions facilitates understanding the genetic-pathways and identifying regulatory genes influencing the disease processes. In the considered data set gene expression was interrogated in lymphoblastoid cell lines of 210 unrelated HapMap individuals \cite{hapmap2005} from four populations (60 Utah residents with ancestry from northern and western Europe, 45 Han Chinese in
Beijing, 45 Japanese in Tokyo, 60 Yoruba in Ibadan, Nigeria) \cite{stranger2007}. The data set can be found at {\it ftp://ftp.sanger.ac.uk/pub/genevar/} and was previously studied e.g. in \cite{bradic2011, fanfanbarut14}.
In our analysis we  will concentrate on four genes. First of them is the gene CCT8, which was analyzed previously in \cite{bradic2011}.  This gene is within the Down syndrome critical region on human chromosomen 21, on the minus strand. The over-expression of CCT8 may be associated with Down syndrome phenotypes. We also consider gene CHRNA6, which was previously investigated in \cite{fanfanbarut14} and is thought to be related to activation of dopamine releasing neurons with nicotine. 
Since the data on expression levels of these two genes contained only few relatively small outliers, 
we additionally considered  genes PRAME and Hs.444277-S, where the influence of outliers is more pronounced. The boxplots of these gene expressions can be found in Figure \ref{boxplots} in the appendix.

We start with preparing the data set using three pre-processing steps as in \cite{WangWuLi12}: 
we remove each probe for
which the maximum expression among 210 individuals is smaller than 
the 25-th percentile of the entire expression values,  we remove any probe for which the range of the expression among 210 individuals is smaller than 2 and finally we select 300 genes, whose expressions are  the most correlated to the expression level of the analyzed gene. 

Next, the data set is divided into two parts: the training set with randomly selected 180 individuals and the test set with remaining 30 individuals. Five procedures from Subsection \ref{numerical} are used to select important predictors using the training set and their accuracy is evaluated using the test set. As a measure of accuracy we cannot use the standard mean square prediction error, because in the single index model \eqref{model} the link funcion $g$ is unknown and the true parameter $\beta$ is not identifiable. However, 
we can expect that the ordering between values of the response variables $Y_i$ should be well predicted by the ordering between scalar products $\beta ' X_i.$  Moreover, 
from Theorem \ref{mult_cor} we know that $\theta ^0= \gamma _ \beta \beta  $ for the positive multiplicative number  $\gamma _ \beta , $ so the ordering between  $Y_i$ should be also well predicted by the ordering between   $(\theta {^0}) ' X_i.$
Therefore, as a accuracy measure of  estimators we use the {\it ordering prediction quality} (OPQ), which is defined as follows: let $T=n_t(n_t-1)/2$ be the number of different two-element subsets from the test set. 
The subset $\{i,j\}$ 
from the test set is properly ordered, if the sign of $ Y_i^{test} - Y_j^{test}$ coincides with the sign of  $\hat {\theta} 'X_i^{test} - \hat {\theta} 'X_j^{test},$ where $\hat \theta$ is some estimator of $\theta {^0}$ based on the training set. Let $P$ denote the number of two-element subsets from the test-set, that are properly ordered. The {\it ordering prediction quality} is defined as
\begin{equation}\label{OPQ}
OPQ=\frac{P}{T}\;\;.
\end{equation}

 Tables \ref{rp} and \ref{pred} report the average number of selected predictors and the average values of OPQ over 200 random splits into the training and the test sets. These values were calculated for all five  model selection methods  considered in the simulation study. 
\begin{table}[htb]
{\footnotesize
 \begin{center}
\caption{Average number of selected predictors (SP)}
\label{rp}
\begin{tabular}{||c|c|c|c|c|c||}
\hline\hline
SP&rL&arL&rLth&LADLasso&cv\\
\hline
CCT8&16&6&6&14&25\\
 CHRNA6&19&7&7&8&52\\
PRAME&16&6&6&6&0\\
Hs.444277-S&4&2&2&0&0\\
\hline
\hline
\end{tabular}
\end{center}
}
\end{table}

\begin{table}[htb]
{\footnotesize
\begin{center}
\caption{Average values of the Ordering Prediction Quality (OPQ, (\ref{OPQ}))}
\label{pred}
\begin{tabular}{||c|c|c|c|c|c||}
\hline\hline
OPQ&rL&arL&rLth&LADLasso&cv\\
\hline
CCT8&0.74&0.73&0.74&0.73&0.76\\
 CHRNA6&0.68&0.66&0.64&0.66&0.71\\
PRAME&0.65&0.62&0.61&0.60&0.08\\
Hs.444277-S&0.59&0.59&0.58&0.07&0.03\\
\hline
\hline
\end{tabular}
\end{center}
}
\end{table}
We can observe that for CCT8  the numbers of selected predictors and the prediction accuracy of  RankLasso and LADLasso are similar. The thresholded and adaptive RankLasso provide a similar prediction accuracy with  much smaller number of predictors. Interestingly,  regular cross-validated Lasso yields the best Ordering Prediction Quality, which however requires 4 times as many predictors as  thresholded or  adaptive RankLasso.  For CHRNA6 we see that the support of RankLasso is substantialy larger than for LADLasso and adaptive and thresholded RankLasso, which results in  slightly better prediction  accuracy. Again, the best prediction is obtained from  regular cross-validated Lasso, which however uses  much larger number of predictors. 

The performance of  regular Lasso drastically deteriorates for the remaining two genes, whose expressions contain substantially  larger outliers.
 Here  regular cross-validated Lasso in most of the cases is not capable of identifying any predictors. In the case of the gene Hs.444277-S the same is true about LADLasso. In the case of the PRAME gene,  the highest prediction accuracy is provided by  regular RankLasso, which however requires almost three times as many predictors as adaptive or thresholded Lasso. In the  case of Hs.444277-S RankLasso identifies 4 predictors, while its modified versions select only 2 genes. These simple models still allow to predict the ordering of gene expressions of Hs.444277-S with accuracy close to 60\%.

\vspace{0.3cm}

{\bf \large Acknowledgements}
We would like to thank Patrick Tardivel for helpful comments.  We gratefully acknowledge the grant of the Wroclaw Center of Networking and Supercomputing (WCSS), where most of the computations were performed.

\appendix

\vspace{3 cm}

{\Large \bf APPENDIX}

\vspace{1 cm}

In Section \ref{low_sec} of the appendix  we provide results for Rank-Lasso and its modifications in the low-dimensional scenario,
that we do not state in the main paper due to the space constraints. Besides, additional results of numerical experiments are in Section \ref{add_exp}. The proofs of results obtained in the main paper are given in Sections \ref{main_proofs1} and \ref{main_proofs2}. Finally, proofs of results from Section \ref{low_sec} in the appendix are given in Section     \ref{low_sec_proofs}.

\section{Low-dimensional scenario}
\label{low_sec}

In this section we consider properties of rank estimators in the case where the number of predictors is fixed. In the first part we focus on RankLasso and in the second part we study thresholded and weighted RankLasso.

We assume, without loss of generality,  that  $T=
\{1, \ldots, p_0\}$ for some  $0 < p_0 < p,$  so the response variable $Y$  depends on first $p_0$ predictors.
RankLasso estimates the set $T$  by
$$
\hat{T} = \{1\leq j \leq p: \hth _j \neq 0\}.
$$

 The results, that we obtain in this subsection, are asymptotic, so we can replace the true parameter $\t0$ in  \eqref{theta0_alt} by
\begin{equation}
\label{tstar}
\th*= \frac{n}{n-1} \t0 =H^{-1}\mu.
\end{equation}
 Obviously, it does not change the set of relevant predictors $T.$
We also decompose the matrix $H= \Ex X_1 X_1 '$ as
$$H= \left(
\begin{array}{c c}
\overbrace{H_1}^{p_0 \times p_0} & \overbrace{H_2}^{p_0 \times (p-p_0)} \\
H_2 ' & H_3
\end{array}
\right), $$
so the matrix $H_1$ describes correlations between relevant predictors and the matrix $H_2$ contains correlations between relevant and irrelevant predictors.

\subsubsection{Model selection consistency of RankLasso}
\label{unweight_low}

The next result provides sufficient and necessary conditions for RankLasso
to be model selection consistent. They are similar to the results proved in \cite[Theorem 1]{zou:06} and \cite[Theorem 1]{zhaoyu:06} that concern model selection in the linear model. Theorem \ref{nesscond} extends these results to the single index model \eqref{model}, which does not require any assumptions on the form of the link function nor the distribution of the noise variable.

\begin{theorem}
\label{nesscond}
Suppose that Assumption \ref{as_main} is satisfied, $\Ex |X_1|^4 < \infty$ and  $\lambda \rightarrow 0$, $\sqrt{n} \lambda \rightarrow \infty.$\\
(a)
If  $\lim _{n \rightarrow \infty}P(\hat{T} = T) \rightarrow 1,$ then
\begin{equation}
\label{ness}
 \left|
H_2 ' H_1^{-1} \sign(\th*_T) \right|_\infty \leq 1,
\end{equation}
where $\theta ^*$ is defined in \eqref{tstar}.\\
(b) If the inequality
\begin{equation}
\label{suff}
\left| H_2' H_1^{-1} \sign(\th*_T) \right|_\infty < 1
\end{equation}
holds, then $\lim _{n \rightarrow \infty}P(\hat{T} = T) \rightarrow 1$.
\end{theorem}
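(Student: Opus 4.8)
The plan is to recognize RankLasso as an ordinary Lasso with pseudo-response $\tilde r_i = R_i/n - 1/2$ and then to adapt the primal--dual (irrepresentable-condition) arguments of \cite{zhaoyu:06, zou:06} to this setting, the only genuinely new ingredient being that the cross term $\tfrac1n X'\tilde r$ is driven by the $U$-statistic $A$ rather than by an average of i.i.d.\ summands. Writing $\tilde r = (\tilde r_1,\ldots,\tilde r_n)'$, the Karush--Kuhn--Tucker conditions state that $\hth$ with subgradient $s$ is a RankLasso minimizer if and only if
\begin{equation*}
\frac{1}{n}X'(X\hth - \tilde r) + \lambda s = 0, \qquad s \in \partial|\hth|_1 .
\end{equation*}
By \eqref{sum_rank} and the definition \eqref{An} of $A$ one has $\tfrac1n X'\tilde r = \tfrac{n-1}{n}A + \tfrac{1}{n^2}\sum_i X_i - \tfrac{1}{2n}\sum_i X_i$, so that $\tfrac1n X'\tilde r \to \mu$ in probability while $\tfrac1n X'X \to H$. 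Since $\th*=H^{-1}\mu$ with $\th*_{T'}=0$, the block identities $\mu_T = H_1\th*_T$ and $\mu_{T'}=H_2'\th*_T$ hold; the latter gives the pivotal relation $\mu_{T'}=H_2'H_1^{-1}\mu_T$, which ultimately produces the quantity $H_2'H_1^{-1}\sign(\th*_T)$.

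For sufficiency (part (b)) I would use the primal--dual witness construction. Define the restricted estimator $\tth_T = \arg\min_{\theta_T}\{Q((\theta_T,0))+\lambda|\theta_T|_1\}$ and set $\tth=(\tth_T,0)$, writing $C_{11}=\tfrac1n X_T'X_T$ and $C_{21}=\tfrac1n X_{T'}'X_T$. Because $C_{11}\to H_1$ is positive definite and $\lambda\to 0$, the restricted stationarity equation gives $\tth_T\to\th*_T$, whence $\sign(\tth_T)=\sign(\th*_T)$ eventually (here I use that $\min_{j\in T}|\th*_j|$ is a fixed positive number in the low-dimensional regime). It then remains to check strict dual feasibility on $T'$, i.e.\ that $u_{T'}=\tfrac1n X_{T'}'(\tilde r - X_T\tth_T)$ satisfies $|u_{T'}|_\infty<\lambda$. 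Eliminating $\tth_T$ through the restricted KKT equation yields
\begin{equation*}
u_{T'} = \lambda\, C_{21}C_{11}^{-1}\sign(\tth_T) \;+\; \frac{1}{n}\bigl(X_{T'}' - C_{21}C_{11}^{-1}X_T'\bigr)\tilde r ,
\end{equation*}
where the remainder has limit $\mu_{T'}-H_2'H_1^{-1}\mu_T=0$. The leading term converges to $\lambda H_2'H_1^{-1}\sign(\th*_T)$, whose $\infty$-norm is strictly below $\lambda$ by \eqref{suff}; the remainder is $O_P(n^{-1/2})$, hence $o_P(\lambda)$ because $\sqrt n\lambda\to\infty$. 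Thus $|u_{T'}|_\infty<\lambda$ with probability tending to one, the witness $\tth$ is the genuine RankLasso solution, and $\hat T=T$.

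For necessity (part (a)) I would run the same computation in reverse. On the event $\{\hat T=T\}$ the KKT conditions force $\hth_{T'}=0$ together with a subgradient obeying $|u_{T'}|_\infty\le\lambda$ (now with equality allowed), and $\sign(\hth_T)=\sign(\th*_T)$ with high probability since $\hth_T\to\th*_T$. Dividing the displayed identity for $u_{T'}$ by $\lambda$, letting $n\to\infty$, and using $\Pr(\hat T=T)\to 1$ gives $|H_2'H_1^{-1}\sign(\th*_T)|_\infty\le 1$, which is \eqref{ness}.

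The main obstacle is control of the remainder $\tfrac1n X'\tilde r - \mu$ at rate $O_P(n^{-1/2})$. Unlike in the classical linear-model proofs, $\tilde r$ consists of the dependent ranks $R_i$, so this quantity is not an average of i.i.d.\ summands; through \eqref{sum_rank} it is governed by the $U$-statistic $A$ with kernel \eqref{kernel}. I would therefore invoke the Hoeffding decomposition and the $U$-statistic central limit theorem (as developed in \cite{hoeff:48, serf:80, pg:99}), for which the hypothesis $\Ex|X_1|^4<\infty$ is exactly what guarantees finite variances of the projection and degenerate parts; the same moment bound yields $\tfrac1n X'X - H = O_P(n^{-1/2})$. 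Once these two rates are available, the comparison with $\lambda$ (via $\sqrt n\lambda\to\infty$ for negligibility of the remainder and $\lambda\to 0$ for sign consistency on $T$) closes both directions exactly as in the Lasso case.
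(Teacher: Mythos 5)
Your proposal is correct, but it follows a genuinely different route from the paper. The paper's proof is built on Lemma \ref{pomoc}, a Knight--Fu-type result \cite{kfu:00} giving the asymptotic distribution of $\lambda^{-1}(\hat{\theta}-\theta^*)$ as the minimizer $\eta$ of the deterministic convex criterion $V_2(\theta)=\tfrac12\theta'H\theta+\sum_{j\in T}\theta_j\sign(\theta^*_j)+\sum_{j\notin T}|\theta_j|$; under \eqref{suff} one reads off $\eta=(-H_1^{-1}\sign(\theta^*_T),0)$, which immediately rules out false negatives, while false positives are excluded by showing via the KKT identity and Slutsky's theorem that $\lambda^{-1}|\nabla^jQ(\hat\theta)|\to_P|(H\eta)_j|<1$ for $j\notin T$; part (a) is then obtained from the structure of $V_2$ as in \cite[Theorem 1]{zou:06}. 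You instead run a primal--dual witness argument in the style of \cite{zhaoyu:06}: construct the $T$-restricted estimator, eliminate it from the stationarity equation, and compare the leading term $\lambda\,C_{21}C_{11}^{-1}\sign(\tilde\theta_T)$ against the $O_P(n^{-1/2})=o_P(\lambda)$ remainder, reversing the computation for necessity. Both routes rest on the same stochastic ingredients --- the $U$-statistic CLT for $A$ in \eqref{An} (the paper's Lemma \ref{asym}) and the fourth-moment condition, which, to be precise, is needed for $\tfrac1nX'X-H=O_P(n^{-1/2})$ rather than for the kernel \eqref{kernel} of $A$ (the indicators are bounded, so second moments of $X_1$ suffice there) --- and on the same block identity $\mu_{T'}=H_2'H_1^{-1}\mu_T$. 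What each buys: the paper's approach recycles Lemma \ref{pomoc}, which is needed anyway for Theorems \ref{oracle_adap} and \ref{asym_thresh}, and delivers the limiting distribution as a by-product, at the cost of epi-convergence/convexity machinery; your approach is self-contained, makes the role of the rate conditions $\lambda\to0$, $\sqrt n\lambda\to\infty$ completely explicit, and is the argument that generalizes to the high-dimensional regime. The only step you should spell out is the standard uniqueness point in the witness construction: strict dual feasibility $|u_{T'}|_\infty<\lambda$ together with invertibility of $C_{11}$ (which holds with probability tending to one) implies that every RankLasso solution vanishes on $T'$ and coincides with $(\tilde\theta_T,0)$, so that $\hat T=T$ is unambiguous.
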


The sufficiency of \eqref{suff} for model selection consistency of RankLasso was established in \cite[Corollary 2.1]{WangZhu2015}. In Theorem \ref{nesscond} we strenghten this result by showing that it is almost the necessary condition.
The condition \eqref{suff}, called the irrepresentable condition \citep{zhaoyu:06}, is restrictive and satisfied only in some  very  special cases, like when  predictors are independent or when the correlations between ''neighboring'' variables decay exponentially with their distance.
Therefore,
RankLasso usually is not consistent in model selection.  However, as shown in the following Lemma \ref{pomoc}, it can consistently estimate $\th*$ under much weaker assumptions. This result
will be crucial for Subsection \ref{low_adap_lasso}, where we establish model selection consistency of the thresholded and weighted versions of RankLasso under such weaker assumptions. It is a generalization of \cite[Theorem 2]{kfu:00}.

\begin{lemma}
\label{pomoc}
Suppose that Assumption  \ref{as1} is satisfied and $\Ex |X_1|^4 < \infty.$ Let  $a_n$ be a sequence such that  $a_n \rightarrow 0, \; \frac{1}{a_n \sqrt{n}} \rightarrow b \in [0,\infty)$, $\frac{\lambda}{a_n }
\rightarrow c \in [0,\infty)$.  Then the RankLasso estimator  $\hat \theta$ in (\ref{Rlasso}) satisfies $$\frac{1}{a_n} \left(\hth-\th*\right) \rightarrow_d \arg \min\limits_\theta V(\theta),$$
where
$$V(\theta) = \frac{1}{2} \theta ' H \theta + b \,  \theta 'W + c\,\sum\limits_{j \in T} \theta_j \sign(\th*_j) + c\,\sum\limits_{j \notin T} |\theta_j|    $$
and $W$ has a normal $N(0,D)$ distribution with the matrix $D$ given in Lemma~\ref{asym}
in Section~\ref{low_sec_proofs}.
\end{lemma}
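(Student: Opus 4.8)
The plan is to follow the classical ``argmin of a convex criterion'' approach of \cite{kfu:00}, reducing the statement to a convexity argument once the correct asymptotics of the gradient $\nabla Q(\th*)$ are established. First I would reparametrize by $\theta = \th* + a_n u$. Since $Q$ is an exact quadratic in $\theta$ with Hessian $\hat H = \frac1n\sum_{i=1}^n X_i X_i'$, subtracting the $u$-free constant $Q(\th*)+\lambda|\th*|_1$ and dividing by $a_n^2$ turns the RankLasso objective into
\begin{equation*}
V_n(u) = u' \frac{\nabla Q(\th*)}{a_n} + \frac12\, u' \hat H u + \frac{\lambda}{a_n^2}\sum_{j\in T}\bigl(|\th*_j + a_n u_j| - |\th*_j|\bigr) + \frac{\lambda}{a_n}\sum_{j\notin T}|u_j|.
\end{equation*}
Each $V_n$ is convex and its unique minimizer is precisely $\hat u_n = a_n^{-1}(\hth - \th*)$, so it suffices to show $\hat u_n \to_d \arg\min_u V(u)$.

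Next I would identify the finite-dimensional limit of $V_n$ term by term. The Hessian term converges by the law of large numbers, $\hat H \to H$. For the penalty on $T$, since $\th*_j\neq 0$ there we have $|\th*_j + a_n u_j| - |\th*_j| = a_n u_j\,\sign(\th*_j)$ for all $n$ large, so that term tends to $c\sum_{j\in T}u_j\,\sign(\th*_j)$ using $\lambda/a_n\to c$; the penalty on $T'$ tends to $c\sum_{j\notin T}|u_j|$ directly. The crux is the linear term. Writing $\nabla Q(\th*)/a_n = (a_n\sqrt n)^{-1}\cdot\sqrt n\,\nabla Q(\th*)$ and using $(a_n\sqrt n)^{-1}\to b$, it remains to prove $\sqrt n\,\nabla Q(\th*)\to_d W\sim N(0,D)$; together with Slutsky this yields the linear term $b\,u'W$ and hence the finite-dimensional convergence $V_n\to_d V$.

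The asymptotic normality of $\sqrt n\,\nabla Q(\th*)$ is the main obstacle and the step where the $U$-statistic machinery and the fourth-moment assumption enter; this is exactly the content of Lemma \ref{asym}. Using \eqref{derivQu} together with the identity $\th* - \t0 = \t0/(n-1)$, I would split $\nabla Q(\th*)$ into three centered fluctuations --- the $U$-statistic part coming from $A-\mu$, the empirical mean $\tfrac{1}{n}\sum_i X_i$, and the empirical-moment part $(\hat H - H)\th*$ --- plus a deterministic bias of order $1/n$ that is negligible after multiplication by $\sqrt n$. A multivariate central limit theorem, with the H\'ajek projection \cite{hoeff:48, serf:80} handling the dependent $U$-statistic $A$ and an ordinary CLT handling the other two, gives joint normality; a Cram\'er--Wold device then assembles the limit $W\sim N(0,D)$, with $D$ the covariance of the combined projection. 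I would stress that the empirical-moment term $(\hat H - H)\th*$ involves the products $X_iX_i'$, so its CLT requires $\Ex|X_1|^4<\infty$; this is precisely why the fourth moment is assumed, whereas the kernel $f$ of $A$ is bounded in its $Y$-coordinates and hence $A$ has a finite-variance projection already under a second moment of $X$.

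Finally, because each $V_n$ is convex and converges in distribution, in the finite-dimensional sense, to the convex limit $V$, and because $V$ has an a.s.\ unique minimizer (its quadratic part $\tfrac12 u'Hu$ is strictly convex, as $H$ is positive definite by Assumption \ref{as1}), the convexity lemma of \cite{kfu:00} transfers the convergence to the minimizers: $\hat u_n = a_n^{-1}(\hth - \th*)\to_d\arg\min_u V(u)$, which is the claim. The two delicate points are therefore the joint $U$-statistic CLT of the third step and the verification of the argmin-continuity hypotheses (convexity plus a unique limiting minimizer), the latter being immediate from positive definiteness of $H$.
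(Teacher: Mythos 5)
Your proposal is correct and follows essentially the same route as the paper's proof: the same reparametrization $\theta=\th*+a_n\theta$, the same term-by-term finite-dimensional limit of the rescaled convex objective (your joint CLT for $\sqrt{n}\,\nabla Q(\th*)$ via the H\'ajek projection and Cram\'er--Wold is just an unpacking of Hoeffding's joint $U$-statistic CLT that the paper invokes in Lemma~\ref{asym}), and the same conclusion via the argmin-of-convex-processes lemma (\cite{kfu:00}, \cite{geyer:96}). Your observation that $\Ex|X_1|^4<\infty$ is needed only for the fluctuation $(\hat H-H)\th*$, the kernel of $A$ being bounded in its $Y$-coordinates, matches the structure of the paper's argument exactly.
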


\subsubsection{Modifications of RankLasso}
\label{low_adap_lasso}

In this subsection we introduce two modifications of RankLasso and study their properties in the low-dimensional case.

First of these modifications, the weighted RankLasso estimator, is an analogue of the adaptive Lasso, proposed in  \cite{zou:06}.
The main idea of this approach relies on the application of different weights  for different predictors, depending on the value of  some initial estimator $\hbeta$ of  $\theta ^*$. This estimator needs to be $\sqrt{n}$-consistent, i.e. it needs to satisfy
\begin{equation}\label{rootn}
\sqrt{n} \left( \hbeta - \theta ^* \right)=O_P(1)\;\;.
\end{equation}
In particular, according to Lemma \ref{pomoc},  $\tilde{\theta}$ can be chosen as the RankLasso estimator with the regularization parameter
that behaves as $O (1/ \sqrt{n}).$
 Then, the weighted RankLasso estimator $\ahth$ is obtained  as
\begin{equation}\label{adlas}
\ahth=\arg
\min_{\theta \in \mathbb{R}^p}  \quad Q(\theta) + \lambda \sum_{j=1}^p w_j |\theta_j| \:,
\end{equation}
where $w_j = |\hbeta_j|^{-1}, j=1,\ldots,p$ and $ Q(\theta)$ is given in (\ref{Q_rand}).

Let $\hat{T}^a$ denote a set $\{j \in \{1, \ldots, p\}: \ahth _j \neq 0\}.$
The properties of $\ahth$ are described in the next theorem.

\begin{theorem}
\label{oracle_adap}
 Consider the weighted RankLasso estimator (\ref{adlas}) with $\hbeta$ satisfying (\ref{rootn}).
Suppose that Assumption \ref{as_main} is satisfied and $\Ex |X_1|^4 < \infty$. If $n \lambda \rightarrow \infty$ and $\sqrt{n} \lambda \rightarrow c \in [0,\infty),$ then \\
(a) $\lim\limits_{n \rightarrow \infty} \Pr \left(  \sign (\ahth ) = \sign(\beta ) \right)=1,$ where the equality of signs of two vectors is understood componentwise,\\
(b) $\sqrt{n}\left(\ahth_{T} - \th*_T \right) \rightarrow_d N \left(- H_1^{-1} \bar{c}, H_{1}^{-1} D_1 H_{1}^{-1} \right),$
where $\bar{c} = c \left(  \frac{1}{\th* _1},\ldots,  \frac{1}{\th* _{p_0}} \right),$ $\theta ^*$ is defined in \eqref{tstar}  and the matrix $D_1$ is the $(p_0 \times p_0)$ upper-left submatrix of
 the matrix $D$
defined in Lemma~\ref{asym} in Section \ref{low_sec_proofs}.
\end{theorem}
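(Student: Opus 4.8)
The plan is to follow the reparametrized-objective / convex-argmin technique that underlies the adaptive Lasso oracle property, adapting it to the rank risk $Q$ and borrowing the $U$-statistic central limit theorem already exploited in Lemma \ref{pomoc}. Since $\ahth$ minimizes \eqref{adlas}, the rescaled deviation $\hat u = \sqrt n(\ahth - \th*)$ minimizes the convex function
$$\Psi_n(u) = n\Big[Q(\th* + u/\sqrt n) + \lambda\sum_j w_j|\th*_j + u_j/\sqrt n| - Q(\th*) - \lambda\sum_j w_j|\th*_j|\Big].$$
Because $Q$ is quadratic, its Taylor expansion is exact, and the loss part of $\Psi_n$ equals $\sqrt n\,\nabla Q(\th*)'u + \tfrac12 u'(X'X/n)u$. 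By the law of large numbers $X'X/n \to_p H$, and by the $U$-statistic CLT applied to $A$ (as in Lemma \ref{asym}, via \eqref{derivQu}), $\sqrt n\,\nabla Q(\th*) \to_d \zeta$, a centered Gaussian with covariance $D$ (the same limiting law as in Lemma \ref{pomoc}); the $O(1/n)$ discrepancy between centering at $\th*$ and at $\t0$ is negligible after $\sqrt n$ scaling.

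The penalty term is where the two rate conditions enter, and I would split it according to $T$. For $j\in T$ the sign of $\th*_j + u_j/\sqrt n$ stabilizes to $\sign(\th*_j)$, so the $j$-th summand equals $\sqrt n\lambda w_j\,u_j\,\sign(\th*_j)$; since $\hbeta_j \to_p \th*_j$ gives $w_j\to_p|\th*_j|^{-1}$ and $\sqrt n\lambda\to c$, this converges to $c\,u_j/\th*_j$, so the relevant block contributes $\bar c' u_T$ with $\bar c = c(1/\th*_1,\ldots,1/\th*_{p_0})$. For $j\notin T$ one has $\th*_j=0$ and the summand is $\sqrt n\lambda w_j|u_j| = (n\lambda/|\sqrt n\hbeta_j|)\,|u_j|$; since $n\lambda\to\infty$ while $\sqrt n\hbeta_j=O_P(1)$ by \eqref{rootn}, this diverges whenever $u_j\neq0$ and is $0$ when $u_j=0$. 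Hence $\Psi_n \to_d \Psi$ in finite-dimensional distribution, where $\Psi(u)=\zeta_T'u_T+\tfrac12 u_T'H_1u_T+\bar c'u_T$ on the subspace $u_{T'}=0$ and $\Psi(u)=+\infty$ otherwise.

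Being a limit of convex functions with a unique minimizer, $\Psi$ governs the limit of $\hat u$ through the epi-convergence argument for convex processes (Geyer, Knight--Fu), the same device invoked for Lemma \ref{pomoc}. Minimizing $\Psi$ over $u_T$ gives $\hat u_T \to_d -H_1^{-1}(\zeta_T+\bar c)$ and $\hat u_{T'}\to_p 0$; since $\zeta_T\sim N(0,D_1)$ and $H_1$ is symmetric, this is exactly the law $N(-H_1^{-1}\bar c,\,H_1^{-1}D_1H_1^{-1})$ claimed in part (b). For part (a), consistency on $T$ is immediate: $\ahth_j=\th*_j+\hat u_j/\sqrt n\to_p\th*_j\neq0$, so $\sign(\ahth_j)=\sign(\th*_j)$ with probability tending to one, and by Theorem \ref{mult_cor} $\sign(\th*)=\sign(\beta)$.

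The main obstacle is to upgrade $\hat u_{T'}\to_p 0$ to the \emph{exact} statement $\ahth_{T'}=0$ required for sign consistency, since distributional convergence of the argmin only delivers smallness, not a genuine zero. I would settle this with the KKT characterization of \eqref{adlas}: if $\ahth_j\neq0$ for some $j\notin T$, then $|\nabla_j Q(\ahth)|=\lambda w_j$. The left side, rescaled, is $\sqrt n\,|\nabla_j Q(\ahth)|=O_P(1)$ (because $\ahth=\th*+O_P(1/\sqrt n)$ and $\sqrt n\,\nabla Q(\th*)$ is asymptotically normal), whereas the right side rescaled is $\sqrt n\lambda w_j = n\lambda/|\sqrt n\hbeta_j|\to_p\infty$. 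This contradiction forces $\Pr(\ahth_j\neq0)\to0$ for each of the finitely many $j\notin T$, completing part (a). This exact-selection step is precisely where $n\lambda\to\infty$ is indispensable, in contrast to the weaker $\sqrt n\lambda\to c$ that only calibrates the asymptotic bias $\bar c$.
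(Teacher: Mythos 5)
Your proposal is correct and follows essentially the same route as the paper's own proof: the rescaled convex objective with exact quadratic expansion, the penalty split over $T$ and $T'$ with limits $c\,u_j/\th*_j$ and $0/\infty$, the epi-convergence device of Geyer/Knight--Fu to pass from the pointwise limit (which takes the value $+\infty$ off the subspace $u_{T'}=0$) to convergence of minimizers, and the KKT contradiction $\sqrt n\,|\nabla_j Q(\ahth)|=O_P(1)$ versus $n\lambda/|\sqrt n\,\hbeta_j|\to_P\infty$ to force exact zeros on $T'$. Your explicit flagging that argmin convergence alone does not yield $\ahth_{T'}=0$, and that $n\lambda\to\infty$ is what drives the exact-selection step, matches precisely the structure of the paper's argument (its equations for $\nabla^j Q(\hth^a)$ and the boundedness via Lemma~\ref{asym}, the LLN and Slutsky's theorem).
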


Now, we introduce the second modification, thresholded RankLasso. This estimator  is denoted by $\thth$ and defined in \eqref{thresh}.

\begin{theorem}
\label{asym_thresh}
Suppose that Assumption \ref{as_main} is satisfied and $\Ex |X_1|^4 < \infty$. If $\sqrt{n} \lambda \rightarrow  0,$ $\delta \rightarrow 0$ and $\sqrt{n} \delta \rightarrow \infty,$ then \\
(a) $\lim\limits_{n \rightarrow \infty} \Pr \left(  \sign (\thth ) = \sign(\beta ) \right)=1,$ where the equality of signs of two vectors is understood componentwise,\\
(b) $\sqrt{n}\left(\thth_{T} - \th*_T \right) \rightarrow_d N \left(0, \left(H^{-1} D H^{-1} \right)_1 \right),$
where  $\theta ^*$ is defined in \eqref{tstar} and $\left(H^{-1} D H^{-1} \right)_1$ is the $(p_0 \times p_0)$ upper-left submatrix of $H^{-1} D H^{-1}.$
\end{theorem}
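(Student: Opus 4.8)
The plan is to reduce everything to the asymptotic normality of the unthresholded RankLasso estimator $\hth$, which is already furnished by Lemma~\ref{pomoc}, and then to analyse the hard-thresholding operation coordinate by coordinate. First I would apply Lemma~\ref{pomoc} with the scaling $a_n = 1/\sqrt{n}$. Under the hypothesis $\sqrt{n}\lambda \to 0$ this gives $b = \lim 1/(a_n\sqrt{n}) = 1$ and $c = \lim \lambda/a_n = 0$, so the limiting objective collapses to $V(\theta) = \tfrac12\theta' H\theta + \theta' W$ with $W \sim N(0,D)$. Since $H$ is positive definite this quadratic form has the unique minimiser $-H^{-1}W$, whence
\[
\sqrt{n}\bigl(\hth - \th*\bigr) \rightarrow_d N\!\bigl(0,\,H^{-1} D H^{-1}\bigr).
\]
In particular $\hth$ is $\sqrt{n}$-consistent, so $\hth \rightarrow_p \th*$ and $\hth_j - \th*_j = O_P(1/\sqrt{n})$ for every $j$.

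For part (a) I would split the coordinates according to membership in $T$. For $j\in T$ we have $\th*_j \neq 0$; since $\hth_j \rightarrow_p \th*_j$ and $\delta\to 0$, the indicator $\indyk(|\hth_j|\ge\delta)$ equals $1$ with probability tending to one, so $\thth_j = \hth_j$ and $\sign(\thth_j) = \sign(\th*_j)$ there. For $j\notin T$ we have $\th*_j = 0$, hence $|\hth_j| = O_P(1/\sqrt n)$; because $\sqrt n\,\delta\to\infty$, the ratio $|\hth_j|/\delta = O_P\bigl(1/(\sqrt n\,\delta)\bigr)\rightarrow_p 0$, so $\indyk(|\hth_j|\ge\delta)=0$ and $\thth_j=0$ with probability tending to one. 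A union bound over the finitely many coordinates then shows that $\thth$ has exactly support $T$ with the correct signs there, with probability tending to one. Finally, Assumption~\ref{as_main} permits invoking Theorem~\ref{mult_cor}: as $F$ and $g$ are increasing, $\gamma_\beta>0$ and $\th* = \tfrac{n}{n-1}\gamma_\beta\beta$ shares its signs with $\beta$, so $\sign(\thth)=\sign(\beta)$ with probability tending to one.

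For part (b) I would condition on the event $E_n = \{\thth_j = \hth_j \text{ for all } j\in T\}$, which by the preceding argument satisfies $\Pr(E_n)\to 1$. On $E_n$ the two estimators coincide on $T$, so $\sqrt n(\thth_T - \th*_T) = \sqrt n(\hth_T - \th*_T)$. Restricting the display above to the coordinates $T=\{1,\dots,p_0\}$ yields $\sqrt n(\hth_T - \th*_T)\rightarrow_d N\bigl(0,(H^{-1}DH^{-1})_1\bigr)$, the upper-left $p_0\times p_0$ block. Since $\Pr(E_n^{c})\to 0$, a Slutsky-type argument (if $U_n\rightarrow_d U$ and $\Pr(U_n=U_n')\to1$ then $U_n'\rightarrow_d U$) transfers this limit to $\sqrt n(\thth_T - \th*_T)$, which is the claim. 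Note that, in contrast with the adaptive estimator of Theorem~\ref{oracle_adap}, no bias term appears here precisely because $c=0$.

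The genuinely analytic content --- the $U$-statistic central limit theorem producing the Gaussian limit $W\sim N(0,D)$ --- is entirely absorbed into Lemma~\ref{pomoc}, so the steps above are largely bookkeeping. The point requiring the most care is the non-support regime of part (a): the conclusion $\thth_{T'}=0$ hinges on the precise balance between $\sqrt n\,\delta\to\infty$ and the $O_P(1/\sqrt n)$ magnitude of $\hth_{T'}$, and it must hold simultaneously for all $j\notin T$. This is why I keep the dimension $p$ fixed and close with a union bound rather than a single-coordinate estimate.
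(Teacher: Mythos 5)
Your proposal is correct and follows essentially the same route as the paper's own proof: invoking Lemma~\ref{pomoc} with $a_n=1/\sqrt{n}$ (so $b=1$, $c=0$) to get $\sqrt{n}(\hth-\th*)\rightarrow_d -H^{-1}W$, then treating coordinates in $T$ and $T'$ separately via $\delta\to 0$ and $\sqrt{n}\,\delta\to\infty$ respectively, using Theorem~\ref{mult_cor} for the signs of $\beta$, and transferring the limit law to $\thth_T$ on the event $\{\thth_T=\hth_T\}$ whose probability tends to one. The only cosmetic difference is that you make the union bound and the Slutsky-type transfer explicit, which the paper leaves implicit.
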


Theorems \ref{oracle_adap} and  \ref{asym_thresh} state that   weighted and thresholded RankLasso behave almost like the oracle. They are asymptotically able to identify the support and recognize the signs of coordinates of the true parameter $\beta$.  Moreover, they estimate nonzero coordinates of $\theta ^*$ with the standard $\sqrt{n}$-rate. The crucial fact is that these theorems hold even when the irrepresentable condition is not satisfied. Thus, both modifications of RankLasso allow to identify the true model under much weaker assumptions than  vanilla RankLasso.


Theorems \ref{oracle_adap} and \ref{asym_thresh} work in the single index model \eqref{model} and they do not require  any assumptions on the distribution of the noise variables or the form of the increasing link function $g$.
 Comparing to other theoretical results concerning model selection with the robust loss functions, like \cite[Theorem]{WangLiJiang2007}, \cite[Theorem 2.1]{JohnsonPeng08},
\cite[Theorem 4.2]{songma:10}, \cite[Theorem 4.1]{Rejchel:17}, \cite[Theorem 2]{MedinaRon18},
 the assumptions of Theorems \ref{oracle_adap} and \ref{asym_thresh} are slightly stronger. Specifically, in  Theorems \ref{oracle_adap} and \ref{asym_thresh} the standard condition on the existence of the second moment of predictors is replaced by the assumption on the existence of the fourth moment.    This results from the fact that we work with the nonlinear model and the quadratic loss function. Apart from computational efficiency, application of the quadratic loss function allows us to solve the theoretical issues related to the dependency between ranks. The stronger assumption on the moments of predictors seems to be a relatively small prize for the gain in computational complexity, which allows to handle  large data sets. Moreover, according to the simulation study reported in Section \ref{numerical} of the main paper, for such large data sets  our method has substantially better statistical properties than LADLasso, which is a popular methodology for robust model selection.


\section{Results from Subsection \ref{unknown_link} of the main paper}
\label{main_proofs1}

Notice that for $Q(\theta)$ defined in \eqref{Q_rand} we have
$$
Q(\theta) = \frac{1}{2n} \sum_{i=1}^n \left(R_i/n   - \theta ' X_i \right)^2 + \theta ' \bar{X}/2-\frac{n+1}{4n} +1/8.
$$
Therefore, due to the fact that predictors $X_i$ are centred we will consider $Q(\theta)$ without subtracting $0.5, $ that is 
$$
Q(\theta) = \frac{1}{2n} \sum_{i=1}^n \left(R_i/n   - \theta ' X_i \right)^2 
$$
 in all proofs in this appendix. It will simplify notations.

\begin{proof}[Proof of Theorem \ref{mult_cor}]
We start with proving the first part of the theorem.
Argumentation is similar to the proof of \cite[Theorem 2.1]{LiDuan:89}, but it has to be adjusted to ranks which are not independent random variables (as distinct from $Y_1, \ldots, Y_n).$
Obviously, we have
$$
\Ex Q(\theta) = \frac{1}{2n^3} \sum_{i=1}^n \Ex R_i^2  -
 \frac{1}{n^2} \sum_{i=1}^n \Ex R_i \theta ' X_i +
 \frac{1}{2n} \sum_{i=1}^n \Ex \left(\theta ' X_i \right)^2.
$$
Vectors $(X_1, Y_1), \ldots, (X_n, Y_n) $ are i.i.d. and $X_i$ are centred, so  for all $i\neq 1$
\begin{eqnarray*}
&\,&\Ex R_{i}\theta ' X_i =
 \Ex \indyk (Y_1 \leq Y_i) \theta ' X_i + \sum_{j \neq \{1,i\}} \Ex \indyk (Y_j \leq Y_i) \theta ' X_i\\
&=& \Ex \indyk (Y_i \leq Y_1) \theta ' X_1 + \sum_{j \neq \{1,i\}} \Ex \indyk (Y_j \leq Y_1) \theta ' X_1 = \Ex R_1 \theta ' X_1.
\end{eqnarray*}
Moreover, ranks $R_1, \ldots, R_n$ have the same distribution, so $\sum_{i=1}^n \Ex R_i^2=n\Ex R_1^2.$ Therefore, we obtain that $\Ex Q (\theta) = \frac{1}{2}  \Ex \left(\frac{R_1}{n}  - \theta ' X_1 \right)^2.$ Using Jensen's inequality and Assumption \ref{as2}  we have
\begin{eqnarray*}
&\,&\Ex Q (\theta) = \frac{1}{2} \Ex \Ex \left[ \left(\frac{R_1}{n}  - \theta ' X_1 \right)^2
| \beta ' X_i, \varepsilon_i, i=1,\ldots, n\right]\\
&\geq& \frac{1}{2} \Ex \left[ \Ex \left( \frac{R_1}{n}  - \theta ' X_1
| \beta ' X_i, \varepsilon_i, i=1,\ldots, n\right) \right]^2 =
\frac{1}{2} \Ex \left[  \frac{R_1}{n}  - \Ex \left(\theta ' X_1
| \beta ' X_1\right) \right]^2\\
&=& \frac{1}{2} \Ex \left(  \frac{R_1}{n}  - d_\theta \beta ' X_1  \right)^2 \geq \min_{d \in \mathbb{R}} \Ex Q (d \beta).
\end{eqnarray*}
Obviously, we have $\min_d \Ex Q (d \beta) = \Ex Q(\gamma_\beta \beta),$ where $\gamma_\beta$ is defined in \eqref{gamma}.
Since $\theta^0$ is the unique minimizer of $\Ex Q(\theta)$, we obtain the first part of the theorem.

Next, we establish the second part of the theorem. Denote $Z=\beta ' X_1$ and $\varepsilon = \varepsilon _1.$ It is clear that $\gamma _\beta>0$ is equivalent to $Cov(Z, F(g(Z,\varepsilon)))>0.$ This covariance can be expressed as
\begin{equation}
\label{m1}
\Ex Z F(g(Z,\varepsilon)) = \Ex h(\varepsilon),
\end{equation}
where $h(a) = \Ex \left[Z F(g(Z,\varepsilon))|\varepsilon=a \right]= \Ex Z F(g(Z,a))$ for arbitrary $a.$ This fact simply follows from $\Ex Z =0$ and independence between $Z$ and $\varepsilon.$  If $F$ is increasing and $g$ is increasing with respect to the first variable, then $h(a)>0$ for arbitrary $a$ by Lemma \ref{covar_lemma} given below. Obviously, it implies that \eqref{m1} is positive.
\end{proof}

The following result was used in the proof of Theorem \ref{mult_cor}.  It is a simple and convenient adaptation of a well-known fact concerning covariance of nondecreasing functions \citep{Thorisson95}. Its proof follows \cite[Lemma A.44]{Kubkowski:19}.

\begin{lemma}
\label{covar_lemma}
Let U be a random variable that is not concentrated at one point, i.e. $P(U=u)<1$ for each $u \in \mathbb{R}.$ Moreover, let $f,h:\mathbb{R} \rightarrow \mathbb{R}$ be increasing functions. Then $Cov(f(U),h(U)) >0.$
\end{lemma}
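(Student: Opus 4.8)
The plan is to use the classical symmetrization (association) argument for monotone functions of a single random variable. First I would introduce two independent copies $U_1, U_2$ of $U$, i.e.\ $U_1, U_2$ i.i.d.\ with the same law as $U$, and consider the product
$$
\Delta = \left( f(U_1) - f(U_2) \right)\left( h(U_1) - h(U_2) \right).
$$
Because $f$ and $h$ are both increasing, the two factors $f(U_1)-f(U_2)$ and $h(U_1)-h(U_2)$ always carry the same sign: both are nonnegative when $U_1 \geq U_2$ and both nonpositive when $U_1 \leq U_2$. Hence $\Delta \geq 0$ surely.

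Next I would take expectations. Expanding the product and using independence of $U_1, U_2$ together with the fact that each has the law of $U$, the cross terms factor and one obtains
$$
\Ex \Delta = 2 \left[ \Ex f(U) h(U) - \Ex f(U)\, \Ex h(U) \right] = 2\, Cov(f(U), h(U)).
$$
Since $\Delta \geq 0$, this already yields $Cov(f(U),h(U)) \geq 0$, and the remaining task is to upgrade this to a strict inequality.

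For strictness I would argue that $\Delta$ is strictly positive on an event of positive probability. On the event $\{U_1 \neq U_2\}$ the strict monotonicity of $f$ and $h$ (as the term \emph{increasing} is used here) forces both factors to be nonzero and of equal sign, so $\Delta > 0$ there. It therefore suffices to check $\Pr(U_1 \neq U_2) > 0$. Writing $\Pr(U_1 = U_2) = \Ex[\Pr(U_1 = U_2 \mid U_2)]$ and noting that, given $U_2 = u$, the hypothesis gives $\Pr(U_1 = u) = \Pr(U = u) < 1$, we conclude $\Pr(U_1 = U_2) < 1$, i.e.\ $\Pr(U_1 \neq U_2) > 0$. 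Combining $\Delta \geq 0$ everywhere with $\Delta > 0$ on a set of positive probability yields $\Ex \Delta > 0$, and the identity above gives $Cov(f(U),h(U)) > 0$.

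The main obstacle is precisely this final strictness step, which is the only place where the non-degeneracy hypothesis and the strictness of the monotonicity are genuinely used: nonnegativity of $\Delta$ is automatic, but to exclude $Cov = 0$ one must simultaneously guarantee that the two factors do not vanish (strict monotonicity) and that $U_1 \neq U_2$ occurs with positive probability (non-degeneracy of $U$). I would also assume implicitly, as is needed for the statement to make sense, that $f(U)$, $h(U)$ and $f(U)h(U)$ are integrable so that the covariance is well defined; in the application to Theorem \ref{mult_cor} this is immediate because $F$ is bounded.
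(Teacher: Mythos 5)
Your proof is correct and follows essentially the same route as the paper's: both symmetrize with an independent copy of $U$, use monotonicity to get nonnegativity of the product of differences, expand to identify the expectation with $2\,Cov(f(U),h(U))$, and invoke non-degeneracy of $U$ for strictness. The only difference is that you spell out, via conditioning, why $\Pr(U_1 \neq U_2)>0$, a fact the paper simply asserts.
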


\begin{proof}
For all real $a \neq b$ we have $[f(a)-f(b)][h(a)-h(b)] > 0,$ because $f,h$ are increasing. Let $V$ be an independent copy of $U.$ Then $P(U \neq V) >0$ and we obtain
\begin{eqnarray*}
0&<& \Ex [f(U)-f(V)][h(U)-h(V)] \,  \indyk (U \neq V) \\
&=& \Ex [f(U)-f(V)][h(U)-h(V)]\\
&=& 2 \Ex f(U) h(U) - 2 \Ex f(U) \, \Ex h(U)\\
&=& 2 Cov(f(U), h(U)).
\end{eqnarray*}
\end{proof}

\section{Results from Subsection \ref{high_sec} of the main paper}
\label{main_proofs2}

To prove Theorem \ref{high} we need three auxiliary results: Lemma \ref{geer_lemma}, Lemma \ref{termU} and Lemma \ref{termE}. The first one is borrowed from \cite[Corollary 8.2]{geer_sparsity:16}, while the second one is its adaptation to $U$-statistics.

\begin{lemma}
\label{geer_lemma}
Suppose that $Z_1, \ldots, Z_n$ are i.i.d. random variables and there exists $L>0$ such that $C^2= \Ex \exp\left( |Z_1|/L
\right)$ is finite. Then for arbitrary $u>0$
$$
P\left( \frac{1}{n} \sum _{i=1}^n (Z_i - \Ex Z_i) > 2L \left(
 C \sqrt{\frac{2u}{n}} + \frac{u}{n}
\right)
\right) \leq \exp(-u).
$$
\end{lemma}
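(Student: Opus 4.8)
The statement is a Bernstein-type concentration inequality for i.i.d.\ variables whose absolute value has a finite exponential moment, so the plan is the classical Chernoff--moment argument. First I would invoke the exponential Markov inequality together with independence: for every $t>0$,
\[
P\left( \sum_{i=1}^n (Z_i - \Ex Z_i) > s \right) \leq e^{-ts} \prod_{i=1}^n \Ex \exp\bigl(t(Z_i - \Ex Z_i)\bigr) = e^{-ts} \left[\Ex \exp\bigl(t(Z_1 - \Ex Z_1)\bigr)\right]^n,
\]
so everything reduces to controlling the cumulant generating function of a single centred variable.

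Second, I would convert the hypothesis $\Ex \exp(|Z_1|/L) = C^2$ into moment bounds. Expanding the exponential and using that every term of the resulting series is nonnegative gives $\Ex |Z_1|^k \leq k!\, C^2 L^k$ for every $k \geq 1$. Passing to the centred variable $W = Z_1 - \Ex Z_1$ via the triangle inequality in $L^k$ (together with $\|Z_1\|_1 \leq \|Z_1\|_k$) yields $\Ex |W|^k \leq 2^k k!\, C^2 L^k$, which is exactly the Bernstein moment condition $\Ex|W|^k \leq \tfrac{k!}{2}\, \nu\, b^{k-2}$ with a variance proxy $\nu$ of order $C^2 L^2$ and a scale $b$ of order $L$. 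Summing the Taylor series of the moment generating function then produces the sub-gamma bound
\[
\log \Ex \exp(tW) \leq \frac{\nu t^2}{2(1 - b t)}, \qquad 0 \leq t < 1/b.
\]

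Third, I would substitute this into the Chernoff bound above, optimise over $t$ (equivalently, apply the standard inversion of a sub-gamma tail), and set the resulting exponent equal to $-u$. The two-regime structure of the sub-gamma bound produces precisely the additive form $\sqrt{2\nu u/n} + b u/n$ for the deviation of the average $\tfrac1n\sum_i(Z_i-\Ex Z_i)$, which, after identifying $\nu$ and $b$ with the quantities found above, collapses to the stated bound $2L\bigl(C\sqrt{2u/n} + u/n\bigr)$.

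The main obstacle is purely the constant bookkeeping in the second step: the effect of centering, the factor arising from $\|W\|_k \leq 2\|Z_1\|_k$, and the use of $C \geq 1$ (which holds since $\Ex\exp(|Z_1|/L)\geq 1$) must all be tracked so that the variance proxy emerges as $C$ under the square root and the scale as $L$ in the linear term. Since the result is quoted verbatim from \cite[Corollary 8.2]{geer_sparsity:16}, in the paper itself I would simply cite that source rather than reproduce the estimate.
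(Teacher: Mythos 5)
You are right that the paper itself contains no proof of this lemma: it is quoted verbatim from van de Geer, and the text introducing it says exactly that it is ``borrowed from [Corollary 8.2]''. So the only question is whether your sketch would actually establish the stated inequality, and as written it does not, because of the centering step. From $\Ex|Z_1|^k \leq k!\,C^2L^k$ and $\|Z_1-\Ex Z_1\|_k \leq 2\|Z_1\|_k$ you obtain $\Ex|W|^k \leq 2^k k!\,C^2L^k = \tfrac{k!}{2}(2L)^{k-2}\bigl(8C^2L^2\bigr)$, i.e.\ variance proxy $\nu = 8C^2L^2$ and scale $b = 2L$. The sub-gamma inversion then yields the deviation threshold
\begin{equation*}
\sqrt{2\nu u/n} + b u/n \;=\; 4LC\sqrt{u/n} + 2Lu/n,
\end{equation*}
whereas the lemma asserts the tail bound at the \emph{smaller} threshold $2L\bigl(C\sqrt{2u/n}+u/n\bigr) = 2\sqrt{2}\,LC\sqrt{u/n} + 2Lu/n$. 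A bound $P(X > a)\leq e^{-u}$ at a larger threshold $a$ does not imply the same bound at a smaller threshold, so your argument proves a strictly weaker statement: the crude $2^k$ centering costs exactly a factor $\sqrt{2}$ under the square root, and no choice of bookkeeping recovers it along this route.

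The standard repair — and the way the cited Corollary 8.2 is actually derived — is to center at the level of the moment generating function instead of at the level of moments. Since $1+x \leq e^x$,
\begin{equation*}
\Ex e^{t(Z_1-\Ex Z_1)} \;=\; e^{-t\Ex Z_1}\,\Ex e^{tZ_1} \;\leq\; \exp\Bigl(\sum_{m\geq 2}\frac{t^m\,\Ex|Z_1|^m}{m!}\Bigr),
\end{equation*}
so Bernstein's inequality can be run with the \emph{uncentered} moments $\Ex|Z_1|^m \leq m!\,L^mC^2 = \tfrac{m!}{2}L^{m-2}\bigl(2C^2L^2\bigr)$, i.e.\ $\nu = 2C^2L^2$ and $b = L$, with no loss from centering. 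The inversion then gives the threshold $2LC\sqrt{u/n} + Lu/n$, which is dominated by $2L\bigl(C\sqrt{2u/n} + u/n\bigr)$, so the stated inequality follows with room to spare. With this single modification your outline is precisely the proof of the quoted corollary; and, as you note, within the paper itself the citation is all that is required.
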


\begin{lemma}
\label{termU}
Consider a $U$-statistic
$$
U = \frac{1}{n(n-1)} \sum_{i \neq j} h(Z_i, Z_j)
$$ with a kernel $h$ based on i.i.d. random variables $Z_1, \ldots, Z_n.$ Suppose that there exists $L>0$ such that $C^2= \Ex \exp\left( |h(Z_1,Z_2)|/L
\right)$ is finite. Then for arbitrary $u>0$
$$
P\left( U - \Ex U > 2L \left(
 C \sqrt{\frac{6u}{n}} + \frac{3u}{n}
\right)
\right) \leq \exp(-u).
$$
\end{lemma}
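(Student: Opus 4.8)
The plan is to reduce the tail bound for the $U$-statistic to the i.i.d.\ inequality of Lemma \ref{geer_lemma} by means of Hoeffding's averaging trick. Write $m = \lfloor n/2 \rfloor$ and, for a permutation $\pi$ of $\{1,\dots,n\}$, set
$$
V_\pi = \frac{1}{m} \sum_{k=1}^m h\!\left(Z_{\pi(2k-1)}, Z_{\pi(2k)}\right).
$$
First I would record Hoeffding's identity $U = \frac{1}{n!}\sum_{\pi} V_\pi$: for each fixed block index $k$, as $\pi$ runs over all permutations the ordered pair $(\pi(2k-1),\pi(2k))$ hits every ordered pair of distinct indices equally often, so $\frac{1}{n!}\sum_\pi h(Z_{\pi(2k-1)},Z_{\pi(2k)}) = \frac{1}{n(n-1)}\sum_{i\neq j} h(Z_i,Z_j) = U$, and averaging over $k$ gives the claim. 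Note that no symmetrization of $h$ is needed, since $U$ already runs over ordered pairs.

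The decisive structural point is that, for each fixed $\pi$, the summands of $V_\pi$ are based on disjoint pairs of indices; hence they are i.i.d., each distributed as $h(Z_1,Z_2)$, with mean $\Ex h(Z_1,Z_2) = \Ex U$ and with $\Ex \exp(|h(Z_1,Z_2)|/L) = C^2$. Thus each $V_\pi$ is exactly the type of object covered by Lemma \ref{geer_lemma}, applied with $m$ (rather than $n$) i.i.d.\ observations.

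Next I would transfer this from $V := V_{\mathrm{id}}$ to $U$ through the moment generating function. By convexity of the exponential (Jensen applied to the uniform average over $\pi$),
$$
\Ex \exp\!\left(s\,(U - \Ex U)\right) \leq \frac{1}{n!}\sum_\pi \Ex \exp\!\left(s\,(V_\pi - \Ex V_\pi)\right) = \Ex \exp\!\left(s\,(V - \Ex V)\right)
$$
for every $s \geq 0$, where the last equality uses that all $V_\pi$ are identically distributed (by exchangeability of the $Z_i$) and $\Ex V_\pi = \Ex U$. The tail bound of Lemma \ref{geer_lemma} is obtained by the Chernoff method, i.e.\ from a bound on the moment generating function followed by Markov's inequality; consequently the same choice of $s$ that proves the inequality for $V$ also controls the tail of $U$, yielding $P(U - \Ex U > t) \leq e^{-u}$ with $t = 2L\big(C\sqrt{2u/m} + u/m\big)$. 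Finally, since $m = \lfloor n/2\rfloor \geq n/3$ for every $n \geq 2$, we have $1/m \leq 3/n$, whence $t \leq 2L\big(C\sqrt{6u/n} + 3u/n\big)$, which is exactly the asserted bound.

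The step I expect to require the most care is the transfer from $V$ to $U$: the convexity inequality only dominates the moment generating function, so one must invoke the fact that Lemma \ref{geer_lemma} is a Chernoff bound (a statement about the m.g.f.) in order to conclude a matching tail bound for $U$. A purely black-box application of Lemma \ref{geer_lemma} to a single $V_\pi$ would not suffice, because $U$ averages over all partitions rather than equalling any one of them; it is precisely the m.g.f.\ domination that bridges this gap. The constants $\sqrt 6$ and $3$ are then a book-keeping consequence of replacing the effective sample size $m$ by the lower bound $n/3$.
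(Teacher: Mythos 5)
Your proposal is correct and follows essentially the same route as the paper's proof: Hoeffding's averaging of the $U$-statistic into block sums of i.i.d.\ variables over disjoint index pairs, Jensen's inequality to dominate the moment generating function by that of a single block average, the Chernoff argument of Lemma \ref{geer_lemma} with effective sample size $\lfloor n/2\rfloor$, and finally $\lfloor n/2\rfloor \geq n/3$ to get the stated constants. The only differences are cosmetic (you pair indices as $(\pi(2k-1),\pi(2k))$ rather than $(\pi(i),\pi(N+i))$, and you make explicit the m.g.f.\ transfer that the paper summarizes as ``repeat the argumentation from the proof of Corollary 8.2'').
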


\begin{proof}
Let $g(z_1,z_2)= h(z_1,z_2) - \Ex h(Z_1,Z_2) $ and $\tilde{U}$ be a $U$-statistic with a kernel $g.$ Using Hoeffding's decomposition  we can represent every $U$-statistic as an
average of (dependent) averages of independent random variables \citep{serf:80}, i.e.
\begin{equation}
\label{Hdec}
\tilde{U} =
\frac{1}{n!} \sum_\pi \frac{1}{N} \sum_{i=1}^N
 g \left(Z_{\pi(i)}, Z_{\pi(N+i)} \right),
\end{equation}
where $N= \left\lfloor \frac{n}{2} \right\rfloor$ and  the first sum  on the right-hand side of \eqref{Hdec} is taken over all permutations $\pi$ of a set $\{1, \ldots, n\}.$
Take arbitrary $s>0.$ Then using Jensen's inequality and the fact that $Z_1, \ldots, Z_n$ are i.i.d.  we obtain
\begin{eqnarray}
\label{after_H}
\Ex \exp (s \tilde{U}) &\leq& \frac{1}{n!} \sum_\pi \Ex \exp \left[
\frac{s}{N} \sum_{i=1}^N
 g \left(Z_{\pi(i)}, Z_{\pi(N+i)} \right)
\right] \nonumber \\
&=& \Ex \exp \left[
\frac{s}{N} \sum_{i=1}^N
 g \left(Z_i, Z_{N+i} \right) \right].
\end{eqnarray}
We have average of $N$-i.i.d. random variables in \eqref{after_H}, so we can repeat argumentation from the proof of \cite[Corollary 8.2]{geer_sparsity:16}. Finally, we use a simple inequality $N\geq n/3$ for $n\geq 2.$
\end{proof}

\begin{lemma}
\label{termE}
Suppose Assumptions \ref{as_main} and \ref{as_sub} are satisfied. For arbitrary $j =1, \ldots,p$ and $u>0$ we have
\begin{equation}
\label{pro_termE}
P \left(\frac{1}{n}\sum_{i=1}^n X_{ij} X_i '\theta ^0 -\frac{n-1}{n}\mu_j > 5
\frac{\tau ^2}{ \sqrt{\kappa}}\left(
 2 \sqrt{\frac{2u}{n}} + \frac{u}{n}
\right)
\right)\leq \exp(-u).
\end{equation}
Besides, if $X_1$ has a normal distribution $N(0,H),$ then
we can  drop $\tau $ and $\kappa $ in \eqref{pro_termE}.
\end{lemma}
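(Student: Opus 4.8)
The plan is to recognise the left-hand side of \eqref{pro_termE} as a centred average of i.i.d.\ random variables and then to invoke the sub-exponential deviation bound of Lemma~\ref{geer_lemma}. Set $Z_i = X_{ij}X_i'\t0$ for $i=1,\dots,n$; these are i.i.d.\ since $(X_1,\varepsilon_1),\dots,(X_n,\varepsilon_n)$ are. Their common mean is $\Ex Z_1=\Ex[X_{1j}X_1']\t0=(H\t0)_j$, and by \eqref{theta0_alt} we have $H\t0=\frac{n-1}{n}\mu$, so $\Ex Z_1=\frac{n-1}{n}\mu_j$. Hence the quantity inside the probability in \eqref{pro_termE} is exactly $\frac1n\sum_{i=1}^n(Z_i-\Ex Z_i)$, and it suffices to exhibit a constant $L>0$ with $C^2=\Ex\exp(|Z_1|/L)$ finite and then to match the resulting Lemma~\ref{geer_lemma} threshold $2L(C\sqrt{2u/n}+u/n)$ against the target $5\tfrac{\tau^2}{\sqrt\kappa}(2\sqrt{2u/n}+u/n)$.

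The core of the argument is to show that $Z_1$, a product of two correlated sub-gaussian factors, is sub-exponential with the right scale. Under Assumption~\ref{as_main} Theorem~\ref{mult_cor} gives $\t0=\gamma_\beta\beta$, so $\t0$ is supported on $T$ and $X_1'\t0=(X_1)_T'\t0_T$. By the sub-gaussianity of $(X_1)_T$ in Assumption~\ref{as_sub}, the linear form $X_1'\t0$ is sub-gaussian with parameter $\tau_0|\t0_T|_2\le\tau|\t0_T|_2$, while $X_{1j}$ is sub-gaussian with parameter at most $\tau$ (as a coordinate of $(X_1)_T$ when $j\in T$, or directly when $j\notin T$). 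The decisive estimate is a bound on $|\t0_T|_2$ that is free of $p_0$: writing $\mu_j=\mathrm{Cov}(F(Y_1),X_{1j})$ and using $\mathrm{Var}(F(Y_1))\le 1/12$ together with $H_{jj}=1$, a Cauchy--Schwarz argument (as in \eqref{gamma}) gives $\mathrm{Var}(X_1'\t0)=\gamma_\beta^2\beta'H\beta\le 1/12$; since $\mathrm{Var}(X_1'\t0)=\t0_T'H_T\t0_T\ge\kappa|\t0_T|_2^2$, we conclude $|\t0_T|_2\le 1/\sqrt{12\kappa}$. Consequently $X_1'\t0$ is sub-gaussian with parameter at most $\tau/\sqrt{12\kappa}$.

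It then remains to turn the product of a $\tau$-sub-gaussian and a $(\tau/\sqrt{12\kappa})$-sub-gaussian variable into an mgf estimate. Using the elementary inequality $|UV|\le\frac12(U^2\sigma_2/\sigma_1+V^2\sigma_1/\sigma_2)$ followed by Cauchy--Schwarz splits $\Ex\exp(|Z_1|/L)$ into the product of two moment generating functions of the squared sub-gaussian factors, each bounded by $2$ once $L$ is proportional to $\tau\cdot(\tau/\sqrt{12\kappa})=\tau^2/\sqrt{12\kappa}$. Choosing $L=\tfrac{5}{2}\,\tau^2/\sqrt\kappa$ makes $C=\sqrt{\Ex\exp(|Z_1|/L)}\le 2$, so that $2L=5\tau^2/\sqrt\kappa$ and $2LC\le 10\tau^2/\sqrt\kappa$; plugging these into Lemma~\ref{geer_lemma} yields precisely \eqref{pro_termE}. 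For the Gaussian case $X_1\sim N(0,H)$ the same scheme applies, but now a linear form $a'X_1$ is sub-gaussian with parameter equal to its standard deviation $\sqrt{a'Ha}$; thus $X_{1j}$ has parameter $1$ and $X_1'\t0$ has parameter $\sqrt{\mathrm{Var}(X_1'\t0)}\le 1/\sqrt{12}$ directly, bypassing both the sub-gaussian constant $\tau$ and the eigenvalue $\kappa$, so $L$ may be taken universal and both $\tau$ and $\kappa$ drop out.

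The step I expect to be the main obstacle is the $p_0$-free control of $|\t0_T|_2$: the naive bound $|\t0_T|_2\le\kappa^{-1}|\mu_T|_2\le\kappa^{-1}\sqrt{p_0/12}$ is too weak and would introduce an unwanted $\sqrt{p_0}$ factor together with the wrong power of $\kappa$. The correct route is to bound the \emph{variance} $\mathrm{Var}(X_1'\t0)$ by the universal constant $1/12$ through the covariance representation of $\mu$, and only then to convert to an $\ell_2$ bound on $\t0_T$ via the smallest eigenvalue $\kappa$ of $H_T$; getting this order of operations right is what produces the advertised $\tau^2/\sqrt\kappa$ scale and, together with careful constant bookkeeping in the product sub-gaussian mgf, the explicit constant $5$.
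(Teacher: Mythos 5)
Your proposal is correct and matches the paper's own argument in all essentials: the same reduction to Lemma~\ref{geer_lemma} via $H\t0=\frac{n-1}{n}\mu$, the same decisive $p_0$-free bound on $|\t0|_2$ obtained by Cauchy--Schwarz on the covariance representation of $\gamma_\beta$ followed by conversion through the smallest eigenvalue $\kappa$ of $H_T$ (the paper uses $Var(F(Y_1))\le 1$ to get $|\t0|_2^2\le\kappa^{-1}$ where you use $1/12$, a difference absorbed by the constant), and the same AM--GM plus Cauchy--Schwarz splitting of the product mgf, with $L=2.5\,\tau^2/\sqrt{\kappa}$ in place of the paper's $L=2.2\,\tau^2/\sqrt{\kappa}$, both giving $C\le 2$ and the stated constant $5$. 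The Gaussian refinement is also handled exactly as in the paper, via exact normal mgfs and $(\t0)'H\t0$ bounded by a universal constant.
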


\begin{proof}
Fix $j =1, \ldots,p$ and $u>0.$ Recall that $H \theta^0 = \frac{n-1}{n} \mu $ by \eqref{theta0_alt}.
We work with an average of i.i.d. random variables, so  we can use Lemma \ref{geer_lemma}. We only have to find
$L, C>0$  such  that
$$
 \Ex \exp \left(
|X_{1j} X_1 '\theta ^0|/L
\right) \leq C^2.
$$
For each positive number $a,b,s$ we have the inequality $ab \leq \frac{a^2}{2s^2} + \frac{b^2s^2}{2}.$ Applying this fact and the Schwarz inequality we obtain
\begin{equation}
\label{eq2}
\Ex \exp \left(
|X_{1j} X_1 '\theta ^0|/L
\right) \leq \sqrt{\Ex \exp\left(\frac{X_{1j}^2}{s^2L}
\right) \Ex \exp\left(\frac{s^2 (X_1'\theta^0)^2}{L}
\right)}
\end{equation}
and the number $s$ will be chosen later. The variable $X_{1j}$ is subgaussian, so using \cite[Lemma 7.4]{baraniuk:11} we can bound the first expectation in \eqref{eq2} by $\left( 1- \frac{2 \tau ^2}{s^2L}\right)^{-1/2}$ provided that $s^2 L > 2 \tau ^2.$ The second expectation in \eqref{eq2} can be bounded using subgaussianity of the vector $X_1$ in the following way
$$
\Ex \exp\left(\frac{s^2 (X_1'\theta^0)^2  }{L}\right)  \leq
\left( 1- \frac{2 s^2 \tau ^2 |\theta^0|_2^2}{L}\right)^{-1/2},
$$
 provided that $2 s^2  \tau ^2 |\theta^0|_2^2< L.$
From Theorem \ref{mult_cor} we know that  $\theta^0 = \gamma_\beta \beta$ and
$\gamma _\beta =
\frac{\frac{n-1}{n} \, \Ex \,\indyk(Y_2 \leq Y_1) \beta ' X_1}{\beta ' H \beta}.
$
Recall that $\kappa $ is is the smallest eigenvalue of the matrix $H_T.$ Therefore,  we obtain a bound
$$
|\theta^0|_2^2 = \gamma_\beta ^2 |\beta_T|_2^2 \leq \kappa^{-1},
$$
because
$$
 \Ex \,\indyk(Y_2 \leq Y_1) \beta ' X_1 \leq  \sqrt{\beta_T ' H_T \beta_T}.
$$
Taking $L=2.2 \tau ^2 /\sqrt{\kappa}$ and $s^2= \sqrt{\kappa}$ we obtain $C \leq 2$ that finishes the proof of the first part of the lemma.

 Next, we assume that $X_1 \sim N(0,H).$ Therefore, $X_{1j} \sim N(0,1)$ and $(\theta^0)'X_1 \sim
N(0, (\theta^0)'H\theta^0).$ The argumentation is as above with $s^2=1.$  We only use the inequality $  (\theta^0)'H\theta^0 \leq 1$ and  the equality
$$
\Ex \exp \left((X_1'\theta^0)^2/L
\right)= \left(1-2 (\theta^0)'H\theta^0/L\right)^{-1/2},
$$
provided that $L> 2 (\theta^0)'H\theta^0.$ Therefore, we can take  $L=2.2.$

\end{proof}

\begin{lemma}
\label{infty}
Suppose that Assumption \ref{as_sub} and (\ref{n_cond}) are satisfied. Then for arbitrary $a \in (0,1), q \geq 1, \xi >1$ with probability at least $1-2a$ we have $\bfac \geq \fac /2.$
\end{lemma}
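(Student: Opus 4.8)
The plan is to show that the random and population cone invertibility factors are uniformly close by controlling the entrywise deviation of $X'X/n$ from its mean $H=\Ex X_1 X_1'$. Set $\Delta = \max_{1\leq j,k \leq p} |(X'X/n - H)_{jk}|$. The first step is deterministic and reduces the lemma to a bound on $\Delta$. For every $\theta$ we have $|X'X\theta/n|_\infty \geq |H\theta|_\infty - |(X'X/n - H)\theta|_\infty$, and the perturbation satisfies $|(X'X/n - H)\theta|_\infty \leq \Delta\,|\theta|_1$. Restricting to $\theta \in \cone$ and using $|\theta|_1 = |\theta_T|_1 + |\theta_{T'}|_1 \leq (1+\xi)|\theta_T|_1 \leq (1+\xi)\,p_0^{1-1/q}\,|\theta|_q$ (the last inequality by H\"older, since $\theta_T$ has at most $p_0$ nonzero coordinates), I would obtain, for all $0\neq\theta\in\cone$,
$$
\frac{p_0^{1/q}|X'X\theta/n|_\infty}{|\theta|_q} \;\geq\; \frac{p_0^{1/q}|H\theta|_\infty}{|\theta|_q} - (1+\xi)\,p_0\,\Delta \;\geq\; \fac - (1+\xi)\,p_0\,\Delta .
$$
Taking the infimum over the cone (the right-hand side being independent of $\theta$) yields the deterministic bound $\bfac \geq \fac - (1+\xi)\,p_0\,\Delta$. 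Hence it suffices to show that, with probability at least $1-2a$, one has $(1+\xi)\,p_0\,\Delta \leq \fac/2$.

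The second step is a concentration bound on $\Delta$. Each entry $(X'X/n - H)_{jk} = \frac1n\sum_{i=1}^n (X_{ij}X_{ik} - H_{jk})$ is an average of i.i.d. centred variables, so I would apply Lemma \ref{geer_lemma} with $Z_i = X_{ij}X_{ik}$ (and with $-Z_i$ for the lower tail). The exponential-moment hypothesis follows from Assumption \ref{as_sub}: using $|X_{1j}X_{1k}| \leq (X_{1j}^2 + X_{1k}^2)/2$, the Schwarz inequality, and the bound on $\Ex\exp(X_{1j}^2/L)$ for subgaussian coordinates exactly as in the proof of Lemma \ref{termE}, one finds constants with $L \asymp \tau^2$ and $C\leq 2$ such that $\Ex\exp(|X_{1j}X_{1k}|/L)\leq C^2$. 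A union bound over the at most $p^2$ entries with $u = \log(p^2/a) \leq 2\log(p/a)$ then gives, with probability at least $1-2a$,
$$
\Delta \;\leq\; 2L\left( C\sqrt{\frac{2u}{n}} + \frac{u}{n}\right) \;\leq\; K\,\tau^2\sqrt{\frac{\log(p/a)}{n}}
$$
for a universal constant $K$, the square-root term dominating under \eqref{n_cond}.

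Combining the two steps, the requirement $(1+\xi)\,p_0\,\Delta \leq \fac/2$ holds once $K(1+\xi)\,p_0\,\tau^2\sqrt{\log(p/a)/n} \leq \fac/2$, i.e. $n \geq 4K^2 p_0^2 \tau^4 (1+\xi)^2 \log(p/a)/\fac^2$, which is precisely \eqref{n_cond} for a suitable choice of the universal constant $K_1$. On that event $\bfac \geq \fac/2$, as claimed. I expect the main obstacle to be the second step, namely pinning down the exponential-moment constants for the products $X_{1j}X_{1k}$ so that the deviation scales like $\tau^2\sqrt{\log(p/a)/n}$; it is this scaling, after squaring, that produces the factor $p_0^2\tau^4(1+\xi)^2$ in the sample-size condition \eqref{n_cond} and makes the whole argument consistent.
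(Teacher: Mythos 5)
Your proposal is correct and follows essentially the same route as the paper: entrywise concentration of $X'X/n$ around $H$ via the subgaussian exponential-moment bound and Lemma \ref{geer_lemma} with a union bound over the $p^2$ entries, followed by the cone perturbation inequality $\bar F_q(\xi) \geq F_q(\xi) - (1+\xi)p_0\Delta$ and the sample-size condition \eqref{n_cond}. The only difference is that you prove the deterministic perturbation step explicitly (via $|(X'X/n-H)\theta|_\infty \leq \Delta|\theta|_1$ and H\"older on the cone), whereas the paper outsources exactly this step to the cited Lemma 4.1 of \cite{Cox13}.
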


\begin{proof}

Fix $a \in (0,1),q \geq 1, \xi >1.$ We start with considering the $l_\infty$-norm of the matrix
$$
\left|\frac{1}{n} X' X - \Ex X_1 X_1 ' \right| _\infty = \max_{j,k=1,\ldots,p}
\left|\frac{1}{n} \sum_{i=1}^n X_{ij} X_{ik} - \Ex X_{1j} X_{1k}  \right|  .
$$
Fix $j,k \in \{1,\ldots, p\}. $  Using subgaussianity of predictors, Lemma \ref{geer_lemma} and argumentation similar to the  proof of Lemma \ref{termE} we have for $u = \log(p^2/a)$
$$
P\left(\left|\frac{1}{n} \sum_{i=1}^n X_{ij} X_{ik} - \Ex X_{1j} X_{1k}  \right|
> K_2 \tau^2 \sqrt{\frac{\log(p^2/a)}{n}}
\right) \leq \frac{2a}{p^2}\:,
$$
where $K_2$ is an universal constant. The values of constants $K_i$ that appear in this proof can change from line to line.

Therefore, using union bounds we obtain
$$
P\left(\left|\frac{1}{n}  X' X - \Ex X_{1} X_{1} ' \right|_\infty
> K_2 \tau^2 \sqrt{\frac{\log(p^2/a)}{n}}
\right) \leq 2a.
$$
Proceeding similarly to the proof of \cite[Lemma 4.1]{Cox13} we have the following probabilistic inequality
$$
\bfac \geq \fac - K_2 (1+\xi) p_0  \tau^2 \sqrt{\frac{\log(p^2/a)}{n}} \:.
$$
To finish the proof we use \eqref{n_cond} with $K_1$ being sufficiently large.
\end{proof}

\begin{proof}[Proof of Theorem \ref{high}]

Let $a \in (0,1)$ be arbitrary. The main part of the proof is to show that with high probability
\begin{equation}
\label{claim_rand}
|\hth - \theta^0|_q \leq\frac{2\xi p_0^{1/q}  \lambda}{(\xi +1)\bfac}\:.
\end{equation}
Then we  apply Lemma \ref{infty} to obtain  \eqref{high_estim}.

Thus, we focus on proving \eqref{claim_rand}. Denote $\Omega = \{|\nabla Q(\theta^0)|_\infty\leq \frac{\xi-1}{\xi+1} \lambda\}.$ We start with lower bounding  probability of $\Omega.$ For $A$ defined in \eqref{An} and every $j=1,\ldots,p $ we obtain
\begin{equation}
\label{deriv}
\nabla _j Q(\theta^0)= \left[\frac{1}{n}\sum_{i=1}^n X_{ij} X_i '\theta ^0 -\frac{n-1}{n}\mu_j \right] + \frac{n-1}{n} \left[\mu_j - A^j\right] - \frac{1}{n^2}\sum_{i=1}^n X_{ij} ,
\end{equation}
so if we find probabilistic bounds for each term on the right-hand side of \eqref{deriv}, then using union bounds we get the bound for $|\nabla Q(\theta^0)|_\infty. $
Consider the middle term in \eqref{deriv}. By \eqref{kernel} we apply Lemma \ref{termU} with  $h(z_1,z_2)=\frac{1}{2} \left[ \indyk (y_2 \leq y_1)x_{1j} + \indyk (y_1 \leq y_2)x_{2j} \right].$ Variables $X_{1j}$ and $X_{2j}$ are i.i.d., so  for arbitrary $L>0$  we have
\begin{equation}
\label{www}
\Ex \exp\left(|h(Z_1,Z_2| /L
\right) \leq \left[ \Ex \exp\left( |X_{1j}|/(2L)
\right)\right]^2.
\end{equation}
Using the fact that the variable $X_{1j}$ is subgaussian we bound
\eqref{www} by $4 \exp\left( \frac{\tau ^2}{4 L^2}
\right).$ Taking  $L =\tau $ and $u=\log (p/a)$ in Lemma \ref{termU} we obtain for some universal constant $K_1$
$$
P\left(A^j - \mu _j > K_1 \tau \sqrt{\frac{\log (p/a)}{n}} \right) \leq  \frac{a}{p}\:.
$$

The third term in \eqref{deriv} can be handled similarly using Lemma \ref{geer_lemma}. To obtain the bound for the first term in \eqref{deriv} we apply Lemma \ref{termE}.
Taking these results together and using union bounds we obtain that $P(\Omega) \geq 1- K_2 a$ provided that $\lambda$ satisfies \eqref{lambda}.

In further argumentation we consider only the event $\Omega.$ Besides,
we denote $\tth=\hth- \theta^0,$ where
$\hth$ is a minimizer of a convex function \eqref{Rlasso}, that is equivalent to
\begin{equation}
\label{min_equi}
\left\{
\begin{array}{ccc}
\nabla_j Q(\hth) = -\lambda \sign (\hth _j) &{\rm for }& \hth _j \neq 0,\\
|\nabla_j Q(\hth) | \leq \lambda  &{\rm for }& \hth _j = 0,
\end{array}
\right.
\end{equation}
where $j=1,\ldots,p.$

First, we prove that $\tth \in \cone.$ Here our argumentation is standard \citep{yezhang:10}.
 From \eqref{min_equi} and the fact that  $|\tth|_1 = |\tth _T|_1 +|  \tth _{T'}|_1$ we can calculate
\begin{align*}
0 &\leq \tth ' X'X \tth /n  = \tth '\left[  \nabla Q (\hth) - \nabla Q (\theta^0)\right]\\
&= \sum_{j \in T} \tth _j \nabla _j Q(\hth) +
\sum_{j \in T'} \hth _j \nabla _j Q(\hth) - \tth ' \nabla Q (\theta^0) \\
&\leq  \lambda \sum_{ j \in T} |\tth _j | - \lambda \sum_{j \in T'} |\hth _j |
+|\tth|_1 |\nabla Q(\theta^0)|_\infty \\
&= \left[\lambda +|\nabla Q(\theta^0)|_\infty \right] |\tth _T|_1 + \left[ |\nabla Q(\theta^0)|_\infty- \lambda \right] |\tth _{T'}|_1\, .
\end{align*}
Thus, using the fact that we consider the event $\Omega$ we get
\[
|\tth _{T'}|_1 \leq \frac{\lambda+|\nabla Q(\theta^0)|_\infty}{\lambda-|\nabla Q(\theta^0)|_\infty} |\tth _T|_1 \leq \xi |\tth _T|_1\, .
\]
Therefore, from the definition of $\bfac$ we have
$$
 |\hth - \theta^0|_q \leq\frac{p_0^{1/q} |X'X(\hth - \theta^0)/n|_\infty}{\bfac}
\leq p_0^{1/q} \frac{|\nabla Q(\hth)|_\infty +| \nabla Q(\theta^0)|_\infty}{\bfac} \:.
$$
Using \eqref{min_equi} and the fact, that we are on $\Omega ,$ we obtain
\eqref{claim_rand}.

The case $X_1 \sim N(0,H)$ is a consequence of the analogous part of Lemma~\ref{termE}.
\end{proof}

\begin{proof}[Proof of Corollary \ref{seperation}]
The proof is a simple consequence of the bound \eqref{high_estim} with $q=\infty$ obtained in Theorem \ref{high}. Indeed, for  arbitrary predictors $j \in T$ and $k \notin T$ we obtain
$$
|\hth _j | \geq  |\t0 _j| -|\hth _j  - \t0 _j| \geq \t0 _{min} - |\hth - \t0|_\infty >
\frac{4 \xi  \lambda}{ (\xi +1) F _ \infty (\xi)}
 \geq
|\hth _k  - \t0 _k| = |\hth _k  |.
$$

\end{proof}

\begin{proof}[Proof of Theorem \ref{th_high}]
The proof is a simple consequence of the uniform bound \eqref{bound} from Corollary \ref{cor_simply}. Indeed, for an  arbitrary $j \notin T$ we obtain
\[
|\hth _j | =|\hth _j  - \t0 _j| \leq K_4 \lambda < \delta\,,
\]
so $j \notin \hat{T}^{th}.$
Analogously, if  $j \in T,$ then
\[
|\hth _j | \geq  |\t0 _j| -|\hth _j  - \t0 _j| \geq 2 \delta  - K_4 \lambda > \delta\,.
\]
\end{proof}

\begin{proof}[Proof of Theorem \ref{adap_high}]
First, we define a function
\begin{equation}
\label{gamma_a_proof}
\Gamma^a(\theta) = Q(\theta) + \lambda_a \sum_{j=1}^p w_j |\theta_j| .
\end{equation}
Next, we fix $a \in (0,1)$ and set, for simplicity, $\xi_0 =3.$
 Consider the event
$\Omega = \{|\nabla Q(\theta^0)|_\infty\leq  \lambda/2 \}.$ We know from the proof of Theorem \ref{high} that $P(\Omega) \geq 1-K_3a$ and the inequality \eqref{bound} is satisfied. The proof of Theorem \ref{adap_high} consists of two steps. In the first one we show that with high probability there exists a minimizer of the function
$$
g(\theta_T) = \Gamma^a (\theta_T,0)
$$
that is close to $\theta^0_T$ in the $l_1$-norm. We denote this minimizer by $\hth ^a _T.$ In the second part of the proof we obtain that the vector $(\hth ^a _T, 0),$ that is $\hth ^a _T$ augmented by $(p-p_0)$ zeros, minimizes the function
\eqref{gamma_a_proof}.

First, consider vectors $v \in \mathbb{R} ^{p_0}$ having a fixed common $l_1$-norm and a sphere
\begin{equation}
\label{sphere}
\{ \theta_T= \theta^0_T +p_0 \lambda v\}.
\end{equation}
Suppose that $|v|_1$ is sufficiently large. We take arbitrary $\theta_T$ from the sphere \eqref{sphere} and calculate that
$$
Q(\theta_T,0)- Q(\theta^0) =\frac{1}{2} p_0^2\lambda^2 v' \frac{1}{n} X_T'X_T v+
p_0 \lambda v'[\nabla Q(\theta^0)]_T.
$$
Let $\hat{\kappa}$ be the minimal eigenvalue of the matrix $\frac{1}{n} X_T'X_T.$
Then we have $v' \frac{1}{n} X_T'X_T v \geq \hat{\kappa} |v|_1^2/p_0.$ Besides, on the event $\Omega$ we obtain
$$
 |v'[\nabla Q(\theta^0)]_T| \leq |v|_1  |[\nabla Q(\theta^0)]_T|_\infty \leq  \lambda |v|_1/2.
$$
Proceeding analogously to the proof of Lemma \ref{infty} we can show that  $\hat{\kappa} \geq \kappa /2$ with probability close to one. Therefore, we obtain
\begin{equation}
\label{lowQ}
Q(\theta_T,0)- Q(\theta^0) \geq \kappa p_0\lambda^2 |v|_1^2/4-
 p_0 \lambda^2  |v|_1/2.
\end{equation}
Next, we work with the penalty term and obtain
\begin{equation}
\label{penal}
\left|\lambda_a \sum_{j=1}^{p_0} w_j \left[|\theta^0_j +p_0 \lambda v_j|- |\theta^0_j|  \right]
\right| \leq
\lambda_a p_0 \lambda \sum_{j=1}^{p_0} w_j  |v_j|.
\end{equation}
Moreover, for $j \in T$ we have from Corollary \ref{cor_simply} that
$$
|\hth _j| \geq |\theta ^0_j| - |\hth _j - \theta^0_j|\geq \thmin - K_4 \lambda >\lambda_a,
$$
so $w_j  \leq K.$ Therefore, the right-hand side of\eqref{penal} is bounded by $K \lambda \lambda_a p_0 |v|_1.$ Combining it with \eqref{lowQ} we get
\begin{equation}
\label{bound_g}
g(\theta_T)- g(\theta^0_T) \geq p_0 \lambda^2 |v|_1 \left(\kappa  |v|_1/4- 1/2  - K_4  K  \right).
\end{equation}
The right-hand side of \eqref{bound_g} is positive, because the norm $|v|_1$ can be taken sufficiently large, $ K, K_4$ are constants and $\kappa$ is lower bounded by a constant. Therefore, the convex function $g(\theta_T)$ takes on a sphere \eqref{sphere} values larger than in the center $\theta^0_T.$ So, there exists a minimizer inside this sphere.

Next, we show that the random vector $(\hth^a_T,0)$ minimizes \eqref{gamma_a_proof} with high probability, so we have to prove that the event
\begin{equation}
\label{event1}
\{|\nabla_j Q(\hth ^a_T,0)| \leq w_j \lambda_a \quad {\rm for \; every } \; j \notin T
\}
\end{equation}
has probability close to one.  By Corollary \ref{cor_simply}
we have for $j \notin T$
$$
|\hth _j| = |\hth _j - \theta_j ^0| \leq K_4 \lambda.
$$
Therefore, the corresponding weight $w_j \geq \lambda_a^{-1}.$ We can also calculate that
$$
\nabla Q(\theta_T,0) = \frac{1}{n} X'X_T \theta_T - \left[\frac{n-1}{n}A +\frac{1}{n^2}
\sum_{i=1}^nX_i
\right],
$$
so we obtain the inequality
\begin{equation}
\label{ineq1}
\left| \left[ \nabla Q(\hth ^a_T,0) \right]_{T'}\right|_\infty \leq \left|
\frac{1}{n} X_{T'}'X_T (\hth ^a_T-\theta^0_T)
\right|_\infty +\left| \left[ \nabla Q(\theta^0) \right]_{T'} \right|_\infty.
\end{equation}
Consider the event $\Omega = \{|\nabla Q(\theta^0)|_\infty\leq  \lambda/2 \}$ that has probability close to one by the proof of Theorem \ref{high}.  Then the second term on the right-hand side of \eqref{ineq1} can be bounded by $\lambda/2.$ The former one can be decomposed as
\begin{eqnarray}
\label{ineq2}
\left|\frac{1}{n} X_{T'}'X_T (\hth ^a_T-\theta^0_T) \right|_\infty &\leq& \left|
\left(\frac{1}{n} X_{T'}'X_T - H_2' \right) (\hth ^a_T-\theta^0_T)
\right|_\infty + \left|
H_2' (\hth ^a_T-\theta^0_T)
\right|_\infty \nonumber \\
&\leq &
\left|
\frac{1}{n} X_{T'}'X_T - H_2' \right|_\infty \left|\hth ^a_T-\theta^0_T \right|_1 + \left|
H_2' \right|_\infty \left|\hth ^a_T-\theta^0_T \right|_1.
\end{eqnarray}
The expression $|H_2|_\infty$ is bounded by one, so from the first part of the proof we can bound, with high probability, the second term in \eqref{ineq2} by $K_6 p_0 \lambda.$
The $l_\infty$-norm in the former expression can be bounded, with probability close to one, by $K_7 \sqrt{\frac{\log (p/a)}{n}}$ as in the proof of Lemma \ref{infty}. Therefore, we have just proven that with probability close to one
$$
\left| \left[ \nabla Q(\hth ^a_T,0) \right]_{T'}\right|_\infty \leq K_8 p_0 \lambda.
$$
Combining it with the fact that $w_j \geq \lambda_a^{-1}$ we obtain that the event \eqref{event1}  has probability close to one, because  from assumptions of the theorem $ p_0 \lambda \leq K_5$
for $K_5$ small enough.
\end{proof}

\section{Results from Section \ref{low_sec}}
\label{low_sec_proofs}

We start with the proof of Lemma~\ref{pomoc}. Then we state Lemma \ref{asym} that is also
needed in proofs of Theorems \ref{nesscond},  \ref{oracle_adap} and \ref{asym_thresh}. We will use the following notation
\begin{equation}
\label{ggamma}
\Gamma (\theta)=Q(\theta) + \lambda |\theta|_1.
\end{equation}

\begin{proof}[Proof of Lemma \ref{pomoc}]
Let $a:=a_n  $ be a fixed sequence such that $a \rightarrow 0.$ We can calculate that for every $\theta$
\begin{equation}
\label{num1}
Q(\th*+a \theta) - Q(\th*) = -\frac{a}{n^2} \theta '\left( \sum_{i=1}^n R_i X_i \right) + a \theta ' \left(\frac{X ' X}{n} \right) \th* + \frac{a^2}{2} \theta '
\left(\frac{X ' X}{n} \right)\theta .
\end{equation}
Using \eqref{sum_rank} we obtain that the right-hand side of \eqref{num1} is
$$
\frac{a^2}{2} \theta '
\left(\frac{X ' X}{n} \right)\theta -a \theta ' \left[
\frac{n-1}{n} A + \bar{X}/n - \left(\frac{X ' X}{n} \right) \th*
\right],
$$
where $A$ is defined in \eqref{An}.
Therefore, we have
\begin{eqnarray*}
&\,&\frac{1}{a^2} \left[ Q (\th* + a \theta)-  Q(\th*) \right]=
\frac{1}{2} \theta '
\left(\frac{X ' X}{n} \right)\theta \\&-& \frac{\theta '}{\sqrt{n} a}
 \left[
\frac{n-1}{n} \sqrt{n}A + \bar{X}/\sqrt{n} -  \left(\frac{X ' X}{\sqrt{n}} \right) \th*
\right].
\end{eqnarray*}
Using LLN, Lemma \ref{asym} (given below) and Slutsky's theorem we get that
\begin{equation}
\label{conv11}
\frac{1}{a^2} \left[ Q (\th* + a \theta)-  Q(\th*) \right]
\rightarrow_{f-d}\frac{1}{2} \theta ^T H \theta + b \theta ' W ,
\end{equation}
where $\rightarrow_{f-d}$ is the finite-dimensional convergence in distribution and $W \sim
N(0, D).$
Next, we consider the penalty term and notice that
\begin{equation}
\label{regul}
\frac{\lambda}{a^2} \sum_{j=1}^p \left( | \th*_j + a \theta_j |
- \left| \th*_j \right|  \right) \rightarrow c \sum\limits_{j \in T} \theta_j \sign(\th*_j) + c \sum\limits_{j \notin T} |\theta_j| .
\end{equation}
Thus, from (\ref{conv11}) and (\ref{regul}) we have the convergence of convex functions
\begin{equation}
\label{conv2}
\frac{1}{a^2} \left[ \Gamma(\th* + a \theta)-  \Gamma(\th*) \right]
 \rightarrow_{f-d}  V(\theta) ,
\end{equation}
where the function $\Gamma (\theta)$ is defined in \eqref{ggamma}.
The function on the left-hand side of (\ref{conv2}) is minimized by $\frac{1}{a} \left(\hth - \th*\right)$ and the convex function on the right-hand side of (\ref{conv2}) has a unique minimizer. Thus $\frac{1}{a} \left(\hth - \th*\right) \rightarrow_d \arg \min\limits_\theta V(\theta)$
\citep[see][]{geyer:96}.
\end{proof}

\begin{lemma}
\label{asym}
Suppose that Assumption \ref{as1} is satisfied and $\Ex |X_1|^4 < \infty$. Then
$$
\sqrt{n} \left[A - \mu \right] - \sqrt{n} \left[\frac{X'X}{n} \th* - \mu \right]
\rightarrow_d N(0,D),
$$
where $D$ is stated precisely in the proof below.
\end{lemma}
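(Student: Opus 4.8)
The plan is to rewrite both centered quantities as sums of independent and identically distributed random vectors and then invoke the multivariate central limit theorem. The second term is the easy one: since $\th*=H^{-1}\mu$ we have $\Ex\, X_1X_1'\th* = H\th* = \mu$, so
$$
\sqrt{n}\left[\frac{X'X}{n}\th* - \mu\right] = \frac{1}{\sqrt{n}}\sum_{i=1}^n h(Z_i),\qquad h(z)=xx'\th*-\mu,\ \ \Ex h(Z_1)=0 .
$$
The real work concerns the first term, and here I would use the theory of $U$-statistics \citep{hoeff:48, serf:80}. The statistic $A$ from \eqref{An} is a $U$-statistic of degree two whose symmetric kernel is precisely $f$ from \eqref{kernel}, and $\Ex A=\mu$.

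Next I would apply Hoeffding's projection to obtain the asymptotically linear representation
$$
\sqrt{n}(A-\mu)=\frac{2}{\sqrt{n}}\sum_{i=1}^n g(Z_i)+o_P(1),\qquad g(z)=\Ex[f(z,Z_2)]-\mu .
$$
A direct conditioning argument, using the independence of $Y_2$ from $Z_1$, gives $g(z)=\tfrac12\big[F(y)x+m(y)\big]-\mu$ with $m(y)=\Ex\big[X_1\indyk(Y_1\ge y)\big]$, and one checks $\Ex g(Z_1)=0$. The degenerate second-order remainder in Hoeffding's decomposition is $O_P(1/n)$ whenever the kernel is square-integrable, which holds because $F$ is bounded and $\Ex|X_1|^2<\infty$; hence after multiplication by $\sqrt{n}$ it is $o_P(1)$. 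Because the two representations are sums over the same sample $Z_1,\dots,Z_n$, their difference is automatically a single i.i.d. sum, so joint convergence requires no extra argument: the quantity in the statement equals
$$
\frac{1}{\sqrt{n}}\sum_{i=1}^n \psi(Z_i)+o_P(1),\qquad \psi(z)=2g(z)-h(z)=F(y)x+m(y)-xx'\th*-\mu .
$$
The Lindeberg--L\'evy CLT then yields convergence to $N(0,D)$ with $D=\Ex[\psi(Z_1)\psi(Z_1)']$, since one verifies $\Ex\psi(Z_1)=0$; this formula is what defines the matrix $D$ referenced in the statement.

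The step demanding the most care, and the point at which the hypothesis $\Ex|X_1|^4<\infty$ becomes indispensable, is checking that $\psi(Z_1)$ is square-integrable so that $D$ is finite and the CLT applies. The projection part $2g$ is controlled by $\Ex|X_1|^2$ together with $F\in[0,1]$, but the contribution $h(Z_1)=X_1X_1'\th*-\mu$ is quadratic in the predictors, so $\Ex|h(Z_1)|^2$ involves fourth-order moments of $X_1$; this is exactly where the fourth-moment assumption enters and explains why it is needed here (and, through this lemma, in Theorems \ref{nesscond}, \ref{oracle_adap} and \ref{asym_thresh}). Apart from this moment bookkeeping, the only genuinely nontrivial ingredient is the $U$-statistic projection, for which I would rely on the standard $L^2$ estimate of the degenerate component of a square-integrable kernel to guarantee that the remainder vanishes at the $\sqrt{n}$ scale.
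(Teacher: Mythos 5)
Your proof is correct and produces the same limit as the paper's, but it reaches it by a different implementation of the classical $U$-statistic machinery. The paper treats $B=\frac{1}{n}\sum_{i=1}^n X_iX_i'\theta^*$ as a degree-one $U$-statistic (with $\Ex B=H\theta^*=\mu$ by \eqref{tstar}), invokes Hoeffding's joint central limit theorem for a vector of $U$-statistics \cite[Theorem 7.1]{hoeff:48} to get $\sqrt{n}\,\bigl((A-\mu)',(B-\mu)'\bigr)'\rightarrow_d N(0,\Sigma)$ in $\mathbb{R}^{2p}$, and then applies the fixed linear map $(u,v)\mapsto u-v$ encoded by the matrix $M$, so that $D=M\Sigma M'$ is left implicit, expressed through the block covariances of the projection $\tilde f(z)=\Ex[f(z,Z_2)]$ and of $X_{1j}X_1'\theta^*$. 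You instead carry out the Hoeffding projection of $A$ by hand, bound the degenerate remainder in $L^2$ (correctly: its variance is $O(n^{-2})$ for a square-integrable kernel, hence $o_P(1)$ after scaling by $\sqrt{n}$), and collapse the whole statement to a single i.i.d.\ sum with influence function $\psi(z)=F(y)x+m(y)-xx'\theta^*-\mu$, finishing with the Lindeberg--L\'evy CLT. The two answers agree: writing $g=\tilde f-\mu$ and $h(z)=xx'\theta^*-\mu$, one checks $M\Sigma M'=\mathrm{Cov}\bigl(2\tilde f(Z_1)-h(Z_1)\bigr)=\Ex[\psi(Z_1)\psi(Z_1)']$, which is exactly your $D$. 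Your route buys an explicit closed-form $D$ (convenient, e.g., for plug-in variance estimation) and is self-contained modulo the standard $L^2$ projection estimate; the paper's route delegates both the projection and the joint normality of $U$-statistics of different degrees to Hoeffding's theorem in a single stroke, at the price of an implicit $D$. You also correctly localize where $\Ex|X_1|^4<\infty$ is indispensable: the projection part needs only $F\in[0,1]$ and second moments, while square-integrability of $h(Z_1)$ (the paper's $\Sigma_3$ block) genuinely requires fourth moments of the predictors.
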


\begin{proof}
Consider two $U$-statistics. The first one is $A$ that is defined in \eqref{An}.
 The second $U$-statistic is
$$
B=\frac{1}{n}\sum_{i=1}^n X_i X_i'\th*
$$
and is of the order one. Besides, we have $\Ex B = H \th* =\mu$ by \eqref{tstar}.
Using \cite[Theorem 7.1]{hoeff:48} we obtain convergence in distribution  in $\mathbb{R}^{2p}$
$$
\sqrt{n} \left[
\begin{array}{c}
A-\mu\\
B-\mu
\end{array}
\right]
\rightarrow_d N(0,\Sigma)
$$
for the matrix
$$\Sigma= \left(
\begin{array}{c c}
\overbrace{\Sigma_1}^{p \times p} & \overbrace{\Sigma_2}^{p \times p} \\
\Sigma_2  & \Sigma_3
\end{array}
\right), $$
where for $j,k=1,\ldots ,p$ and the function $f$ in \eqref{kernel} we have
$$\left(\Sigma_1\right)_{jk} = 4 Cov(\tilde{f}_j(Z_1),\tilde{f}_k(Z_1)),$$
where $\tilde{f}(z_1) = \Ex \left[  f(Z_1,Z_2)|Z_1=z_1\right]$ and $\tilde{f}_j(z_1)$ is its
$j$-th coordinate. The entries of the matrix $\Sigma _3$ are
$$\left(\Sigma_3\right)_{jk} = Cov(X_{1j}X_1'\th*, X_{1k}X_1'\th*)$$
and
$$\left(\Sigma_2\right)_{jk} = 2Cov(\tilde{f}_j(Z_1), X_{1k}X_1'\th*).$$
Next, define $(p\times 2p)$-dimensional matrix $M$ in the following way: for $j=1,\ldots,
p$ put  $M_{j,j}=1$ and $M_{j,p+j}=-1,$ and zeros elsewhere. Then
$$
\sqrt{n} \left[A - \mu \right] - \sqrt{n} \left[\frac{X'X}{n} \th* - \mu \right] =
M \sqrt{n} \left[
\begin{array}{c}
A-\mu\\
B-\mu
\end{array}
\right]\rightarrow_d N(0,M \Sigma M') .
$$
\end{proof}

Now we prove  main results of Section \ref{low_sec}.

\begin{proof}[Proof of Theorem \ref{nesscond}]
From Lemma \ref{pomoc} for $a=\lambda$ we obtain
\begin{equation}
\label{conv}
\lambda^{-1} \left(\hth-\th*\right) \rightarrow_d \arg \min\limits_\theta V_2(\theta),
\end{equation}
where
$$V_2(\theta) = \frac{1}{2} \theta ' H \theta  + \sum\limits_{j \in T} \theta_j \sign(\th*_j) + \sum\limits_{j \notin T} |\theta_j| .   $$
The proof of the claim (a) is similar to the proof of \cite[Theorem 1, scenario (3)]{zou:06} and uses properties of the function $V_2(\theta).$ Therefore, we consider only the case (b).
Let $\eta = \arg \min_\theta V_2(\theta).$ We know that $\eta$ is nonrandom and the function $V_2(\theta) $ is strictly convex. Therefore, using \eqref{suff} we have
\begin{equation}
\label{meta}
\eta = \left( - H_1^{-1} \sign (\th* _T), 0\right).
\end{equation}
For fixed $ j \in T$ we have
$$\lambda^{-1} \left(\hth _j-\th* _j\right) \rightarrow_P \eta_j,$$
so
$
\Pr( j \notin \hat{T}) = \Pr (\hth _j=0) \rightarrow 0.
$
Thus,
$\Pr (T \subset \hat{T}) \rightarrow 1.$
Next, we show that
$\Pr (\hat{T} \subset T)
 \geq 1- \sum\limits_{j \notin T}
\Pr(j \in \hat{T}) \rightarrow 1.$
Consider  fixed $ j \notin T $ and an event $\{j \in \hat{T}  \}$.
Recall that $\hth $ minimizes the convex function $\Gamma ,$ so $0 \in \partial \Gamma
(\hth )$, where $\partial \Gamma$ denotes a subgradient of the convex function $\Gamma .$ The function $Q(\theta)$ is differentiable, so
$
\partial \Gamma  (\hth )= \nabla Q(\hth ) + \lambda
\partial |\hth  | .
$
Therefore, we have
\begin{equation}
\label{suff_form2}
0=  \nabla^j Q(\hth ) +
\lambda
\: \sign (\hth_j ) ,
\end{equation}
where $\nabla ^j Q(\hth)$ is the $j$-th partial derivative $Q(\theta)$ at $\hth$. From (\ref{suff_form2}) we have
\begin{equation}
\label{suff_form3}
\lambda^{-1 }   \left|\nabla^j Q(\hth ) \right| =1.
\end{equation}
We can calculate that
$$
\nabla Q(\hth) = -\frac{n-1}{n}A- \bar{X}/n +  \frac{X ' X}{n} \left(\hth -\th*\right)
+\frac{X ' X}{n} \th*,
$$
that gives us
\begin{eqnarray}
\label{suff_form4}
\lambda^{-1}  \,  \nabla Q(\hth )  &=&
-\frac{1}{\sqrt{n}\lambda }  \, \left[ \frac{n-1}{n} \sqrt{n} A+ \bar{X}/\sqrt{n}-
\sqrt{n} \frac{X ' X}{n} \th*\right] \nonumber \\
&+&
\lambda^{-1 }  \frac{X ' X}{n} (\hth- \th*).
\end{eqnarray}
Therefore, using LLN, Lemma \ref{asym}, \eqref{conv} and Slutsky's theorem  the left-hand side of (\ref{suff_form3}) tends in probability to $\left| \left(H \eta\right)_j
\right|.$
Recall that we consider the event $\{j \in \hat{T}  \}$ for $j \notin T,$ so we have the inequality
$$
\limsup _{n \rightarrow \infty} \Pr \left( \hth_j \neq 0 \right)\leq  \indyk \left(\left| \left(H \eta\right)_j
\right| =1 \right),
$$
since $\eta $ is not random. However, from \eqref{suff}  and \eqref{meta} we obtain
 $$ \left| \left(H \eta\right)_j \right| = \left| \left( H_2 ^T H_1^{-1} \sign
\left( \th* _\mA \right) \right)_j \right| <1.$$ Therefore, probability
$\Pr \left( \hth_j \neq 0 \right)$ tends to zero that finishes the proof of consistency in model
selection.
\end{proof}

\begin{proof}[Proof of Theorem \ref{oracle_adap}]
We define a function $$
\Gamma^a (\theta)=Q(\theta) + \lambda \sum_{j=1}^p \frac{|\theta_j|}{|\hbeta_j|} \:.
$$
Let us start with the claim (b). Repeating the same arguments as in the proof of Lemma \ref{pomoc} (for  $a = \frac{1}{\sqrt{n}} $) we obtain for every $\theta$
$$
n Q\left(\th* + \frac{\theta}{\sqrt{n}}\right)- n Q(\th*)
\rightarrow_{f-d} \frac{1}{2} \theta ' H \theta+\theta ' W ,
$$
which using convexity implies weak convergence of the stochastic process
\begin{equation}
\label{oracff}
\left\{n Q\left(\th* + \frac{\theta}{\sqrt{n}}\right)- n Q(\th*) : \theta \in K\right\}
\rightarrow_d \left\{\frac{1}{2} \theta ' H \theta+\theta ' W : \theta \in K \right\}
\end{equation}
for every compact set $K$ in $\mathbb{R} ^p $  \citep[see][]{ arcones:98}.
Now consider the penalty term and use similar arguments to that in the proof of
\cite[Theorem 2]{zou:06} to obtain that if $\th*_j \neq 0,$ then
$$
n\lambda \left( \frac{\left|\th*_j + \frac{\theta_j}{\sqrt{n}}\right| -
|\th*_j| }{|\hbeta_j|} \right) = \sqrt{n} \lambda  \sqrt{n} \left( \frac{\left|\th*_j + \frac{\theta_j}{\sqrt{n}}\right| -
|\th*_j| }{|\hbeta_j|} \right) \rightarrow_P c \frac{\theta_j}{\th* _j},
$$
because $\sqrt{n}\lambda \rightarrow c, \hbeta _j \rightarrow_P \th* _j $ and $\sqrt{n} \left[\left|\th*_j + \frac{\theta_j}{\sqrt{n}}\right| -
|\th*_j| \right] \rightarrow \sign(\th*_j) \theta_j.$
However, if $\th*_j = 0,$ then
$$
n\lambda \left( \frac{\left|\th*_j + \frac{\theta_j}{\sqrt{n}}\right| -
|\th*_j| }{|\hbeta_j|} \right) = \sqrt{n}\lambda \frac{|\theta_j|}{ |\hbeta_j|}  \rightarrow_P \left\{
\begin{array}{cc}
0, & \theta_j= 0\\
\infty, & \theta_j \neq  0,
\end{array}
\right.
$$
since $\sqrt{n} \hbeta _j=O_P(1)$ and $n \lambda \rightarrow \infty$. Therefore, we
obtain that for every $\theta$
$$
n \lambda \sum_{j=1}^p \frac{\left|\th*_j + \frac{\theta_j}{\sqrt{n}}\right| -
|\th*_j| }{|\hbeta_j|}    \rightarrow_P \left\{
\begin{array}{cc}
c \sum_{j \in T} \frac{\theta_j}{\th* _j}, & \theta = \left(\theta_1, \ldots, \theta_{p_0} , 0,\ldots, 0\right) \\
\infty, &  \rm{otherwise.}
\end{array}
\right.
$$
Since we have infinity in the last limit we cannot use arguments based on uniform convergence on compacts as we have done in the proof of Lemma \ref{pomoc}. Here we should follow epi-convergence results \citep{geyer:94, pflug:95, zou:06} that combined with
convergence (\ref{oracff}) give us
\begin{equation}
\label{epi-c}
nQ \left(\th* +\frac{\theta}{\sqrt{n}}\right) - n Q (\th*)+ n \lambda \sum_{j=1}^p \frac{\left|\th*_j + \frac{\theta_j}{\sqrt{n}}\right| -
|\th*_j| }{|\hbeta_j|} \rightarrow _{e-d} V_3(\theta),
\end{equation}
where
$$
V_3(\theta)=\left\{
\begin{array}{cc}
\frac{1}{2} \theta_T ' H_{1} \theta_T + \theta_T ' (W_T + \bar{c}),&
\left(\theta_1, \ldots, \theta_{p_0} , 0,\ldots, 0\right)\\
\infty, & \rm{otherwise}
\end{array}
\right.
$$
and $W_T \sim N(0,D_1).$ Convergence $\rightarrow_{e-d}$ is epi-convergence in
distribution \citep{geyer:94, pflug:95}.
Furthermore, the function $V_3$ has the unique minimizer $\left[ -H_{1}^{-1} (W_T + \bar{c}), 0 \right] ' ,$  so
epi-convergence in (\ref{epi-c}) implies convergence of minimizers \citep[see][]{geyer:94}
\begin{equation}
\label{end_form}
\sqrt{n}\left(\ahth_{T} - \th*_T \right) \rightarrow_d - H^{-1}_{1} (W_T + \bar{c}) \quad \quad
{\rm and} \quad \quad \sqrt{n}\left(\ahth_{T '} - \th*_{T '} \right) \rightarrow_d 0,
\end{equation}
where $T'=\{p_0+1, \ldots, p\}$ is the complement of $T.$
It finishes the proof of the second claim.

Next, we go to the claim (a). We prove only that
$$\lim\limits_{n \rightarrow \infty} \Pr \left( \hat{T}^a = T\right)=1,$$
because the equality of the signs of relevant predictors follows simply from Theorem \ref{mult_cor} and  estimation consistency stated in the claim (b) of this theorem.
The reasoning is similar to the proof of Theorem
\ref{nesscond}(b). Let us start with fixed $ j \in T,$ then
$
\Pr( j \notin \hat{T}^a) = \Pr (\hth^a _j=0) \rightarrow 0
$
by the second claim of the theorem.
Next recall that $\hth ^a$ minimizes the convex function $\Gamma ^a(\theta),$ so $0 \in \partial \Gamma ^a (\hth ^a)$ and
$
 \partial \Gamma ^a (\hth ^a)=\nabla Q(\hth ^a) + \partial \left(\lambda
 \sum_{j=1}^p \frac{ |\hth ^a _j|}{|\hbeta_j |} \right).
$
If we consider fixed $ j \notin T $ and an event $\{j \in \hat{T} ^a \},$ then
we have
\begin{equation}
\label{oracle_form2}
0=  \nabla^j Q(\hth ^a) +
\lambda
\frac{\sign (\hth_j ^a)}{|\hbeta_j |}.
\end{equation}
 From (\ref{oracle_form2}) we have
\begin{equation}
\label{oracle_form3}
\sqrt{n} \, \left|\nabla^j Q(\hth ^a) \right| =\frac{n\lambda}{ \sqrt{n}|\hbeta_j |}.
\end{equation}
The right-hand side of (\ref{oracle_form3}) tends to infinity in probability, because $n\lambda
\rightarrow \infty$ and its denominator is bounded in probability. If we can show that
the left-hand side of (\ref{oracle_form3}) is bounded in probability, then the probability of the considered event $\{j \in \hat{T} ^a \}$
tends to zero and it finishes the proof. Notice that
\begin{eqnarray}
\label{oracle_form4}
\sqrt{n}  \,  \nabla Q(\hth ^a)  &=&
- \left[ \frac{n-1}{n} \sqrt{n} A+ \bar{X}/\sqrt{n}-
\sqrt{n} \frac{X ' X}{n} \th*\right]  \\
\label{oracle_form5}
&+&
 \frac{X ' X}{n} \sqrt{n} (\hth^a- \th*).
\end{eqnarray}
The term on the right-hand side of \eqref{oracle_form4} is bounded in probability by Lemma \ref{asym}. Using LLN, \eqref{end_form} and Slutsky's theorem we can also bound \eqref{oracle_form5} in probability.

\end{proof}

\begin{proof}[Proof of Theorem \ref{asym_thresh}]
Using Lemma \ref{pomoc} with $a=1/ \sqrt{n}$ we obtain
\begin{equation}
\label{thr_form}
\sqrt{n}\left(\hth - \th* \right) \rightarrow_d - H^{-1}W ,
\end{equation}
because $\sqrt{n} \lambda \rightarrow 0.$
Fix $j \notin T,$ so $\th* _j=0.$ Then we have from \eqref{thr_form} and $\sqrt{n} \delta \rightarrow \infty$ that $\delta ^{-1}  \hth _j
\rightarrow _P 0,$ so $P(\hth ^{th}_j=0) = P( |\hth _j| < \delta) \rightarrow 1.$

Similarly, take $j \in T$ such that $\th* _j >0.$ From \eqref{thr_form} we know that $\hth _j$ is a consistent estimator of $\th* _j.$ Therefore, $P(\hth ^{th}_j >0) = P( \hth _j > \delta)$
tends to one, because $\delta \rightarrow 0.$ Argumentation for $j \in T$ such that $\th* _j <0$
is analogous. Using Theorem \ref{mult_cor} we finish the proof of the claim (a) of the theorem.

From \eqref{thr_form} we have $\sqrt{n}\left(\hth _T - \th* _T \right) \rightarrow_d - (H^{-1}W)_T.$ Moreover, we have just proved that $P(\hth ^{th}_T  = \hth _T ) \rightarrow 1.$
It finishes the proof of the claim (b).

\end{proof}

\section{Supplementary simulation results}
\label{add_exp}

In Figures \ref{FDR} and \ref{Power} we show plots of FDR and Power for estimators considered in Section \ref{numerical} of the main paper. In Figure \ref{boxplots} we have boxplots for gene expressions from Section \ref{real} of the main paper.

\begin{figure*}
\begin{center}
\includegraphics{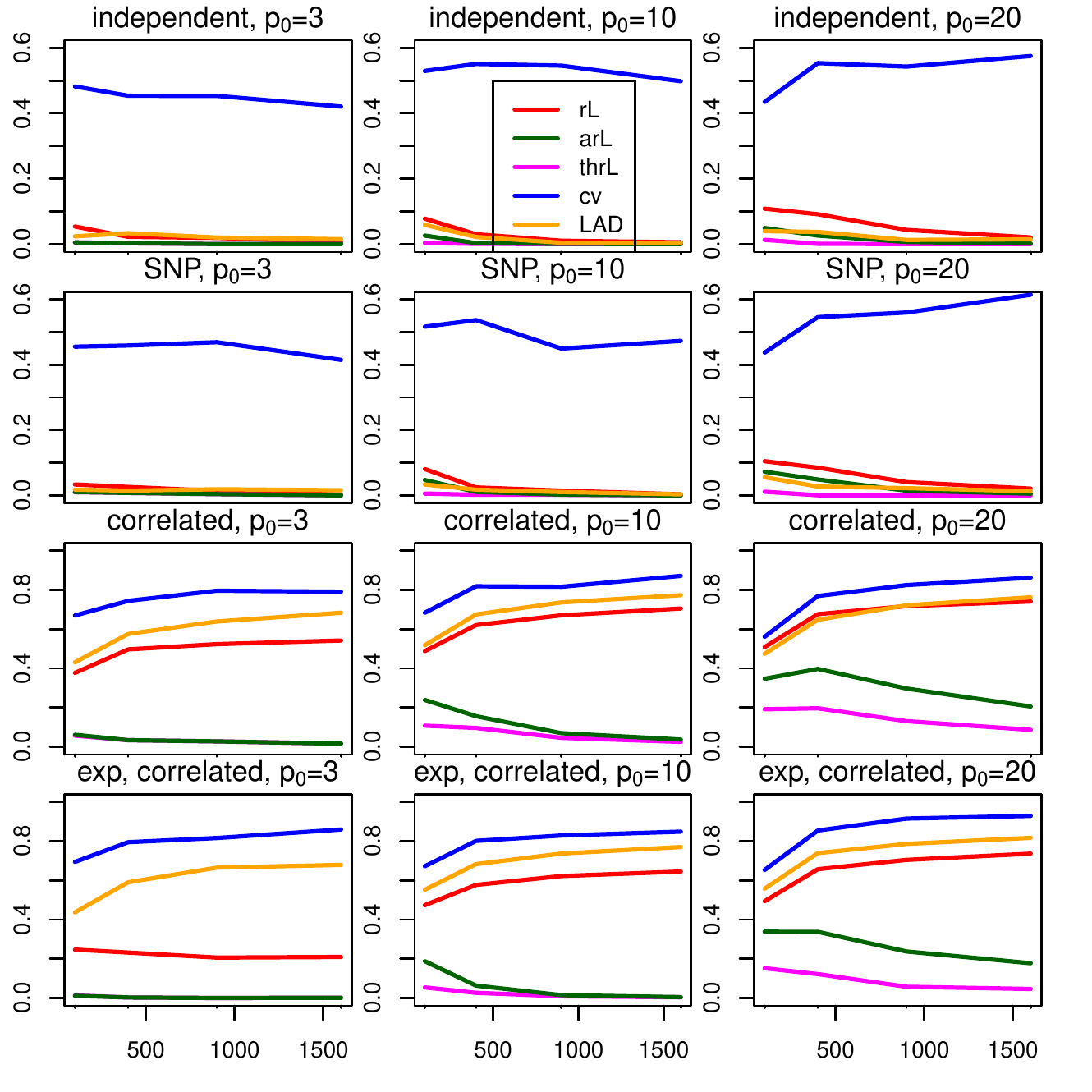}
\caption{\large{\bf Plots of FDR  for different simulation scenarios}}
\label{FDR}
\end{center}
\end{figure*}

\begin{figure*}
\begin{center}
\includegraphics{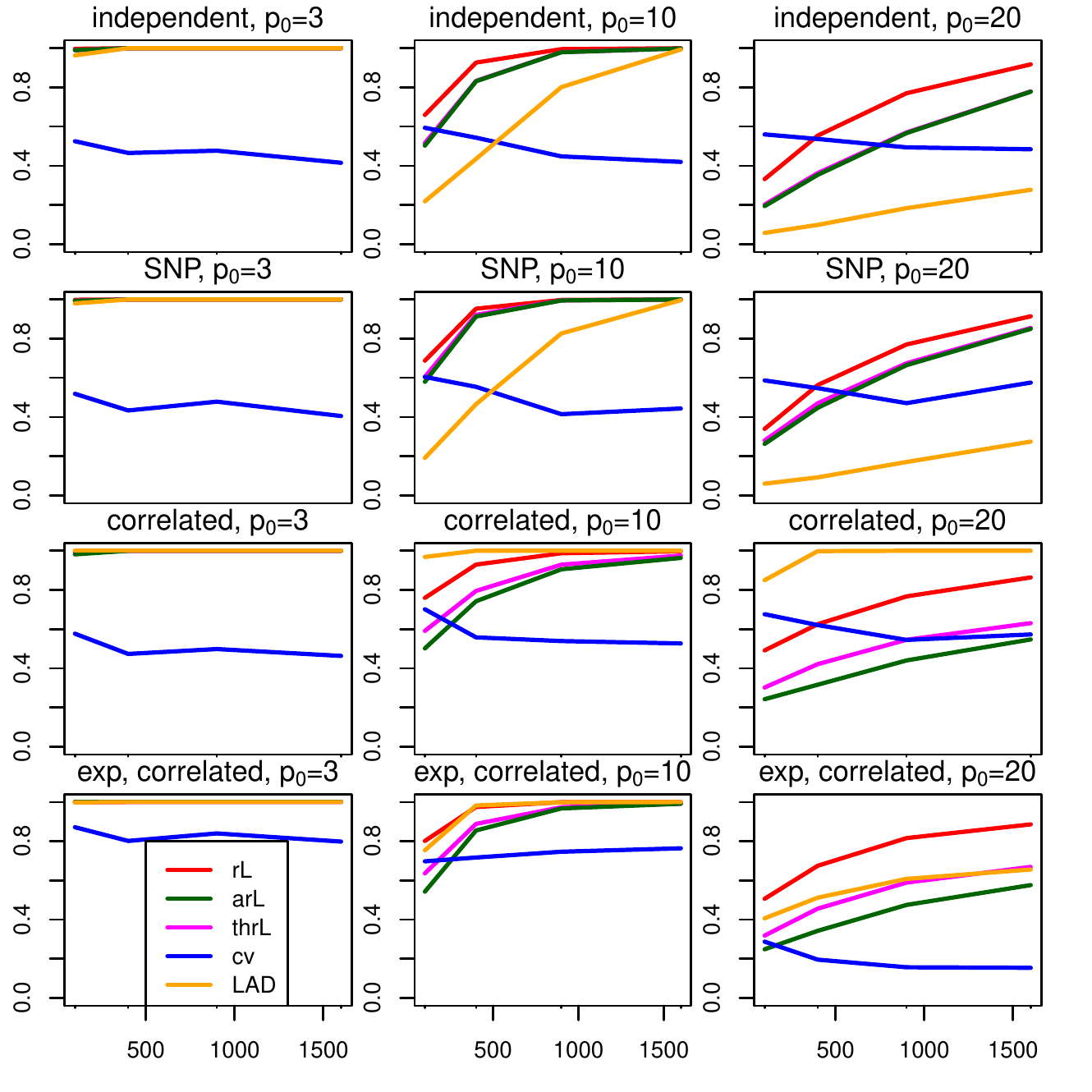}
\caption{\large{\bf Plots of Power  for different simulation scenarios}}
\label{Power}
\end{center}
\end{figure*}

\begin{figure*}
\begin{center}
\includegraphics{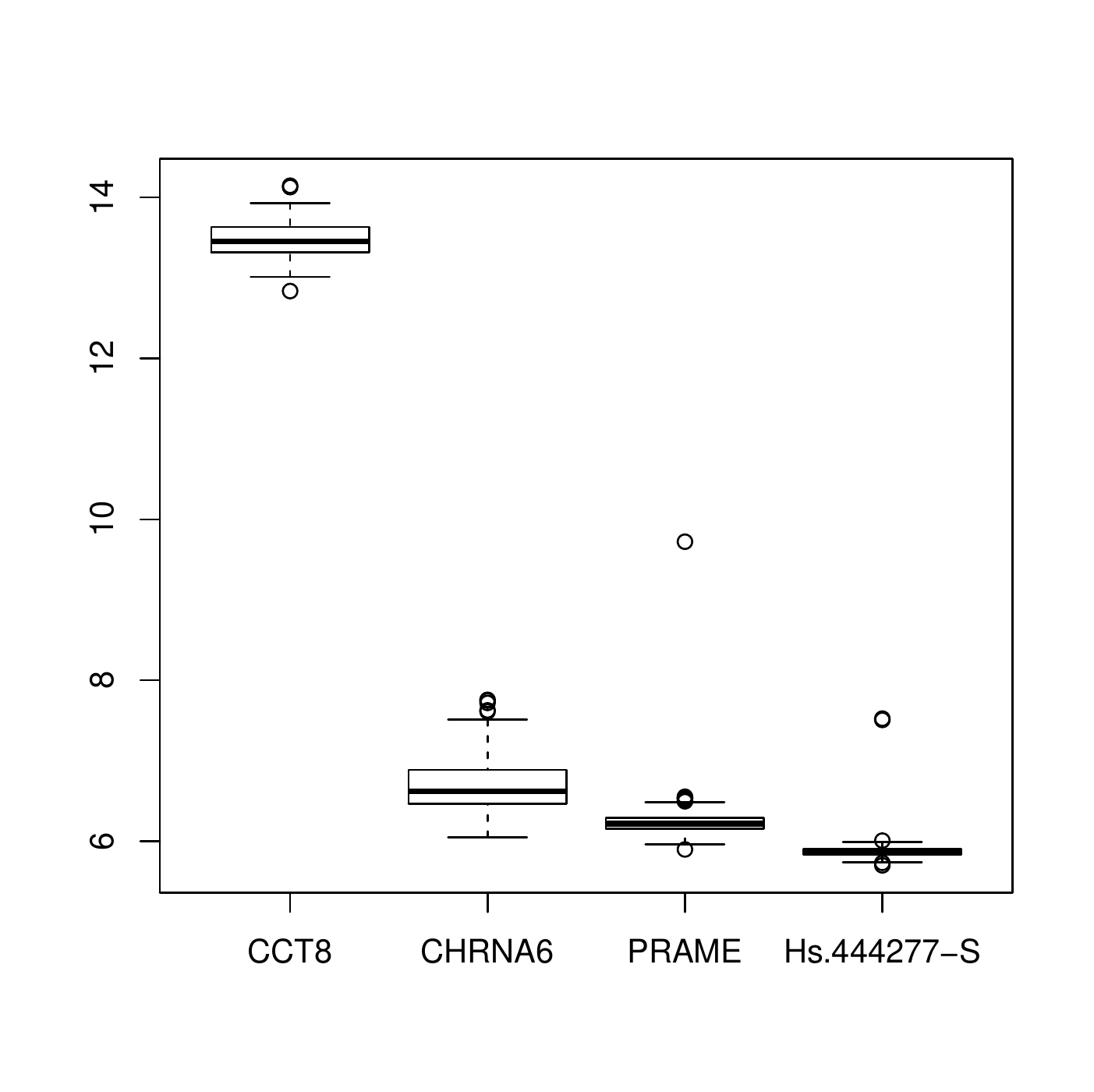}
\caption{\large{\bf Boxplots of gene expressions}}
\label{boxplots}
\end{center}
\end{figure*}

\bibliographystyle{abbrv}


\bibliography{arxiv_manuscript}

\end{document}